\documentclass[onecolumn,american]{IEEEtran}
\usepackage[T1]{fontenc}
\usepackage[utf8]{inputenc}
\usepackage{amsmath}
\usepackage{amsthm}
\usepackage{amssymb}
\usepackage{graphicx}

\makeatletter
\theoremstyle{plain}
\newtheorem{thm}{\protect\theoremname}
\theoremstyle{remark}
\newtheorem{rem}[thm]{\protect\remarkname}
\theoremstyle{definition}
\newtheorem{defn}[thm]{\protect\definitionname}
\theoremstyle{plain}
\newtheorem{prop}[thm]{\protect\propositionname}

\makeatother

\usepackage{babel}
\providecommand{\definitionname}{Definition}
\providecommand{\propositionname}{Proposition}
\providecommand{\remarkname}{Remark}
\providecommand{\theoremname}{Theorem}

\begin{document}
\title{A Non-Probabilistic Game-Theoretic Information Theory Which Subsumes Probabilistic Channel Coding}
\author{\IEEEauthorblockN{Cheuk Ting Li}\\
\IEEEauthorblockA{Department of Information Engineering, The Chinese University of Hong Kong, Hong Kong, China \\
Email: ctli@ie.cuhk.edu.hk}}
\maketitle
\begin{abstract}
Probabilistic settings (e.g., vanishing-error channel coding) and non-probabilistic settings (e.g., zero-error channel coding and adversarial channels) were considered two related but different branches of information theory which do not subsume each other. We propose a unifying non-probabilistic information theory based on game theory and dynamic hedging which subsumes the conventional probabilistic channel coding theorem (vanishing error, with or without feedback) and lossless source coding theorem, as well as zero-error and adversarial settings. Coding is modelled as a deterministic game between an encoder and an adversary, where the encoder may purchase insurance with a payoff that depends on the channel outputs. Our framework is based on a generalization of the works by Ville, Dawid, Shafer and Vovk on the game-theoretic formulation of probabilistic concepts, by relaxing the convex pricing cone to a nonconvex downward closed cone, which is precisely the relaxation needed to model information transmission. Pricing downward closed cone is a versatile tool for non-probabilistic coding results that can subsume their probabilistic counterparts, and provides a canonical form for probabilistic channels, adversarial channels and arbitrarily varying channels.
\end{abstract}

\begin{IEEEkeywords}
Nonstochastic information theory, game theory, dynamic hedging, zero-error channel coding, arbitrarily varying channels
\end{IEEEkeywords}

\section{Introduction}

Information theory is mostly built upon probability theory. The central settings in information theory, namely source coding and channel coding, are usually defined using probability \cite{shannon1948mathematical}. While there are non-probabilistic models of channel coding (e.g., zero-error channel coding \cite{shannon1956zero,nair2012nonstochastic,nair2013nonstochastic} and Hamming adversarial channels \cite{hamming1950error}), they are vastly different from their probabilistic counterparts, usually have lower capacities, and cannot subsume the familiar probabilistic (vanishing-error) channel coding theorem.\footnote{The (probabilistic) binary symmetric channel has a capacity strictly greater than its zero-error capacity (which is $0$) and the Hamming (bit-flip) adversarial channel capacity \cite{mceliece1977new}. None of these two non-probabilistic models can recover the probabilistic capacity.}  There are works on combining probabilistic and adversarial settings (e.g., arbitrarily varying channels \cite{blackwell1959capacity,ahlswede1978elimination,lapidoth2002reliable,csiszar2002capacity}; see Section \ref{subsec:prev_avc}), though the combined setting still requires probability. Hence, for communication scenarios where probability is a questionable assumption (e.g., in a network where transmission errors are often caused by non-random actions of nodes), we have to resort to the aforementioned non-probabilistic models with a significant penalty on the capacity. It is natural to raise the questions: 

\smallskip{}

\emph{Is probability essential to the definitions of channel coding and source coding? Or can we find new non-probabilistic versions of these coding settings that generalize the probabilistic versions, and recover the rates given by the channel and source coding theorems?}

\smallskip{}

In this paper, we show that probability is \emph{not} necessary to the definitions of the channel coding and source coding settings, by giving alternative, more general definitions of these settings based on game theory instead. For channel coding, instead of generating the channel output $y_{i}$ randomly conditional on the input $x_{i}$ for time $i=1,\ldots,n$, we assume that $y_{i}$ is determined by an adversary. If the adversary is truly arbitrary, no communication is possible. Hence, we provide an insurance to the encoder. Suppose the encoder loses $\$1$ if the decoder fails to decode correctly. We allow the encoder to purchase insurance that pays upon certain behaviors of the channel. For a simple example where $x_{i},y_{i}\in\mathcal{X}$ (where $\mathcal{X}$ is a finite alphabet), the encoder can purchase an insurance (at a certain price) that pays if $y_{i}\neq x_{i}$, similar to a mail insurance that pays when a mail is lost. The encoder chooses the amount of insurance to purchase at each time, in order to dynamically hedge against the risk of decoding error, so that regardless of $y_{1},\ldots,y_{n}$, the loss is always at most $\$\epsilon$. The insurance policy plays the role of the channel. The capacity of this game-theoretic channel is the maximal communication rate that allows $\epsilon\to0$ as $n\to\infty$. 

Even though this setting is non-probabilistic ($y_{1},\ldots,y_{n}$ are non-random), it generalizes conventional noisy channel coding. Hence, our capacity theorem for the game-theoretic channel (Theorem \ref{thm:cap}) subsumes the conventional channel coding theorem. This is an application of the rather surprising idea that deterministic games can model probabilistic phenomena, which originated from the game-theoretic formulation of the law of large numbers by Ville \cite{ville1939etude}, the prequential framework by Dawid and Vovk \cite{dawid1984present,dawid1999prequential}, and the generalized coin tossing game by Shafer and Vovk \cite{shafer2005probability,shafer2019game}.  Also see \cite{vovk2005defensive,shafer2010base}.

The game-theoretic channel not only generalizes conventional discrete memoryless channels, but also zero-error channels, channels with feedback, and arbitrarily varying channels \cite{blackwell1959capacity,ahlswede1978elimination,lomnitz2013universal} with causal feedback to the adversary. Hence, our theorem which bounds the capacity of game-theoretic channels (Theorem \ref{thm:cap}) applies to all these channels. The insurance policy is a ``canonical form'' of a general class of channels. To accommodate communication settings, we utilize a set of payoff functions of the insurance, called a \emph{pricing downward closed (DC) cone}, which is more general than the convex pricing cones in \cite{shafer2005probability,shafer2010base,shafer2019game}. Pricing DC cones can be nonconvex, which is precisely the relaxation needed to capture information transmission. The pricing DC cone plays three roles simultaneously: probability, adversary and feedback. It is a unifying canonical form that allows these channels to be compared, and equivalent channels to be identified. We also show that lossless source coding can also be modelled as a game using pricing DC cones (Section \ref{sec:lossless}). Hence, two central theorems in information theory can be stated using non-probabilistic games. 

Although this paper is motivated by the theoretical question on whether we can replace probability with game theory in information theory, the channel coding game may be applicable to practical insurance policies. A traditional example is mail insurance which compensates the sender if a mail is lost. For another example, a satellite internet company may want to be insured against the event of poor channel quality (e.g., due to weather).  Companies whose profit depends on connection stability, such as high-frequency trading, may also purchase such insurance. This scenario is elaborated below.

Suppose a high-frequency trader is renting a  connection to an office at the stock exchange. To reduce latency, the connection uses a protocol without automatic retransmission (e.g., UDP), so the trader is responsible for error correction. The network provider, who maintains the connection, wants to assure the trader that the connection is reliable and worth the rent. One method is to have the network provider announce that the packet corruption probability is at most $p$. However, the practical meaning of this statement is unclear, as it is still possible to have a large number of packet corruptions by chance. For the trader to be able to take action against the network provider, the trader would have to record a large number (e.g., $1000$) of packets, and use statistical tests to show that the corruption probability is larger than $p$. Even if the network provider guarantees that there are at most $1000p$ corrupted packets per $1000$ consecutive packets, there is no guarantee on where these corrupted packets will be (e.g., burst error or even adversarial error), so the trader must perform error correction according to the worst case, which leads to a lower transmission rate. The guarantee is also meaningless if the trader wants to send a message that is shorter than $1000$ packets.

A more reliable method is to have the network provider offer an insurance for each packet that pays if that packet is corrupted. If the trader pays a premium $\$\gamma$ for a packet, then if the packet is corrupted, the compensation would be $\$\gamma/p$. By dynamically choosing when to purchase the insurance, the trader can hedge against the loss caused by transmission error, so that the trader can transmit at a rate approaching the capacity of the memoryless packet erasure channel, or otherwise be adequately compensated. The trader can guarantee a profit regardless of the packet corruption pattern. The techniques in this paper reveal how the insurance policy corresponds to the channel model, and how the trader should design the hedging strategy. 

To summarize the key messages of this paper:
\begin{itemize}
\item \emph{Probability can be a consequence rather than an assumption.} Even though the channel coding game is deterministic, it subsumes the probabilistic channel coding theorem (Section \ref{sec:martingale}, Theorem \ref{thm:cap}). Our goal is  to provide an alternative, more general framework that does not rely on the assumption that the channel is probabilistic, but can still recover probabilistic channels if they are desired. 
\item \emph{Pricing DC cone is an abstract language for coding settings.} Pricing DC cones can not only model probabilistic/adversarial channels with/without feedback, but can also model decoding requirements. The calculus of DC cones provides a unifying abstract language that allows us to state entire coding settings as equations. For example, the channel coding setting can be expressed as \emph{one equation} (\ref{eq:channel_require})  that completely captures the dynamics between the encoder, decoder and adversary. 
\item \emph{Causal encoder and adversary are duals of each other.} There is a precise duality between causal actions of the encoder (e.g., channels with feedback) and causal actions of the adversary, described by the dual DC cone. See Sections \ref{subsec:adv_feedback}, \ref{subsec:avcf} and \ref{sec:adv_cost}. 
\end{itemize}
This paper is organized as follows. In Section \ref{sec:previous}, we review some previous works. Section \ref{sec:mail} describes a motivating example using mail insurance. In Section \ref{sec:channelcoding}, we define channel coding as a deterministic game. In Section \ref{sec:martingale}, we explain why the deterministic game generalizes probabilistic channel coding. In Section \ref{sec:dc_cones}, we introduce various operations of pricing DC cones. In Section \ref{sec:capacity}, we state the main result on the capacity of the channel coding game. In Section \ref{sec:cases}, we study various conventional channels that are special cases of the game. In Section \ref{sec:single_cone}, we discuss channels that are characterized by a single cone. Section \ref{sec:adv_cost} describes the dual channel coding setting, where the cost function is chosen by the adversary. In Section \ref{sec:lossless}, we discuss lossless source coding as a game.

\smallskip{}

\begin{rem}
The framework in this paper can be regarded as a ``non-probabilistic information theory'', in the sense that the channel/source coding \emph{settings} are deterministic and do not involve probability. This does not mean that the \emph{proofs} in this paper must not involve probability. Probabilistic constructions are powerful tools even for non-probabilistic problems (e.g., in graph theory), and it is not a meaningful endeavor to avoid them. In this paper, the proofs mostly utilize the calculus of DC cones developed in this paper (which is non-probabilistic), with some steps using probabilistic tools if needed.

In comparison, avoiding probabilistic \emph{settings} can be due to meaningful practical needs (e.g., the channel noise coming from other actors rather than being random, the need of a completely sure guarantee instead of a ``high probability'' one, etc.), which are motivations for the study of zero-error and adversarial coding settings. We also emphasize that the framework in this paper is capable of recovering probabilistic settings, so probability is not completely excluded, but rather becomes an optional interpretation. 
\end{rem}
\smallskip{}

\subsection*{Notations}

 Entropy is in bits, and logarithms are to the base $2$. For a statement $S$, its indicator is $\mathbf{1}\{S\}$, which is $1$ if $S$ holds, or $0$ if $S$ does not hold. Write $\mathbb{R}_{\ge0}:=\{x\in\mathbb{R}:x\ge0\}$, and similar for $\mathbb{R}_{>0}$, $\mathbb{Z}_{\ge0}$, etc. For a sequence $x_{1},\ldots,x_{n}$, write $x^{n}=(x_{1},\ldots,x_{n})$, and $x_{a}^{b}=(x_{a},\ldots,x_{b})$.  The set of functions from $\mathcal{X}$ to $\mathcal{Y}$ is denoted as $\mathcal{Y}^{\mathcal{X}}$. Given $A,B\subseteq\mathbb{R}^{\mathcal{X}}$, their Minkowski sum is $A+B:=\{a+b:\,a\in A,b\in B\}$ (where $a+b$ is the function $x\mapsto a(x)+b(x)$), and we write $\gamma A:=\{\gamma a:\,a\in A\}$ for $\gamma\in\mathbb{R}$, and $\mathrm{hull}(A)$ is the convex cone (set of finite convex combinations) of $A$. The  set of finite discrete probability mass functions over $\mathcal{X}$ is $\Delta_{\mathcal{X}}:=\{p\in\mathbb{R}_{\ge0}^{\mathcal{X}}:|\mathrm{supp}(p)|<\infty,\,\sum_{x\in\mathrm{supp}(p)}p(x)=1\}$, where $\mathrm{supp}(p):=\{x\in\mathcal{X}:\,p(x)\neq0\}$. For $x\in\mathcal{X}$, let $\mathbf{e}_{x}\in\Delta_{\mathcal{X}}$ be the vector with entries $(\mathbf{e}_{x})_{t}=\mathbf{1}\{t=x\}$ for $t\in\mathcal{X}$. For $a,b\in\mathbb{R}^{\mathcal{X}}$, write $\langle a,b\rangle:=\sum_{x\in\mathrm{supp}(a)\cap\mathrm{supp}(b)}a(x)b(x)$. \smallskip{}

\section{Previous Works}\label{sec:previous}

\subsection{Generalized Coin Tossing Game}\label{subsec:prev_coin}

This paper follows a similar spirit as \cite{shafer2005probability,shafer2019game}, which models various probabilistic and financial concepts as non-probabilistic games. In \cite{shafer2005probability}, a stochastic process is likened to a generalized coin tossing game between three players: forecaster, skeptic and reality. The reality (adversary) chooses the outcomes of the process. The purpose of the skeptic's action (which lies in a convex cone called a \emph{pricing cone} \cite{shafer2010base}) is to ensure that either the outcomes are favorable, or a large gain can be attained. 

In the channel coding game in this paper, the encoder plays the role of the skeptic whose goal is to ensure successful decoding or a large gain. The important difference is that the encoder's action lies in the pricing DC cone, which may not be convex. Eliminating the convexity assumption is the key step that changes the game from a ``passive'' process in \cite{shafer2005probability} (even though the skeptic is an active player, the skeptic's role is merely to stop reality from choosing unlikely outcomes) to an ``active'' decision process which can be used to transmit information. The encoder is not only a skeptic, but can influence the process. The ``game-theoretic information theory'' in this paper is not merely an application of the game-theoretic probability in \cite{shafer2005probability,shafer2019game}, but can be seen as an extension in the sense of having a more active skeptic/encoder.

\smallskip{}

\subsection{Adversarial Channels and Arbitrarily Varying Channels}\label{subsec:prev_avc}

In an adversarial channel \cite{hamming1950error,shannon1956zero}, the noise sequence is not random, but is rather controlled by an adversary. If the adversary can choose each entry of the noise sequence separately, this is equivalent to zero-error channel coding \cite{shannon1956zero}. Another example is the Hamming bit-flip adversarial channel where the channel input is a bit sequence, and the adversary can flip at most a certain portion of the bits, which is equivalent to finding a binary code with a lower bound on the minimum distance \cite{hamming1950error,gilbert1952comparison,varshamov1957estimate,mceliece1977new}. A variant of the Hamming adversarial channel, where the adversary observes the input sequence causally, has also been studied in \cite{langberg2009binary,chen2015characterization}. 

Adversarial channels have vastly different behaviors compared to probabilistic channels. Coding against an adversary is significantly more difficult compared to random noise, and the capacity of adversarial channels are often difficult to compute and strictly smaller than their probabilistic counterparts \cite{mceliece1977new,lovasz1979shannon}. There is usually no direct correspondence between results on non-probabilistic adversarial channels (or zero-error channel coding) and results on probabilistic channels, other than the loose bound that adversarial channels have lower capacities.

Arbitrarily varying channels (AVCs) \cite{blackwell1959capacity,ahlswede1978elimination,lapidoth2002reliable,csiszar2002capacity} are channels where the output sequence depends stochastically on the input sequence and a state sequence, and the state sequence can vary in an arbitrary manner. An AVC can be regarded as a game between the encoder-decoder team and an adversary who chooses the state sequence. Refer to \cite{dey2024codes} for various generalizations of AVCs.

Another related setting is the mutual information game \cite{mceliece1983communication,borden1985some}, where the encoder controls the input distribution $p_{X}$ and seeks to maximize $I(X;Y)$, whereas the adversary partially controls the channel $p_{Y|X}$ and seeks to minimize $I(X;Y)$. This game has found applications in jamming channels \cite{medard1997capacity,stark2002capacity,shafiee2009mutual}. Also see \cite{jose2018game,jose2020shannon,vora2019minimax} for related works on game-theoretic settings with adversarial jammers, and \cite{harremoes2002unified} for a game-theoretic treatment for several problems in Shannon theory. 

Our setting is different from these works in three aspects. First, AVCs and jamming channels are stochastic, whereas the game in this paper is deterministic. Our focus is recovering probabilistic channels using deterministic games, which has not been studied in these previous works. Second, the payoff in our setting is the monetary gain from the insurance (can be positive or negative), which has a different operational meaning compared to the payoffs in previous works which are either the channel capacity or the error probability (must be nonnegative).\footnote{The payoff in \cite{harremoes2002unified} can be entropy, relative entropy or other information quantities.} Third, in our setting, the adversary can choose the channel outputs (not only the state) arbitrarily. The encoder cannot stop the adversary from choosing irrelevant outputs, but can only be compensated by the insurance if this happens. We also show that our game generalizes AVCs with causal feedback to the adversary (Section \ref{subsec:avcf}).

Readers are referred to \cite{akyol2016information,khouzani2018optimal,hernandez2014nash} for other works on the overlap between information theory and game theory.

\smallskip{}

\subsection{Nonstochastic Information Theory}

A nonstochastic information theory has been studied in \cite{nair2012nonstochastic,nair2013nonstochastic} (also see the generalization in \cite{rangi2019non}), which concerns variables with nonstochastic uncertain values rather than random values, and recovers the zero-error channel capacity (with or without feedback) via the maximin information functional. In comparison, the framework in this paper recovers both the zero-error and the vanishing-error channel capacity (with or without feedback).  A key difference is that we allow purchasing insurance on the values of the variables, which provides finer control and lets us recover probabilistic concepts. Since our approach is fundamentally different from \cite{nair2012nonstochastic,nair2013nonstochastic}, we will not call our approach ``nonstochastic information theory'' to avoid confusion.

Information can be modelled as a partition of the sample space \cite{ellerman2014introduction}, or by a confusion hypergraph \cite{korner1971coding,li2025coding}. While these notions are not inherently probabilistic, and their operations can be defined without a probability distribution, probability is still required to recover the conventional coding theorems.

\section{Motivating Example: Mail Insurance}\label{sec:mail}

We use a simple, familiar scenario of mail insurance to motivate the channel coding game. Suppose the sender can send $n$ mails, each consisting of $\ell$ bits. The sender wants to convey a message with $k\ell$ bits ($1\le k\le n$) to the receiver. To this end, the sender encodes the message into the $n$ mails so that any $k$ of the mails can recover the message (this is possible using a linear code). If the receiver fails to correctly decode the message, the sender-receiver team loses $\$1$. If each mail is lost with a certain positive probability, then the worst-case loss is $\$1$ regardless of how small the probability is. Nevertheless, if an insurance is available, then we can guarantee a better worst-case loss without the need of probability.

For each mail being sent, the post office offers an insurance at a price $\$1$, which compensates $\$(1/p)$ if the mail is lost (where $0<p<1$). The sender can purchase any nonnegative amount $\gamma$ of insurance. If the sender pays a premium $\$\gamma$, then the compensation would be $\$\gamma/p$ if the mail is lost. The sender's goal is to use the insurance to hedge against the risk of decoding error, so they will lose at most $\$\epsilon$ regardless of which mails are lost, where $0<\epsilon<1$ is as small as possible.

One simple strategy is to pay a premium $\$p/(n-k+1)$ for each mail. If no decoding error occurs, the sender loses at most $\$np/(n-k+1)$. If decoding error occurs (which costs $\$1$), then there are at least $n-k+1$ lost mails, so the net gain is at least
\[
\frac{p}{n-k+1}\Big(\frac{n-k+1}{p}-n\Big)-1=-\frac{np}{n-k+1}.
\]
Hence, the maximum loss is $\$np/(n-k+1)$. If $k=\lfloor nR\rfloor$ where the rate $R$ satisfies $R<1-p$, the maximum loss is less than $\$1$, so the insurance can reduce the maximum loss. 

A better strategy is to perform dynamic hedging \cite{ville1939etude,shafer2005probability}, where the sender sends the $n$ mails one by one, and adaptively chooses the amount of insurance to purchase based on the number of mails successfully delivered. If $k$ mails have already been successfully delivered, then there is no need to purchase more insurance. If only $k-1$ mails have been successfully delivered, and there are only a few mails left to be sent, then a larger amount of insurance should be purchased. 

One can show that, taking $k=\lfloor nR\rfloor$ and $n\to\infty$, as long as $R<1-p$, the maximum loss $\epsilon$ can approach $0$. This is reminiscent of the channel coding theorem for the erasure channel with erasure probability $p$ (which is the implied probability of the insurance policy which compensates $\$1/p$ if the mail is lost), where the capacity is also $1-p$. We will show that there is indeed a rigorous correspondence between probabilistic channels and certain deterministic games, to be introduced in the following sections.

\section{The Channel Coding Game}\label{sec:channelcoding}

A channel is a mechanism mapping an input $x\in\mathcal{X}$ to an output $y\in\mathcal{Y}$. In conventional information theory, $y$ is chosen by nature randomly conditional on $x$. Since our goal is to develop a non-probabilistic information theory, nature is considered an adversary which chooses $y$ arbitrarily. If $y$ is truly arbitrarily and cannot be influenced by the encoder, then no information can be transmitted. Suppose the encoder loses $\$1$ if a decoding error occurs. Without any insurance, the encoder will surely lose $\$1$. Hence, we allow the encoder to purchase insurance to hedge against the risk of undesirable $y$'s. 

Consider the simple example where $\mathcal{Y}=\{0,1\}$. Fix $0<\beta<1/2$. At time $i=1,\ldots,n$, the encoder may purchase one of the two insurance policies: Policy 0 ``if $y_{i}\neq0$, then the encoder gains $\$1/\beta$'', or Policy 1 ``if $y_{i}\neq1$, then the encoder gains $\$1/\beta$'', for $\$1$ each. The encoder may purchase an arbitrary nonnegative amount of Policy 0, or an arbitrary nonnegative amount of Policy 1, but not both. We assume the existence of feedback, so the encoder's choice can depend on $y^{i-1}=(y_{1},\ldots,y_{i-1})$. After the encoder has chosen the amount of insurance to purchase at time $i$, the adversary chooses $y_{i}$ accordingly. The encoder's goal is to lose at most $\$\epsilon$ (where $0<\epsilon<1$ is small), regardless of $y_{1},\ldots,y_{n}$. This means the encoder can price the transmission service at any amount greater than $\$\epsilon$, and have a guaranteed profit.

Loosely speaking, the strategy of the encoder is to use the insurance to force the adversary to choose a desirable $y^{n}$. Suppose the decoder can decode correctly as long as $y^{n}$ is close to a certain sequence $\tilde{y}^{n}$. At time $i$, the encoder purchases Policy 0 if $\tilde{y}_{i}=0$, or Policy 1 if $\tilde{y}_{i}=1$. If $y^{n}$ is close to $\tilde{y}^{n}$, then the encoder will have a small loss in the insurance, but there is no decoding error so the overall loss is small. If $y^{n}$ is far from $\tilde{y}^{n}$, then the encoder will gain from the insurance, which offsets the loss incurred by decoding error so the overall loss is also small.

Generally, the set of possible payoffs of the insurance policy is described by the following concept, which is a relaxation of the notion of convex pricing cones in \cite{shafer2005probability,shafer2010base}.

\smallskip{}

\begin{defn}
\label{def:dc_cone}Given a set $\mathcal{Y}$, a \emph{pricing downward closed (DC) cone} over $\mathcal{Y}$ is a   set $A\subseteq\mathbb{R}^{\mathcal{Y}}$ that is a cone (i.e., $\gamma A\subseteq A$ for every $\gamma\ge0$), and is downward closed (i.e., $A+\mathbb{R}_{\le0}^{\mathcal{Y}}\subseteq A$). In other words, if $a\in A$, $b\in\mathbb{R}_{\ge0}^{\mathcal{Y}}$ and $\gamma\ge0$, then $\gamma a-b\in A$. Note that $\emptyset$ is a pricing DC cone. The set of all pricing DC cones over $\mathcal{Y}$ is denoted as $\mathrm{DCCone}(\mathcal{Y})$.
\end{defn}
\smallskip{}

The elements $a\in A$ of a pricing DC cone are possible payoff functions, which we call \emph{portfolios}. The encoder is allowed to choose any portfolio $a\in A$. Then, if the channel output is $y$, the encoder gains $\$a(y)$. $A$ is a cone, meaning that the encoder can always scale the portfolio by $\gamma\ge0$. $A$ being downward closed means that if the encoder can choose $a\in A$, then the encoder can also choose a portfolio that always pays less than $a$.\footnote{This is called \emph{slackening} for convex pricing cones \cite{shafer2019game}.} An important difference between pricing DC cones and convex pricing cones (which are also downward closed) \cite{shafer2005probability,shafer2010base} is that a pricing DC cone $A$ may not be convex, and $a,b\in A$ does not imply $a+b\in A$, i.e., we do not always allow combining portfolios. For example, the pricing DC cone for the $\mathcal{Y}=\{0,1\}$ example earlier in this section is
\begin{align}
A=\{a\in\mathbb{R}^{\{0,1\}}: & (1-\beta)a(0)+\beta a(1)\le0\nonumber \\
 & \mathrm{or}\;\;\beta a(0)+(1-\beta)a(1)\le0\},\label{eq:bsc_feedback}
\end{align}
which is nonconvex. We have $(-1,1/\beta-1),(1/\beta-1,-1)\in A$, meaning that the encoder can purchase the policy ``if $y_{i}\neq0$, then the encoder gains $\$1/\beta$'' or the policy ``if $y_{i}\neq1$, then the encoder gains $\$1/\beta$'' for $\$1$ each, but not both. See Remark \ref{rem:combination} for the reason why having nonconvex pricing DC cones and disallowing combination of portfolios are reasonable, and see Figure \ref{fig:cones} for an illustration.

More generally, we allow the insurance policy to depend on the channel input $x$. It is described by a function $W:\mathcal{X}\to\mathrm{DCCone}(\mathcal{Y})$, meaning that if the channel input is $x$, then we allow the encoder to choose a portfolio in $W(x)$.  We now define channel coding as a game, inspired by the generalized coin tossing game \cite{shafer2005probability} (with the important difference being that the pricing DC cone can be nonconvex). We will see that the role of the channel is played by the insurance policy $W$. For example, the policy (\ref{eq:bsc_feedback}) is the game-theoretic version of the binary symmetric channel $\mathrm{BSC}(\beta)$ with feedback.

\smallskip{}

\begin{defn}
\label{def:game_channel} A \emph{game-theoretic channel} from a set $\mathcal{X}$ to a set $\mathcal{Y}$ is a function $W:\mathcal{X}\to\mathrm{DCCone}(\mathcal{Y})$.    The \emph{channel coding game} is parametrized by a tuple $(W,n,L,\epsilon)$, where $W$ is a game-theoretic channel, $n\in\mathbb{Z}_{>0}$ is the blocklength, $L\in\mathbb{Z}_{>0}$ is the message cardinality, and $0<\epsilon<1$ is the maximum loss. It is a game with three players (adversary, encoder and decoder; where the encoder and decoder are a team, but cannot directly communicate) defined as follows:
\begin{itemize}
\item Adversary produces $m\in[L]$, which is observed by  the encoder.
\item Encoder produces $x_{1},\ldots,x_{n}\in\mathcal{X}$, which is observed by the adversary.
\item For $i=1,\ldots,n$:
\begin{itemize}
\item Encoder produces $w_{i}\in W(x_{i})$, which is observed by the adversary.
\item Adversary  produces $y_{i}\in\mathcal{Y}$, which is observed by the encoder.
\end{itemize}
\item Decoder observes $y_{1},\ldots,y_{n}$, produces $\hat{m}\in[L]$.
\item The team wins\footnote{We may also require that $\sum_{j=1}^{i}w_{j}(y_{j})\ge-\epsilon$ for all $i\in[n]$ (or else the team loses), i.e., the team can never be in a deficit larger than $\epsilon$ at any time (similar to \cite{shafer2005probability}). The main results (Theorems \ref{thm:cap}, \ref{thm:avcf}) continue to hold in this case.} if 
\[
\sum_{i=1}^{n}w_{i}(y_{i})-\mathbf{1}\{\hat{m}\neq m\}\ge-\epsilon,
\]
that is, the gains of the insurance, subtracted by the $\$1$ loss in case of decoding error, is at least $-\epsilon$.
\end{itemize}
\end{defn}
\smallskip{}

\begin{figure}
\centering
\includegraphics[scale=0.95]{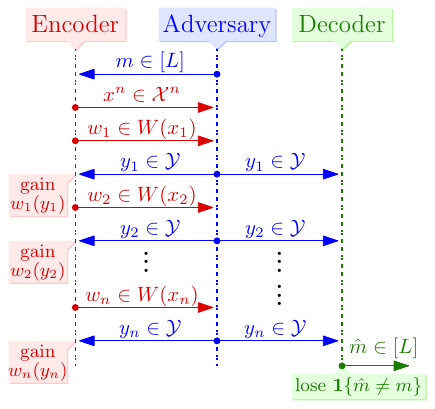}

\caption{Sequence diagram of the channel coding game (Definition \ref{def:game_channel}).}\label{fig:game}
\end{figure}

A \emph{strategy} of a player at a step in the game is a function that maps all previous observations  of the player to the action of the player at that step, and a strategy of a team is a tuple of strategies of each player in the team at every step. Here, a strategy of the  team consists of a function $m\mapsto x^{n}$ used by the encoder, a function $(m,i,y^{i-1})\mapsto w_{i}$ used by the encoder, and a function $y^{n}\mapsto\hat{m}$ used by the decoder.  We say that the game is a \emph{guaranteed win} if there exists a strategy of the team which guarantees a win regardless of the choices of $m,y^{n}$ by the adversary.\footnote{More explicitly, the game is a guaranteed win if there exist functions $x^{n}:[L]\to\mathcal{X}^{n}$, $w_{i}:[L]\times\mathcal{Y}^{i-1}\to\mathbb{R}^{\mathcal{Y}}$ for $i\in[n]$, and $\hat{m}:\mathcal{Y}^{n}\to[L]$, such that $w_{i}(m,y^{i-1})\in W(x_{i}(m))$ for all $i,m,y^{i-1}$ (where $x_{i}(m)$ is the $i$-th entry of $x^{n}(m)$), and $\sum_{i=1}^{n}w_{i}(m,y^{i-1})(y_{i})-\mathbf{1}\{\hat{m}(y^{n})\neq m\}\ge-\epsilon$ for all $m,y^{n}$.} This strategy is called a \emph{winning strategy}. These are deterministic concepts. No probabilities are involved. 

The game is an iterative process where the encoder and the adversary produce $w_{i}$ and $y_{i}$ interactively (see Figure \ref{fig:game}). This might appear unrealistic as the encoder has to make decision quickly at each time step. In Section \ref{sec:dc_cones}, we discuss an equivalent formulation where the encoder makes a decision in one step without the need of feedback.

The interactive nature of this game is reminiscent of channels with feedback, where the encoder can produce $x_{i}$ depending on $y^{i-1}$. However, the channel coding game does not have such a feedback, since the encoder must produce $x_{1},\ldots,x_{n}$ before observing any $y_{i}$. Nevertheless, the channel coding game does include channels with feedback as special cases using non-convex pricing DC cones like (\ref{eq:bsc_feedback}). See Section \ref{sec:martingale}.

We can now define the capacity of a game-theoretic channel.

\smallskip{}

\begin{defn}
[Game-theoretic capacity]\label{def:capacity} Given a game-theoretic channel $W:\mathcal{X}\to\mathrm{DCCone}(\mathcal{Y})$, its \emph{capacity} $\mathrm{C}(W)$ is  the supremum of the set of rates $R\ge0$ such that for every $n_{0}\in\mathbb{Z}_{>0}$, $0<\epsilon<1$, there exists $n\ge n_{0}$ such that the channel coding game $(W,n,\lfloor2^{nR}\rfloor,\epsilon)$ is a guaranteed win.\footnote{If $W(x)=\emptyset$ for all $x$, so no strategy of the team exists, take $\mathrm{C}(W)=-\infty$.}
\end{defn}
\smallskip{}

The main result on the capacity will be given in Section \ref{sec:capacity}. Even though a game-theoretic channel  is only characterized by a function $W:\mathcal{X}\to\mathrm{DCCone}(\mathcal{Y})$, it is surprising general. In Section \ref{sec:cases}, we will show that it includes the following conventional notions of channels as special cases:
\begin{itemize}
\item Discrete memoryless channels without feedback (Section \ref{subsec:dmc}).
\item Discrete memoryless channels with noiseless feedback (Section \ref{subsec:dmc_feedback}).
\item Zero-error channels, with or without feedback (Sections \ref{subsec:adv}, \ref{subsec:adv_feedback}).
\item Arbitrarily varying channels with causal feedback to the adversary (Section \ref{subsec:avcf}). 
\end{itemize}
Hence, any example of the above channels, or combinations thereof, can be succinctly represented by a function $W:\mathcal{X}\to\mathrm{DCCone}(\mathcal{Y})$, which is a canonical representation that allows us to compare channels and identify equivalent channels. While it might seem counter-intuitive that a non-probabilistic game can generalize a probabilistic noisy channel, the precise meaning of this claim will be explained in the next section.

\smallskip{}

\begin{rem}
\label{rem:combination}Pricing DC cones may not be convex. For example, for the $A$ in (\ref{eq:bsc_feedback}), we have $(-1,1/\beta-1),(1/\beta-1,-1)\in A$, but their sum $(1/\beta-2,1/\beta-2)\notin A$. If we allow choosing $(1/\beta-2,1/\beta-2)$ which has positive entries, the encoder can perform arbitrage, i.e., gain $1/\beta-2$ regardless of $y$. Hence, this $A$ might appear unreasonable, as the only reason the encoder cannot perform arbitrage is that we ``artificially'' disallow combining portfolios. 

We now explain why disallowing combination is natural in information transmission. $A$ is the union of two parts $\{a:(1-\beta)a(0)+\beta a(1)\le0\}$ and $\{a:\beta a(0)+(1-\beta)a(1)\le0\}$. The first part corresponds to sending $x=0$ in a $\mathrm{BSC}(\beta)$, whereas the second part corresponds to sending $x=1$. Therefore, combining two portfolios in different parts is like sending $x=0$ and $x=1$ simultaneously, which is not allowed. For a more concrete example, consider the mail insurance in Section \ref{sec:mail}. Assume that the mail insurance not only pays upon lost of mails, but also when the wrong mail is delivered to the recepient. The insured event is ``if the mail is lost, or if the content of the received mail is not this particular string''. The insurer would clearly make sure that the mail being sent matches the string recorded in the insurance, so it is impossible to combine two insurance policies with two different strings. Generally, we disallow combination because choosing a portfolio is analogous to choosing a channel input, and an encoder cannot send two different input symbols at the same time. 
\end{rem}
\smallskip{}

\section{Recovering Probabilistic Channels by Deterministic Games}\label{sec:martingale}

It is unsurprising that the adversarial game in Definition \ref{def:game_channel} generalizes adversarial channels and zero-error channel coding (see Sections \ref{subsec:adv}, \ref{subsec:adv_feedback}). Nevertheless, it is unclear how a deterministic game can generalize a probabilistic channel. In this section, we will use the pricing DC cone in (\ref{eq:bsc_feedback}) as an example to demonstrate how $\mathrm{BSC}(\beta)$ with feedback can be recovered via a generalization of the strategy based on martingale by Ville \cite{ville1939etude} and Shafer and Vovk \cite{shafer2005probability}. We call our generalization the \emph{nonconvex martingale strategy}.

We first review the notion of feedback coding for $\mathrm{BSC}(\beta)$. A feedback coding scheme is given by the encoding functions $f_{n,i}:[L]\times\{0,1\}^{i-1}\to\{0,1\}$ for $i\in[n]$ and the decoding function $g_{n}:\{0,1\}^{n}\to[L]$. Let $M\in[L]$, $\tilde{X}_{i}=f_{n,i}(M,Y^{i-1})$, and $Y_{i}\in\{0,1\}$ is generated conditional on $\tilde{X}_{i}$ according to $\mathrm{BSC}(\beta)$ for $i=1,\ldots,n$ (see Remark \ref{rem:bsc_x_tilde} for the reason we use $\tilde{X}_{i}$ instead of $X_{i}$). We say that the scheme has a maximum error probability $\epsilon$ if $\max_{m\in[L]}\mathbb{P}(g_{n}(Y^{n})\neq m|M=m)\le\epsilon$. 

We now recover $\mathrm{BSC}(\beta)$ with feedback as a channel coding game. Consider $\mathcal{X}=\{0\}$, and $W(0)=A$ where $A$ is given in (\ref{eq:bsc_feedback}). This game-theoretic channel has no explicit inputs $x^{n}$, so the encoder can only influence $y_{i}$ through the choice of the portfolio $w_{i}\in A$. We will prove the following claim, showing that the maximum error probability $\epsilon$ corresponds to the maximum loss $\epsilon$ in the game. The concept of ``error probability'' can be recovered in the deterministic game without probability.

\smallskip{}

\begin{prop}
\label{prop:bsc}There exists a feedback coding scheme for $\mathrm{BSC}(\beta)$ with maximum error probability $\epsilon$ if and only if there exists a winning strategy for the channel coding game $(W,n,L,\epsilon)$ with $W:\{0\}\to\mathrm{DCCone}(\{0,1\})$, $W(0)=A$ where $A$ is given in (\ref{eq:bsc_feedback}).
\end{prop}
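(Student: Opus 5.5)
We prove both directions by identifying a portfolio $w_i\in A$ with a choice of input bit $\tilde{x}_i$ together with a "bet" against the BSC law $P_{\tilde x_i}$. Here $P_0=(1-\beta,\beta)$ and $P_1=(\beta,1-\beta)$. The key observation is that $A=\{a:\mathbb{E}_{P_0}[a]\le0\}\cup\{a:\mathbb{E}_{P_1}[a]\le0\}$. So choosing $w_i\in A$ amounts to choosing $\tilde{x}_i\in\{0,1\}$ and then a payoff with nonpositive expectation under $\mathrm{BSC}(\beta)$ driven by $\tilde{x}_i$.

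\emph{(Coding scheme $\Rightarrow$ winning strategy.)} Given $f_{n,i},g_n$, the encoder sets $\tilde{x}_i=f_{n,i}(m,y^{i-1})$. For fixed $m$, define the martingale
\[
S_i(y^i):=\mathbb{P}\bigl(g_n(Y^n)\neq m\,\big|\,M=m,Y^i=y^i\bigr),
\]
computed under the feedback scheme, with $S_0\le\epsilon$. The encoder then chooses $w_i(y):=S_i(y^{i-1},y)-S_{i-1}(y^{i-1})$. By the tower property, $\mathbb{E}_{P_{\tilde{x}_i}}[w_i]=0$, hence $w_i\in A$. This holds even for histories $y^{i-1}$ of probability zero, which cannot occur since $0<\beta<1/2$; alternatively, one can define $S_i$ by the recursion directly. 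The sum telescopes:
\[
\sum_i w_i(y_i)=S_n(y^n)-S_0=\mathbf{1}\{g_n(y^n)\neq m\}-S_0\ge\mathbf{1}\{\hat m\neq m\}-\epsilon,
\]
with $\hat m=g_n(y^n)$. This is exactly the winning condition.

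\emph{(Winning strategy $\Rightarrow$ coding scheme.)} Given a winning strategy $w_i(m,y^{i-1})\in A$, define $f_{n,i}(m,y^{i-1})$ to be some $\tilde x$ with $\mathbb{E}_{P_{\tilde x}}[w_i]\le0$. Such an $\tilde x$ exists by the description of $A$; pick, say, the smallest one. Let $g_n$ be the decoder's map. Running the BSC with these encoders, $K_i:=\sum_{j\le i}w_j(Y_j)$ is a supermartingale under $\mathbb{P}(\cdot|M=m)$, since each increment has nonpositive conditional expectation given $Y^{i-1}$. Because the set of outcomes is finite, all quantities are bounded, so no integrability issues arise. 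The winning condition holds pointwise, giving $\mathbf{1}\{g_n(Y^n)\neq m\}\le\epsilon+K_n$. Taking expectations yields $\mathbb{P}(\text{error}|M=m)\le\epsilon+\mathbb{E}[K_n]\le\epsilon$ for every $m$.

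The main subtlety, rather than a real obstacle, is the first direction. We must check that the martingale increments lie in the correct half of the nonconvex cone $A$ at each step, namely the half indexed by the current input $f_{n,i}(m,y^{i-1})$. We must also check that conditional probabilities are well-defined for every history. Both points follow from the full support of $P_0,P_1$ when $0<\beta<1/2$.
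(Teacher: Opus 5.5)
Your proposal is correct and follows the paper's proof essentially step for step. One direction uses the nonconvex martingale strategy $w_i(y)=S_i(y^{i-1},y)-S_{i-1}(y^{i-1})$, with the law of total probability placing $w_i$ in the half of $A$ indexed by $f_{n,i}(m,y^{i-1})$ and a telescoping sum finishing the argument. The other direction reads off $\tilde x_i$ from which halfspace contains $w_i$ and takes expectations under the induced $\mathrm{BSC}(\beta)$. Your remark that every history has positive probability when $0<\beta<1/2$ makes explicit a well-definedness point that the paper leaves implicit.
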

\smallskip{}

\begin{IEEEproof}
The game in Definition \ref{def:game_channel} already works in a similar iterative manner as feedback coding. There are two main differences: 1) the game has a non-probabilistic winning guarantee for all $y^{n}$, whereas the decoding requirement in feedback coding is only probabilistic; and 2) the encoder chooses the portfolio $w_{i}\in A$ in the game, whereas the encoder chooses $\tilde{X}_{i}$ in feedback coding. For the ``if'' direction where we construct a feedback coding scheme using a winning strategy of the game, the first difference is not a hindrance since a non-probabilistic guarantee implies a probabilistic guarantee, so we can still assume $Y^{n}$ is random. For the second difference, to see how $w_{i}$ corresponds to $\tilde{X}_{i}$, note that there are two cases in the DC cone $A$ in (\ref{eq:bsc_feedback}): $(1-\beta)a(0)+\beta a(1)\le0$, and $\beta a(0)+(1-\beta)a(1)\le0$. The first case corresponds to $\tilde{X}_{i}=0$, since $\mathbb{E}[a(Y_{i})|\tilde{X}_{i}=0]\le0$ if $(1-\beta)a(0)+\beta a(1)\le0$. The second case corresponds to $\tilde{X}_{i}=1$. Therefore, if the adversary chooses $Y_{i}$ randomly in the game according to $\mathrm{BSC}(\beta)$ with input $\tilde{X}_{i}$, it can guarantee a nonpositive expected payoff in the insurance. The only way to guarantee a win in the game is to ensure that the message is decoded correctly most of the time. 

For a rigorous proof of the ``if'' direction, assume that there exists a winning strategy for the channel coding game. To find the corresponding coding scheme, let $w_{i}(m,y^{i-1})$ be the portfolio produced by the encoder upon the message $m$ and feedback $y^{i-1}$. We have $w_{i}(m,y^{i-1})\in A$, which means that 
\begin{align}
 & (1-\beta)w_{i}(m,y^{i-1})(0)+\beta w_{i}(m,y^{i-1})(1)\le0,\label{eq:ex_bsc_0}\\
 & \mathrm{or}\;\;\beta w_{i}(m,y^{i-1})(0)+(1-\beta)w_{i}(m,y^{i-1})(1)\le0,\label{eq:ex_bsc_1}
\end{align}
or both.  Take $f_{n,i}(m,y^{i-1})=0$ if the first case holds, or $f_{n,i}(m,y^{i-1})=1$ otherwise. The decoding function $g_{n}(y^{n})$ is simply the output of the decoder upon $y^{n}$. To bound the error probability of this scheme, let $M=m\in[L]$, $\tilde{X}_{i}=f_{n,i}(M,Y^{i-1})$, and $Y_{i}\in\{0,1\}$ is generated conditional on $\tilde{X}_{i}$ according to $\mathrm{BSC}(\beta)$. If $\tilde{X}_{i}=0$, we have $\mathbb{E}[w_{i}(m,Y^{i-1})(Y_{i})|\tilde{X}_{i}=0]\le0$ due to (\ref{eq:ex_bsc_0}) and $Y_{i}|\{\tilde{X}_{i}=0\}\sim\mathrm{Bern}(\beta)$. Similarly, we also have $\mathbb{E}[w_{i}(m,Y^{i-1})(Y_{i})|\tilde{X}_{i}=1]\le0$ due to (\ref{eq:ex_bsc_1}). Hence, $\mathbb{E}[w_{i}(m,Y^{i-1})(Y_{i})]\le0$. The game is a guaranteed win, which implies $\mathbb{E}[\sum_{i}w_{i}(m,Y^{i-1})(Y_{i})-\mathbf{1}\{g_{n}(Y^{n})\neq m\}]\ge-\epsilon$. Combining this with $\mathbb{E}[w_{i}(m,Y^{i-1})(Y_{i})]\le0$ gives $\mathbb{P}(g_{n}(Y^{n})\neq m)\le\epsilon$. We remark that the use of probability is restricted to the proof (if the team loses no more than $\$\epsilon$ for every $y^{n}$, then the expected loss is no more than $\$\epsilon$ for random $y^{n}$ as well). We do not assume that the adversary can pick $y_{i}$ randomly in Definition \ref{def:game_channel}.

For the ``only if'' direction, consider a coding scheme $f_{n,i},g_{n}$ for the BSC with maximum error probability $\epsilon$. Define $M,\tilde{X}_{i},Y_{i}$ according to the scheme. Let
\[
P_{e}(m,y^{i})=\mathbb{P}(g_{n}(Y^{n})\neq m\,|\,M=m,\,Y^{i}=y^{i})
\]
be the conditional error probability given $m$ and $y^{i}$. We now design a strategy for the game, which we call the nonconvex martingale strategy. At time $i$, observing the message $m$ and the feedback $y^{i-1}$, the encoder produces $w_{i}$ with 
\[
w_{i}(y')=P_{e}(m,(y^{i-1},y'))-P_{e}(m,y^{i-1}),
\]
where $(y^{i-1},y')\in\{0,1\}^{i}$ denotes the concatenation. See Remark \ref{rem:martingale} for the connection with \cite{ville1939etude,shafer2005probability,shafer2010base}. Intuitively, if $y_{i}=y'$ increases the error probability, the encoder assigns a larger payoff to that value of $y'$ to hedge against that value. To check $w_{i}\in A$, if $f_{n,i}(m,y^{i-1})=0$ (i.e., $\tilde{X}_{i}=0$), then $Y_{i}\sim\mathrm{Bern}(\beta)$, so 
\[
P_{e}(m,y^{i-1})=(1-\beta)P_{e}(m,(y^{i-1},0))+\beta P_{e}(m,(y^{i-1},1))
\]
by the law of total probability, giving $(1-\beta)w_{i}(0)+\beta w_{i}(1)=0$. The case $f_{n,i}(m,y^{i-1})=1$ corresponds to the other case in $A$. The decoder outputs $\hat{m}=g_{n}(y^{n})$. The final payoff is
\begin{align*}
 & \sum_{i=1}^{n}w_{i}(y_{i})-\mathbf{1}\{\hat{m}\neq m\}\\
 & =P_{e}(m,y^{n})-P_{e}(m,\emptyset)-\mathbf{1}\{\hat{m}\neq m\}\,\ge\,-\epsilon,
\end{align*}
since $P_{e}(m,y^{n})=\mathbf{1}\{\hat{m}\neq m\}$ (if we know $m,y^{n}$, then we know whether there is an error), and $P_{e}(m,\emptyset)$ is the conditional error probability given $m$, which is at most $\epsilon$. Hence, the team wins.
\end{IEEEproof}
\smallskip{}

\begin{rem}
\label{rem:martingale}Note that $P_{e}(m,Y^{i})$ forms a martingale, and $w_{i}(Y_{i})$ forms a martingale difference sequence. The strategy we discussed earlier is similar to the strategy based on martingales in \cite{ville1939etude,shafer2005probability,shafer2010base}. The difference is that, since our goal is to use games to model channel coding, we allow the conditional distribution of $Y_{i}$ given $(M,Y^{i-1})$ to be partially controlled by the encoder. Therefore, the DC cone $A$ has to include all valid choices of portfolios (or martingale difference sequences) by the encoder, which causes $A$ to be nonconvex. In comparison, the skeptic in \cite{shafer2005probability,shafer2010base} has a convex action space, and cannot influence $y_{i}$ in the same informative manner.
\end{rem}
\smallskip{}

\begin{rem}
\label{rem:bsc_x_tilde}We use $\tilde{X}_{i}$ for the input to the BSC with feedback instead of $X_{i}$ because the input $\tilde{X}_{i}$ to the BSC does not correspond to the $x_{i}$ in the channel coding game. As we will see later, the inputs $x^{n}$ in the game correspond to the ``noncausal inputs'' that are chosen at the beginning (e.g., the inputs to a BSC without feedback). The ``causal inputs'' which can depend on the feedback, like the $\tilde{X}_{i}$ in this section, are included in the portfolios $w_{i}$.
\end{rem}
\smallskip{}

\begin{rem}
We can also design a game where the maximum loss $\epsilon$ corresponds to the average error probability $L^{-1}\sum_{m=1}^{L}\mathbb{P}(g_{n}(Y^{n})\neq m|M=m)$ instead of the maximum error probability. We insert a step at the beginning of Definition \ref{def:game_channel} where the encoder produces a portfolio $b\in W_{M}$ in the pricing DC cone $W_{M}=\{b\in\mathbb{R}^{[L]}:\,\sum_{m=1}^{L}b(m)\le0\}$, and lets the adversary observe $b$. At the end, the team wins if $b(m)+\sum_{i=1}^{n}w_{i}(y_{i})-\mathbf{1}\{\hat{m}\neq m\}\ge-\epsilon$.
\end{rem}
\smallskip{}

\section{The Calculus of Pricing DC Cones}\label{sec:dc_cones}

\subsection{Examples of DC Cones}

We will introduce the machinery needed to state and prove the game-theoretic channel coding theorem in Section \ref{sec:capacity}. Since our goal is to develop a non-probabilistic information theory, we will not base our machinery on probability, but rather on pricing DC cones. Recall that a pricing DC cone $A\in\mathrm{DCCone}(\mathcal{Y})\subseteq\mathbb{R}^{\mathcal{Y}}$ is a set of portfolios we can choose. It describes the following simple game: we (the encoder) choose the portfolio $a\in A$, the adversary sees $a$ and chooses the outcome $y\in\mathcal{Y}$, and then our payoff is $a(y)$.

The prototypical example of pricing DC cones are halfspaces. For a probability vector $p\in\Delta_{\mathcal{Y}}$ (for finite $\mathcal{Y}$), its \emph{halfspace DC cone} is 
\begin{equation}
p^{\circ}:=\{a\in\mathbb{R}^{\mathcal{Y}}:\langle p,a\rangle\le0\}\in\mathrm{DCCone}(\mathcal{Y}).\label{eq:halfspace}
\end{equation}
Intuitively, $a\in p^{\circ}$ means that the portfolio does not have a positive ``expected payoff'' when the outcome follows the probability distribution $p$.  There is a one-to-one correspondence between probability vectors and halfspace DC cones,\footnote{A more general correspondence between probability measures and certain convex pricing cones was noted in \cite{shafer2005probability,shafer2010base}.} though many interesting pricing DC cones are not halfspaces, such as (\ref{eq:bsc_feedback}). The notion of pricing DC cones is a strict generalization of probability vectors. 

The settings in this paper are non-probabilistic, so ``expected payoff'' has no operational meaning. For the sake of presentation, we will often use loose probabilistic analogies to explain pricing DC cones. We still avoid probability in the definitions of coding games.   

Pricing DC cones can be combined via the following operations.

\smallskip{}

\begin{defn}
\label{def:operations} For $A,B\in\mathrm{DCCone}(\mathcal{Y})$:
\begin{itemize}
\item Their \emph{sum} is $A+B$ (Minkowski sum). This means we can combine the portfolios in $A$ and $B$.
\item Their \emph{union} is $A\cup B$. This corresponds to our choice of $A$ or $B$ , i.e., we can either choose a portfolio from $A$ or from $B$, but not both.
\item Their \emph{intersection} is $A\cap B$. This corresponds to the adversary's choice of $A$ or $B$, i.e., we choose portfolios $a\in A$, $b\in B$, and upon revealing $y$, the adversary chooses the worse payoff $\min\{a(y),b(y)\}$.\footnote{Due to downward-closure, $s\in A\cap B$ if and only if there exist $a\in A$, $b\in B$ such that $s(y)=\min\{a(y),b(y)\}$.}
\end{itemize}
\end{defn}
\smallskip{}

Since DC cones can model channels, these operations can also be interpreted as operations on channels. For example, the DC cone $A$ in (\ref{eq:bsc_feedback}) can be expressed as the union of two halfspace DC cones: $A=(1-\beta,\beta)^{\circ}\cup(\beta,1-\beta)^{\circ}$, where $(1-\beta,\beta)^{\circ}$ is the halfspace DC cone of $\mathrm{Bern}(\beta)$, and $(\beta,1-\beta)^{\circ}$ is the halfspace DC cone of $\mathrm{Bern}(1-\beta)$. Hence, choosing a portfolio $a\in A$ means the encoder chooses $\mathrm{Bern}(\beta)$ or $\mathrm{Bern}(1-\beta)$, which is precisely the meaning of a $\mathrm{BSC}(\beta)$. Refer to Section \ref{sec:single_cone} for more operations.

The worst DC cone is the \emph{empty DC cone} $\emptyset\in\mathrm{DCCone}(\mathcal{Y})$, meaning that we cannot choose any portfolio. Our payoff is $-\infty$, so playing this game will bankrupt us. The second worst DC cone is the \emph{nonpositive DC cone} $\mathbb{R}_{\le0}^{\mathcal{Y}}$, meaning that we can only choose portfolios with nonpositive payoffs, so we should simply select $0\in\mathbb{R}_{\le0}^{\mathcal{Y}}$ and have zero payoff. The adversary can choose any outcome $y$. The most useful DC cone is the \emph{full DC cone} $\mathbb{R}^{\mathcal{Y}}$, meaning that we can choose any portfolio, even the portfolio $a(y)=\gamma$ for arbitrarily large $\gamma$, which guarantees a payoff $\gamma$ regardless of $y$, so we can perform arbitrage. 

The interesting DC cones are in between. For example, the \emph{noiseless DC cone} 
\[
\mathbb{R}_{\ngtr0}^{\mathcal{Y}}:=\mathbb{R}^{\mathcal{Y}}\backslash\mathbb{R}_{>0}^{\mathcal{Y}}
\]
contains all portfolios where the payoffs are not all positive,\footnote{To justify the notation $\mathbb{R}_{\ngtr0}^{\mathcal{Y}}$, note that $\mathbb{R}_{\ngtr0}^{\mathcal{Y}}=\{a\in\mathbb{R}^{\mathcal{Y}}:\,a>0\;\text{is false}\}$, where $a>0$ denotes entrywise comparison.} so no guaranteed arbitrage is possible. If we want to force the outcome to be $y=y_{0}$ for some $y_{0}\in\mathcal{Y}$, we can choose the portfolio $a\in\mathbb{R}_{\ngtr0}^{\mathcal{Y}}$, $a(y)=\gamma\mathbf{1}\{y\neq y_{0}\}$ for a large $\gamma$, so the adversary must choose $y=y_{0}$ to stop us from having an arbitrarily large payoff. Hence, the DC cone $\mathbb{R}_{\ngtr0}^{\mathcal{Y}}$ is like a ``noiseless channel'' where we can choose $y$. Refer to Figure \ref{fig:cones}.

\begin{figure*}
\centering
\includegraphics[scale=0.95]{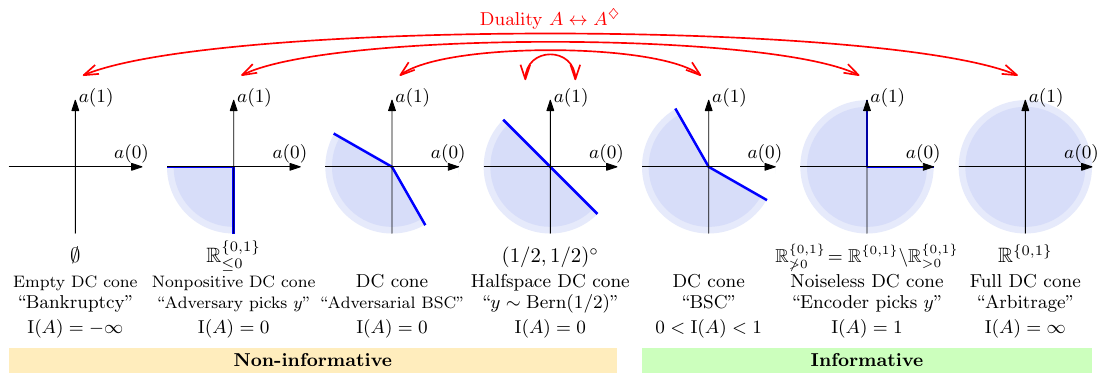}

\caption{Examples of payoff DC cones $A\in\mathrm{DCCone}(\{0,1\})$, and their information capacities $\mathrm{I}(A)$ (Definition \ref{def:info_capacity}).}\label{fig:cones}

\end{figure*}

More generally, we call $A\in\mathrm{DCCone}(\mathcal{Y})$ \emph{informative} if $\mathrm{hull}(A)=\mathbb{R}^{\mathcal{Y}}$, where $\mathrm{hull}(A)$ is the convex hull of $A$. For finite $\mathcal{Y}$, $A$ is informative if and only if $A$ is not contained in a halfspace. Halfspace DC cones and $\mathbb{R}_{\le0}^{\mathcal{Y}}$ are non-informative, whereas $\mathbb{R}^{\mathcal{Y}}$, $\mathbb{R}_{\ngtr0}^{\mathcal{Y}}$ (for $|\mathcal{Y}|\ge2$) and (\ref{eq:bsc_feedback}) are informative. Loosely speaking, halfspace DC cones do not convey information since the adversary can ``choose $y$ at random'' to guarantee nonpositive ``expected payoff'' regardless of the choice of $a\in A$. Therefore, if the encoder wants to influence $y$ by choosing a portfolio $a\in A$, $A$ cannot be a subset of a halfspace.\footnote{In contrast, a convex pricing cone \cite{shafer2005probability,shafer2010base} is convex and cannot be $\mathbb{R}^{\mathcal{Y}}$, so it cannot be informative.} More discussions on informative DC cones will be included in Section \ref{sec:capacity}.

\smallskip{}

\subsection{Duality}

A duality between DC cones is defined as follows.

\smallskip{}

\begin{defn}
\label{def:dual} For $A\in\mathrm{DCCone}(\mathcal{Y})$, its \emph{dual DC cone} is\footnote{Note that dual DC cone is not the same as the conventional concept of dual cone $A^{*}=\{b\in\mathbb{R}^{\mathcal{Y}}:\,\forall a\in A.\langle a,b\rangle\ge0\}$. The conventional dual cone is not a duality between DC cones, since the dual cone of a DC cone may not be a DC cone.}
\[
A^{\diamondsuit}:=\{\tilde{a}\in\mathbb{R}^{\mathcal{Y}}:\,\forall a\in A.\exists y\in\mathcal{Y}.\,a(y)+\tilde{a}(y)\le0\}.
\]
For a game-theoretic channel $F:\mathcal{X}\to\mathrm{DCCone}(\mathcal{Y})$, write $F^{\diamondsuit}:\mathcal{X}\to\mathrm{DCCone}(\mathcal{Y})$, $F^{\diamondsuit}(x)=(F(x))^{\diamondsuit}$.
\end{defn}
\smallskip{}

In other words, $A^{\diamondsuit}$ is the largest subset of $\mathbb{R}^{\mathcal{Y}}$ such that $A+A^{\diamondsuit}\subseteq\mathbb{R}_{\ngtr0}^{\mathcal{Y}}$. Intuitively, $A^{\diamondsuit}$ corresponds to swapping the roles of us and the adversary. Consider the game corresponding to $A$ with the roles flipped: 1) adversary chooses $a\in A$, 2) we choose $y\in\mathcal{Y}$ according to $a$, 3) the adversary gains $a(y)$ (this is a zero-sum game so we gain $-a(y)$).  We can have a payoff function $\tilde{a}\in\mathbb{R}^{\mathcal{Y}}$ if, regardless of the adversary's choice of $a$, we can select $y$ such that $\tilde{a}(y)\le-a(y)$. This gives the definition of $A^{\diamondsuit}$. Note that $\emptyset$ (empty DC cone, bankruptcy) and $\mathbb{R}^{\mathcal{Y}}$ (full DC cone, arbitrage) are duals of each other. $\mathbb{R}_{\le0}^{\mathcal{Y}}$ (adversary chooses $y$) and $\mathbb{R}_{\ngtr0}^{\mathcal{Y}}$ (we choose $y$) are duals of each other. A halfspace DC cone $p^{\circ}$ is the dual of itself.\footnote{If $\tilde{a}\in p^{\circ}$, then for any $a\in p^{\circ}$, since $\langle p,a+\tilde{a}\rangle\le0$, there exists $y$ with $a(y)+\tilde{a}(y)\le0$, so $\tilde{a}\in(p^{\circ})^{\diamondsuit}$. If $\tilde{a}\in(p^{\circ})^{\diamondsuit}$, then since $\langle p,\tilde{a}\rangle-\tilde{a}\in p^{\circ}$, there exists $y$ such that $\langle p,\tilde{a}\rangle-\tilde{a}(y)+\tilde{a}(y)\le0$, so $\tilde{a}\in p^{\circ}$.}  See Sections \ref{subsec:adv_feedback}, \ref{subsec:avcf} and \ref{sec:adv_cost} for the operational meanings of duality. We give some properties of the dual DC cone. The proof is in Appendix \ref{subsec:pf_double_dual}.

\smallskip{}

\begin{prop}
\label{prop:double_dual}Consider the box topology over $\mathbb{R}^{\mathcal{Y}}$.\footnote{The box topology is generated by the base sets $\{a\in\mathbb{R}^{\mathcal{Y}}:\forall y.\,a(y)\in S(y)\}$ for $S:\mathcal{Y}\to\mathcal{O}$, where $\mathcal{O}$ is the set of open subsets of $\mathbb{R}$.} For $A,B\in\mathrm{DCCone}(\mathcal{Y})$, we have:
\begin{itemize}
\item The closure of $A$ is $\mathrm{cl}(A)=\{c\in\mathbb{R}^{\mathcal{Y}}:\,\forall c'\in\mathbb{R}^{\mathcal{Y}}.(c'<c\,\to\,c'\in A)\}$, where $c'<c$ denotes pointwise comparison.
\item $A^{\diamondsuit}$ is closed, and $A^{\diamondsuit\diamondsuit}=\mathrm{cl}(A)$.
\item $(A\cup B)^{\diamondsuit}=A^{\diamondsuit}\cap B^{\diamondsuit}$ and $(A\cap B)^{\diamondsuit}=A^{\diamondsuit}\cup B^{\diamondsuit}$. 
\end{itemize}
\end{prop}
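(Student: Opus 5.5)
For the closure, I will work directly with the box topology and use downward closedness. Write $C:=\{c:\forall c'<c.\,c'\in A\}$.

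If $c\in\mathrm{cl}(A)$ and $c'<c$, consider the box neighborhood $\prod_y(c'(y),\infty)$ of $c$. It meets $A$ at some $a>c'$, and downward closure then gives $c'\in A$. So $\mathrm{cl}(A)\subseteq C$.

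Conversely, let $c\in C$ and let $\prod_y S(y)$ be a basic neighborhood of $c$. I pick $c'(y)\in S(y)$ with $c'(y)<c(y)$ for each $y$, choosing independently per coordinate, which is exactly why the box topology is used. Then $c'<c$, so $c'\in A$, and the neighborhood meets $A$.

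For closedness of $A^{\diamondsuit}$, I first note that $A^{\diamondsuit}$ is indeed a DC cone: it is downward closed, and scaling works because $A$ is a cone. Then I check $A^{\diamondsuit}\supseteq$ its closure using the characterization just proved. Suppose $\tilde a'\in A^{\diamondsuit}$ for all $\tilde a'<\tilde a$. Given $a\in A$, I want some $y$ with $a(y)+\tilde a(y)\le0$.

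Suppose not, so $a+\tilde a>0$ pointwise. Set $d:=(a+\tilde a)/2>0$ and $\tilde a':=\tilde a-d<\tilde a$. Then $a+\tilde a'=d>0$ everywhere, which contradicts $\tilde a'\in A^{\diamondsuit}$. Here I use a positive function $d$, not a constant, so the argument also covers infinite $\mathcal{Y}$.

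For the double dual, the inclusion $A\subseteq A^{\diamondsuit\diamondsuit}$ is immediate from the symmetric definition. Closedness then gives $\mathrm{cl}(A)\subseteq A^{\diamondsuit\diamondsuit}$.

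The converse is the main obstacle. Take $c\notin\mathrm{cl}(A)$, so some $c'<c$ has $c'\notin A$. I exhibit $\tilde a:=-c'$ and claim $\tilde a\in A^{\diamondsuit}$. Indeed, if $a\in A$ had $a(y)-c'(y)>0$ for all $y$, then $c'<a$ pointwise, and downward closure would give $c'\in A$, a contradiction.

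Now $c+\tilde a=c-c'>0$ everywhere, so $c\notin A^{\diamondsuit\diamondsuit}$. This step is where nonconvexity would usually hurt: no separating hyperplane is available, and the "separator" is the single point $-c'$ instead.

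Finally, $(A\cup B)^{\diamondsuit}=A^{\diamondsuit}\cap B^{\diamondsuit}$ is immediate from the definition, since the condition is "for all $a$ in the union". For the intersection, I first show $A^{\diamondsuit}\cup B^{\diamondsuit}\subseteq(A\cap B)^{\diamondsuit}$, which holds by monotonicity: the dual reverses inclusion.

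For the reverse inclusion, I apply the union identity to $A^{\diamondsuit},B^{\diamondsuit}$. This gives $(A^{\diamondsuit}\cup B^{\diamondsuit})^{\diamondsuit}=\mathrm{cl}(A)\cap\mathrm{cl}(B)\supseteq A\cap B$. Dualizing once more, $(A\cap B)^{\diamondsuit}\subseteq(A^{\diamondsuit}\cup B^{\diamondsuit})^{\diamondsuit\diamondsuit}=\mathrm{cl}(A^{\diamondsuit}\cup B^{\diamondsuit})$.

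The last step is that a finite union of closed sets is closed, so $\mathrm{cl}(A^{\diamondsuit}\cup B^{\diamondsuit})=A^{\diamondsuit}\cup B^{\diamondsuit}$, which completes the proof.
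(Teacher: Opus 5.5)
Your closure characterization, the closedness of $A^{\diamondsuit}$, the identity $A^{\diamondsuit\diamondsuit}=\mathrm{cl}(A)$, the union rule and the easy inclusion $A^{\diamondsuit}\cup B^{\diamondsuit}\subseteq(A\cap B)^{\diamondsuit}$ are all correct and follow the paper's proof; your separator $-c'$ is the paper's step in contrapositive form. The gap is in the hard inclusion $(A\cap B)^{\diamondsuit}\subseteq A^{\diamondsuit}\cup B^{\diamondsuit}$, where your dualization runs the wrong way. Write $P:=(A^{\diamondsuit}\cup B^{\diamondsuit})^{\diamondsuit}=\mathrm{cl}(A)\cap\mathrm{cl}(B)$. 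You know $P\supseteq A\cap B$. Since $X\mapsto X^{\diamondsuit}$ reverses inclusions, this only gives $P^{\diamondsuit}\subseteq(A\cap B)^{\diamondsuit}$, i.e.\ $\mathrm{cl}(A^{\diamondsuit}\cup B^{\diamondsuit})\subseteq(A\cap B)^{\diamondsuit}$. That is the easy inclusion again, not your claimed $(A\cap B)^{\diamondsuit}\subseteq P^{\diamondsuit}$.

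To get the latter along your route, you need the opposite containment $P\subseteq\mathrm{cl}(A\cap B)$, together with $\mathrm{cl}(X)^{\diamondsuit}=X^{\diamondsuit}$. Both hold here. For the first, take $c\in\mathrm{cl}(A)\cap\mathrm{cl}(B)$; your closure characterization, which uses that $A$ and $B$ are downward closed, puts every $c'<c$ in $A\cap B$, so $c\in\mathrm{cl}(A\cap B)$. For the second, when $X$ is downward closed, $\mathrm{cl}(X)^{\diamondsuit}=X^{\diamondsuit\diamondsuit\diamondsuit}=\mathrm{cl}(X^{\diamondsuit})=X^{\diamondsuit}$. The first fact is the real content and cannot be bypassed by formal manipulation of $\diamondsuit$. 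Given the double-dual identity and the union rule, the intersection rule is equivalent to $\mathrm{cl}(A)\cap\mathrm{cl}(B)=\mathrm{cl}(A\cap B)$, and closure does not commute with intersection for general sets.

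The paper avoids this detour with a direct argument. Suppose $c\notin A^{\diamondsuit}\cup B^{\diamondsuit}$, and pick $a\in A$, $b\in B$ with $c+a>0$ and $c+b>0$ pointwise. Then $y\mapsto\min\{a(y),b(y)\}$ lies in $A\cap B$ by downward closedness, and $c+\min\{a,b\}>0$ still holds pointwise, so $c\notin(A\cap B)^{\diamondsuit}$. Either repair completes your proof.

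A minor side point: in the closedness step you only need $A^{\diamondsuit}$ to be downward closed. That is fortunate, because for infinite $\mathcal{Y}$ the cone property can fail at $\gamma=0$. Take $\mathcal{Y}=\mathbb{Z}_{\ge0}$ and let $A$ be the smallest DC cone containing $y\mapsto1/(y+1)$. Then $-\mathbf{1}\in A^{\diamondsuit}$ but $0\notin A^{\diamondsuit}$.
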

\smallskip{}

\begin{rem}
Let $\mathrm{CDCCone}(\mathcal{Y}):=\{A\in\mathrm{DCCone}(\mathcal{Y}):\,A\;\text{is closed}\}$ (w.r.t. the box topology). By Proposition \ref{prop:double_dual}, we can define a topological space over $\mathbb{R}^{\mathcal{Y}}$ with the collection of closed sets given by $\mathrm{CDCCone}(\mathcal{Y})$. The duality $A\mapsto A^{\diamondsuit}$ is an \emph{involutive lattice antiautomorphism} over $\mathrm{CDCCone}(\mathcal{Y})$, that is, it is involutive ($A^{\diamondsuit\diamondsuit}=A$ for $A\in\mathrm{CDCCone}(\mathcal{Y})$), and is a lattice homomorphism from the lattice $\mathrm{CDCCone}(\mathcal{Y})$ (with join $\cup$ and meet $\cap$) to the opposite lattice of $\mathrm{CDCCone}(\mathcal{Y})$ (with join $\cap$ and meet $\cup$), i.e., $(A\cup B)^{\diamondsuit}=A^{\diamondsuit}\cap B^{\diamondsuit}$ and $(A\cap B)^{\diamondsuit}=A^{\diamondsuit}\cup B^{\diamondsuit}$.
\end{rem}
\smallskip{}

\subsection{Product, Pushforward and Convex Combination}

There are two common notions of the product of probability distributions. For distributions $p_{Y},p_{Z}$ over $\mathcal{Y},\mathcal{Z}$, respectively, their product distribution $p_{Y}\times p_{Z}$ is the joint distribution of $(Y,Z)$ when $Y\sim p_{Y}$ is independent of $Z\sim p_{Z}$. For distribution $p_{Y}$ and conditional distribution $p_{Z|Y}$, their semidirect product $p_{Y}p_{Z|Y}$ is the distribution of $(Y,Z)$ when $Y\sim p_{Y}$, $Z|Y\sim p_{Z|Y}$. For pricing DC cones, there are two notions of product as well, though they do not exactly correspond to the two products of probability distributions. 

\smallskip{}

\begin{defn}
\label{def:operations-2} For $A\in\mathrm{DCCone}(\mathcal{Y})$, $B\in\mathrm{DCCone}(\mathcal{Z})$, $W:\mathcal{X}\to\mathrm{DCCone}(\mathcal{Y})$, $F:\mathcal{Y}\to\mathrm{DCCone}(\mathcal{Z})$:
\begin{itemize}
\item The \emph{product} of $A$ and $B$ is $A\otimes B\in\mathrm{DCCone}(\mathcal{Y}\times\mathcal{Z})$ defined as 
\begin{align*}
A\otimes B & :=\big\{((y,z)\mapsto a(y)+b(z))\in\mathbb{R}^{\mathcal{Y}\times\mathcal{Z}}:\\
 & \quad\quad\;a\in A,b\in B\big\}\,+\,\mathbb{R}_{\le0}^{\mathcal{Y}\times\mathcal{Z}}.
\end{align*}
This means we choose a portfolio $a\in A$ (whose payoff depends only on $y$) and a portfolio $b\in B$ (whose payoff depends on $z$), and combine $a$ and $b$.
\item The \emph{semidirect product} of $A$ and $F$ is $A\ltimes F\in\mathrm{DCCone}(\mathcal{Y}\times\mathcal{Z})$,
\begin{align*}
A\ltimes F & :=\big\{((y,z)\mapsto a(y)+f(y,z))\in\mathbb{R}^{\mathcal{Y}\times\mathcal{Z}}:\\
 & \quad\quad\,a\in A,\,f\in\mathbb{R}^{\mathcal{Y}\times\mathcal{Z}},\,\forall y.\,f(y,\cdot)\in F(y)\big\},
\end{align*}
where $f(y,\cdot)=(z\mapsto f(y,z))\in\mathbb{R}^{\mathcal{Z}}$. This means we can choose a portfolio $a\in A$, and depending on the value of $y$, choose one more portfolio $f(y,\cdot)\in F(y)$, and combine $a$ and $f(y,\cdot)$. In other words, $A\ltimes F$ describes this sequential game: 1) choose $a\in A$, 2) observe $y$, gain $a(y)$, choose $b\in F(y)$, 3) observe $z$, gain $b(z)$.
\item The \emph{semidirect product} of $A$ and $B$ is $A\ltimes B:=A\ltimes F$ where $F:\mathcal{Y}\to\mathrm{DCCone}(\mathcal{Z})$, $F(y)=B$. Note that $A\otimes B\subseteq A\ltimes B$, but the inclusion can be strict since in $A\ltimes B$, we are allowed to base the choice of the portfolio $b\in B$ on the value of $y$. 
\item Write $A^{\ltimes n}:=A\ltimes\cdots\ltimes A$ ($n$ times), and $W^{\ltimes n}:\mathcal{X}^{n}\to\mathrm{DCCone}(\mathcal{Y}^{n})$, $W^{\ltimes n}(x^{n}):=W(x_{1})\ltimes\cdots\ltimes W(x_{n})$. Similar for $A^{\otimes n}$ and $W^{\otimes n}$.
\end{itemize}
\end{defn}
\smallskip{}

We remark that the notion of product distribution corresponds to the semidirect product DC cone, not the product DC cone, i.e., for distributions $p_{Y},p_{Z}$, we have $(p_{Y}\times p_{Z})^{\circ}=p_{Y}^{\circ}\ltimes p_{Z}^{\circ}$. Loosely speaking, the product DC cone $p_{Y}^{\circ}\otimes p_{Z}^{\circ}$ corresponds to the set of couplings of $p_{Y}$ and $p_{Z}$, in the sense that the adversary can generate $(Y,Z)$ in a coupled manner as long as $Y\sim p_{Y}$ and $Z\sim p_{Z}$, and it will guarantee nonpositive ``expected payoff'' for any portfolio $a\in p_{Y}^{\circ}\otimes p_{Z}^{\circ}$.

For a probability distribution $p_{Y}$ and a function $f$, we can define the pushforward $f\#p_{Y}$ which is the distribution of $f(Y)$ when $Y\sim p_{Y}$. More generally, for a conditional distribution $p_{Z|Y}$, we can define the pushforward $p_{Z|Y}\#p_{Y}$ which is the distribution of $Z$ when $Y\sim p_{Y}$, $Z|Y\sim p_{Z|Y}$. The analogous operations for pricing DC cones are defined as follows.

\smallskip{}

\begin{defn}
\label{def:pushforward} For $A\in\mathrm{DCCone}(\mathcal{Y})$, functions $f:\mathcal{Y}\to\mathcal{Z}$ and $F:\mathcal{Y}\to\mathrm{DCCone}(\mathcal{Z})$:
\begin{itemize}
\item The \emph{pushforward} of $A$ by $f$ is $f\#A\in\mathrm{DCCone}(\mathcal{Z})$, 
\[
f\#A:=\big\{ b\in\mathbb{R}^{\mathcal{Z}}:\,(y\mapsto b(f(y)))\in A\big\}.
\]
This means the original outcome $y$ is mapped to a new outcome $z=f(y)$.  In terms of channels, this is degrading a channel by applying $f$ to its output.
\item The \emph{pushforward} of $A$ by $F$ is $F\#A\in\mathrm{DCCone}(\mathcal{Z})$, 
\[
F\#A:=\pi_{2}\#(A\ltimes F),
\]
where $\pi_{2}:\mathcal{Y}\times\mathcal{Z}\to\mathcal{Z}$, $\pi_{2}(y,z)=z$. This generalize the pushforward by a function $f:\mathcal{Y}\to\mathcal{Z}$, since $f\#A=F\#A$ when $F(y)=\mathbf{e}_{f(y)}^{\circ}$ (where $\mathbf{e}_{z}^{\circ}\in\Delta_{\mathcal{Z}}$ is the vector with entry $1$ at position $z$, and $0$ elsewhere). In terms of channels, this is analogous to stochastically degrading a channel by applying a randomized mapping to its output \cite{bergmans1973random,gallager1974capacity}.
\end{itemize}
\end{defn}
\smallskip{}

Probability distributions form a convex set, where we can take a convex combination $(1-\lambda)p_{1}+\lambda p_{2}$ of distributions $p_{1},p_{2}$. We can take convex combinations of pricing DC cones as well, though the correct way to combine DC cones is via the min-plus algebra, where we use minimum and addition instead of addition and multiplication. 

\smallskip{}

\begin{defn}
\label{def:min_plus} For $A,B\in\mathrm{DCCone}(\mathcal{Y})$ and $0\le\lambda\le1$, the \emph{min-plus combination} of $A$ and $B$ is
\[
A\stackrel{\lambda}{\oplus}B:=F\#((1-\lambda,\lambda)^{\circ})\in\mathrm{DCCone}(\mathcal{Y}),
\]
where $F:[2]\to\mathrm{DCCone}(\mathcal{Y})$, $F(1)=A$, $F(2)=B$. More explicitly,
\begin{align*}
A\stackrel{\lambda}{\oplus}B & =\big\{(y\mapsto\min\{a(y)-\lambda t,\,b(y)+(1-\lambda)t\}):\\
 & \quad\quad\,a\in A,\,b\in B,\,t\in\mathbb{R}\big\}.
\end{align*}
\end{defn}
\smallskip{}

Very loosely speaking, $F\#((1-\lambda,\lambda)^{\circ})$ means that we flip a biased coin with probability of heads being $\lambda$, and choose a portfolio in $A$ in case of tails, or a portfolio in $B$ in case of heads. We can check that the min-plus combination indeed corresponds to convex combinations of distributions, in the sense that $((1-\lambda)p_{1}+\lambda p_{2})^{\circ}=p_{1}^{\circ}\stackrel{\lambda}{\oplus}p_{2}^{\circ}$. 

The following proposition shows that duality distributes over most of these operations. The proof is in Appendix \ref{subsec:pf_dual_commute}.

\smallskip{}

\begin{prop}
\label{prop:dual_commute}For $A,B\in\mathrm{DCCone}(\mathcal{Y})$, functions $f:\mathcal{Y}\to\mathcal{Z}$ and $F:\mathcal{Y}\to\mathrm{DCCone}(\mathcal{Z})$ with finite $\mathcal{Y},\mathcal{Z}$ and $F(y)\notin\{\emptyset,\mathbb{R}^{\mathcal{Z}}\}$ for all $y$, we have $(A\ltimes F)^{\diamondsuit}=A^{\diamondsuit}\ltimes F^{\diamondsuit}$, $(A\ltimes B)^{\diamondsuit}=A^{\diamondsuit}\ltimes B^{\diamondsuit}$, $(f\#A)^{\diamondsuit}=f\#A^{\diamondsuit}$, $(F\#A)^{\diamondsuit}=F^{\diamondsuit}\#A^{\diamondsuit}$. Also, if $A,B\notin\{\emptyset,\mathbb{R}^{\mathcal{Z}}\}$, then $(A\stackrel{\lambda}{\oplus}B)^{\diamondsuit}=A^{\diamondsuit}\stackrel{\lambda}{\oplus}B^{\diamondsuit}$ for $0\le\lambda\le1$. 
\end{prop}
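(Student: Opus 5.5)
The core identity to establish is $(A\ltimes F)^{\diamondsuit}=A^{\diamondsuit}\ltimes F^{\diamondsuit}$. The other three identities then follow from it together with the simple pushforward rule. The case $(A\ltimes B)^{\diamondsuit}$ is the constant-$F$ special case. The rule $(f\#A)^{\diamondsuit}=f\#A^{\diamondsuit}$ is checked directly. For $\tilde b\in\mathbb{R}^{\mathcal{Z}}$, the condition $\tilde b\in(f\#A)^{\diamondsuit}$ reads: for every $b$ with $b\circ f\in A$ there is $z$ with $b(z)+\tilde b(z)\le0$. Membership $\tilde b\circ f\in A^{\diamondsuit}$ reads: for every $a\in A$ there is $y$ with $a(y)+\tilde b(f(y))\le0$. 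Given $a\in A$, set $b(z):=\max_{y\in f^{-1}(z)}a(y)$, or a very negative value off the image. By downward closedness, $a\le b\circ f$ gives $b\circ f\in A$ only in the reverse direction, so instead I will use the equivalent closure-free formulation. $\tilde b\in(f\#A)^{\diamondsuit}$ iff for all $a\in A$ there is $y$ with $a(y)+\tilde b(f(y))\le0$. This holds because $b\circ f$ ranges over exactly those elements of $A$ that are constant on fibers, and any $a\in A$ is dominated by such an element only if we allow the fiberwise max. To avoid this subtlety I will route the proof through the double-dual characterization of Proposition \ref{prop:double_dual} where needed. The pushforward $(F\#A)^{\diamondsuit}=\pi_2\#(A\ltimes F)^{\diamondsuit}=F^{\diamondsuit}\#A^{\diamondsuit}$ then follows by composing the two rules. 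The min-plus case is the instance $F\#((1-\lambda,\lambda)^{\circ})$ with self-duality of halfspaces, using $A,B\notin\{\emptyset,\mathbb{R}^{\mathcal{Z}}\}$ so that $F$ meets the hypothesis.

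For the semidirect product I will argue game-theoretically, with finiteness supplying compactness. For the inclusion $\supseteq$, let $\tilde a\in A^{\diamondsuit}$ and $\tilde f(y,\cdot)\in F^{\diamondsuit}(y)$, and take any $a+f\in A\ltimes F$. First I pick $y$ with $a(y)+\tilde a(y)\le0$, which is possible since $\tilde a\in A^{\diamondsuit}$. Then I pick $z$ with $f(y,z)+\tilde f(y,z)\le0$. Summing gives $(a+f+\tilde a+\tilde f)(y,z)\le0$. Any downward slack only helps.

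The hard direction is $\subseteq$. Given $g\in(A\ltimes F)^{\diamondsuit}$, I must decompose $g(y,z)=\tilde a(y)+\tilde f(y,z)$ with each part in the right dual. The natural choice is to define, for each $y$,
\[
\tilde a(y):=\inf\{t\in\mathbb{R}:\,g(y,\cdot)-t\in F(y)^{\diamondsuit}\},\qquad \tilde f(y,z):=g(y,z)-\tilde a(y).
\]
Since $F(y)^{\diamondsuit}$ is a closed DC cone, the set of admissible $t$ is an upward-closed closed half-line. It is neither empty nor all of $\mathbb{R}$, because $F(y)\ne\emptyset,\mathbb{R}^{\mathcal{Z}}$ forces $F(y)^{\diamondsuit}\ne\mathbb{R}^{\mathcal{Z}},\emptyset$. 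Hence the infimum is finite and attained, so $\tilde f(y,\cdot)\in F^{\diamondsuit}(y)$.

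It remains to show $\tilde a\in A^{\diamondsuit}$, and I expect this to be the main obstacle. I will argue by contradiction. Suppose some $a\in A$ has $a(y)+\tilde a(y)>0$ for all $y$. For each $y$, the shift $g(y,\cdot)-(\tilde a(y)-\delta)$ lies outside $F(y)^{\diamondsuit}$ for small $\delta>0$. Minimality gives $f(y,\cdot)\in F(y)$ with $f(y,z)+g(y,z)-\tilde a(y)+\delta>0$ for all $z$. I will choose $\delta$ uniformly, using finiteness of $\mathcal{Y}$, with $\delta<\min_y(a(y)+\tilde a(y))$. Then $a(y)+f(y,z)+g(y,z)>a(y)+\tilde a(y)-\delta>0$ for every $(y,z)$, which contradicts $g\in(A\ltimes F)^{\diamondsuit}$. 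The finiteness and non-degeneracy hypotheses are used precisely in the uniform choice of $\delta$ and in the finiteness of $\tilde a$.
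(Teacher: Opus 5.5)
\textbf{Semidirect product and reductions.} This part is correct and is essentially the paper's argument. Your $\tilde a(y)=\inf\{t:\,g(y,\cdot)-t\in F(y)^{\diamondsuit}\}$ coincides with the paper's $b(y)=\sup_{h\in F(y)}\min_{z}(h(z)+g(y,z))$, and your contradiction showing $\tilde a\in A^{\diamondsuit}$ is the same as the paper's. Two small corrections:
\begin{itemize}
\item You do not need a uniform $\delta$; a separate $\delta_y$ for each $y$ works.
\item The finiteness that matters here is that of $\mathcal{Z}$, which you use silently when you pass from $F(y)^{\diamondsuit}\notin\{\emptyset,\mathbb{R}^{\mathcal{Z}}\}$ to finiteness and attainment of $\tilde a(y)$.
\end{itemize}
Your reduction of the $F\#A$ and min-plus identities, via $\pi_2$ and self-duality of halfspaces, also matches the paper.

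\textbf{The gap: $(f\#A)^{\diamondsuit}=f\#A^{\diamondsuit}$ is never proved.} This matters because both the $F\#A$ identity and the min-plus identity are routed through it. Your attempted witness fails in two places:
\begin{itemize}
\item The fiberwise maximum $b(z)=\max_{y\in f^{-1}(z)}a(y)$ gives $b\circ f\ge a$. Downward closedness then says nothing about whether $b\circ f\in A$.
\item A very negative value of $b$ off $f(\mathcal{Y})$ lets the witness $z$ supplied by $\tilde b\in(f\#A)^{\diamondsuit}$ land outside the image, where it carries no information about $a$.
\end{itemize}
Your ``closure-free formulation'' (for all $a\in A$ there is $y$ with $a(y)+\tilde b(f(y))\le0$) is not a reformulation of the hypothesis. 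It is exactly the conclusion $\tilde b\circ f\in A^{\diamondsuit}$. Deferring to the double-dual characterization ``where needed'' supplies no argument.

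\textbf{The fix.} Use the fiberwise minimum and large values off the image.
\begin{itemize}
\item The inclusion $f\#A^{\diamondsuit}\subseteq(f\#A)^{\diamondsuit}$ is immediate. If $b\circ f\in A$ and $\tilde b\circ f\in A^{\diamondsuit}$, a $y$ with $b(f(y))+\tilde b(f(y))\le0$ gives the witness $z=f(y)$.
\item For the converse, let $\tilde b\in(f\#A)^{\diamondsuit}$ and $a\in A$. Set $b(z):=\min_{y\in f^{-1}(z)}a(y)$ for $z\in f(\mathcal{Y})$; the fibers are finite since $\mathcal{Y}$ is finite. Set $b(z):=1-\tilde b(z)$ for $z\notin f(\mathcal{Y})$.
\item Then $b\circ f\le a$, so $b\circ f\in A$ by downward closedness, i.e.\ $b\in f\#A$. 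Hence some $z$ satisfies $b(z)+\tilde b(z)\le 0$.
\item This $z$ must lie in $f(\mathcal{Y})$. The minimizing $y\in f^{-1}(z)$ then satisfies $a(y)+\tilde b(f(y))\le0$, which gives $\tilde b\circ f\in A^{\diamondsuit}$.
\end{itemize}
This is the paper's argument. With this step supplied, the rest of your proof goes through.
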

\smallskip{}

We now introduce three different preorders over pricing DC cones and game-theoretic channels, from the strongest to the weakest. These preorders allow us to compare which pricing DC cone can convey more information.

\smallskip{}

\begin{defn}
\label{def:degraded} For $A\in\mathrm{DCCone}(\mathcal{Y})$, $B\in\mathrm{DCCone}(\mathcal{Z})$, $V:\mathcal{X}\to\mathrm{DCCone}(\mathcal{Y})$, $W:\mathcal{X}\to\mathrm{DCCone}(\mathcal{Z})$:
\begin{itemize}
\item $B$ is \emph{dominated} by $A$ if $\mathcal{Y}=\mathcal{Z}$ and $B\subseteq A$. $W$ is dominated by $V$ (denoted as $W\subseteq V$) if $\mathcal{Y}=\mathcal{Z}$ and $W(x)\subseteq V(x)$ for all $x$. This means $A$ is preferable to $B$ since it offers more choices.
\item $B$ is\emph{ deterministically degraded} w.r.t. $A$ (denoted as $B\sqsubseteq A$) if there exists $f:\mathcal{Y}\to\mathcal{Z}$ such that $B\subseteq f\#A$. $W$ is deterministically degraded w.r.t. $V$ (denoted as $W\sqsubseteq V$) if there exists $f:\mathcal{Y}\to\mathcal{Z}$ such that for all $x$, $W(x)\subseteq f\#V(x)$. This is analogous to degrading a channel by applying a function to the output.
\item $B$ is \emph{nondeterministically degraded} w.r.t. $A$ (denoted as $B\preceq A$) if there exists $F:\mathcal{Y}\to\mathrm{DCCone}(\mathcal{Z})$ with $\mathrm{hull}(F(y))\neq\mathbb{R}^{\mathcal{Z}}$ for all $y$ (i.e., $F$ is \emph{pointwise non-informative}),  such that $B\subseteq F\#A$. $W$ is nondeterministically degraded w.r.t. $V$ (denoted as $W\preceq V$) if there exists a pointwise non-informative $F:\mathcal{Y}\to\mathrm{DCCone}(\mathcal{Z})$ such that for all $x$, $W(x)\subseteq F\#V(x)$. This is analogous to the notion of stochastic degradedness of channels \cite{bergmans1973random,gallager1974capacity} and Blackwell ordering \cite{blackwell1953equivalent}.
\end{itemize}
\end{defn}
\smallskip{}

\subsection{One-Pass Channel Coding Game}

Using the semidirect product, we can express the channel coding game in the following succinct form where the encoder makes decision in one pass. 

\smallskip{}

\begin{defn}
\label{def:game_channel_one} The \emph{one-pass channel coding game} $(W,n,L,\epsilon)$ with channel $W:\mathcal{X}\to\mathrm{DCCone}(\mathcal{Y})$, blocklength $n$, message cardinality $L\in\mathbb{Z}_{>0}$ and maximum loss $0<\epsilon<1$ is defined as follows:
\begin{itemize}
\item Adversary produces $m\in[L]$.
\item Encoder observes $m$, produces $w\in W^{\ltimes n}(\mathcal{X}^{n})$, where $W^{\ltimes n}(\mathcal{X}^{n})\in\mathrm{DCCone}(\mathcal{Y}^{n})$,
\begin{align}
W^{\ltimes n}(\mathcal{X}^{n}) & =\bigcup_{x^{n}\in\mathcal{X}^{n}}W^{\ltimes n}(x^{n})\nonumber \\
 & =\bigcup_{x^{n}\in\mathcal{X}^{n}}W(x_{1})\ltimes\cdots\ltimes W(x_{n})\label{eq:n_use}
\end{align}
is the $n$\emph{-use DC cone}.
\item Adversary observes $w$, produces $y^{n}\in\mathcal{Y}^{n}$.
\item Decoder observes $y^{n}$, produces $\hat{m}\in[L]$.
\item The team wins if $w(y^{n})-\mathbf{1}\{\hat{m}\neq m\}\ge-\epsilon$.
\end{itemize}
\end{defn}
\smallskip{}

To see why Definition \ref{def:game_channel_one} is equivalent to Definition \ref{def:game_channel}, note that choosing $w\in W(x_{1})\ltimes\cdots\ltimes W(x_{n})$ allows the encoder to choose portfolios $a_{i}\in W(x_{i})$ for $i\in[n]$, where the choice of $a_{i}$ can depend on $y^{i-1}$. By choosing $w$, the encoder designs a ``rule-based investing strategy'', which includes the whole decision tree of which portfolio to choose upon which values of $y_{i}$. This strategy replaces the interactive steps in Definition \ref{def:game_channel}.\footnote{It might appear that Definition \ref{def:game_channel_one} has a stronger adversary than Definition \ref{def:game_channel} since Definition \ref{def:game_channel_one} requires the encoder to announce the whole strategy $w$ to the adversary. Nevertheless, this is inconsequential to the definition of a game being a guaranteed win, which always allows the adversary's strategy to depend on the team's strategy.}  Using the channel $W^{\ltimes n}(\mathcal{X}^{n})$ once is equivalent to $n$ uses of the channel $W$. The DC cone $W^{\ltimes n}(\mathcal{X}^{n})$ is usually non-convex and informative.  

We can write the channel coding game in an even more succinct manner using deterministic degradedness (Definition \ref{def:degraded}): the channel coding game is a guaranteed win if and only if 
\begin{equation}
\mathbb{R}_{\ngtr0}^{[L]}\stackrel{\epsilon}{\oplus}\mathbb{R}_{\le0}^{[L]}\,\sqsubseteq\,W^{\ltimes n}(\mathcal{X}^{n}).\label{eq:channel_require}
\end{equation}
To understand this expression, $\mathbb{R}_{\ngtr0}^{[L]}$ is the noiseless DC cone which describes the scenario where the encoder can send $m\in[L]$ noiselessly. $\mathbb{R}_{\le0}^{[L]}$ is the nonpositive DC cone (the dual of $\mathbb{R}_{\ngtr0}^{[L]}$) which describes the scenario where the encoder fails to convey any information. The DC cone $\mathbb{R}_{\ngtr0}^{[L]}\stackrel{\epsilon}{\oplus}\mathbb{R}_{\le0}^{[L]}$ represents the decoding requirement. It is a mixture of the noiseless DC cone (with mixture weight $1-\epsilon$, representing success) and the nonpositive DC cone (with weight $\epsilon$, representing failure).  The deterministic degradedness constraint in (\ref{eq:channel_require}) means that we can simulate the channel $\mathbb{R}_{\ngtr0}^{[L]}\stackrel{\epsilon}{\oplus}\mathbb{R}_{\le0}^{[L]}$ by applying a function to the output of the channel $W^{\ltimes n}(\mathcal{X}^{n})$.  Direct computation gives
\[
\mathbb{R}_{\ngtr0}^{[L]}\stackrel{\epsilon}{\oplus}\mathbb{R}_{\le0}^{[L]}=\big\{ a\in\mathbb{R}^{[L]}:\,\epsilon\max_{m}a(m)+(1-\epsilon)\min_{m}a(m)\le0\big\}.
\]

Pricing DC cones unify the concepts of channels and decoding requirements. Expressing channel coding as a degradedness constraint would not be possible using probabilistic channels.\footnote{Decoding is successful with probability $1-\epsilon$ does not mean that the $L$-ary symmetric channel from $M\in[L]$ to $\hat{M}\in[L]$ with $p(m|m)=1-\epsilon$ and $p(\hat{m}|m)=\epsilon/(L-1)$ for $\hat{m}\neq m$ is degraded compared to the $n$-fold channel $W^{n}$, since error may not occur in a symmetric manner.} This is only possible in (\ref{eq:channel_require}) due to the generality of pricing DC cones. 

\smallskip{}

\section{Capacity Theorem for the Game-Theoretic Channel}\label{sec:capacity}

Recall that a pricing DC cone $A$ is informative if $\mathrm{hull}(A)=\mathbb{R}^{\mathcal{Y}}$. A more informative DC cone allows the encoder to influence $y$ by choosing the portfolio $a\in A$. Consider the following informally defined game: 1) the adversary chooses a ``prior distribution'' $q\in\Delta_{\mathcal{Y}}$; 2) the encoder chooses $a\in A$; 3) the adversary chooses a ``posterior distribution'' $p\in\Delta_{\mathcal{Y}}$ according to $a$ subject to $\langle p,a\rangle\le0$ (the expected payoff when $y\sim p$ is nonpositive); 4) the adversary succeeds if $p\approx q$ (informally). If the adversary succeeds in making $p=q$ (instead of only $p\approx q$), then $A$ is non-informative, since the adversary can choose $q$ without seeing $a$, so the choice of $a$ has no influence on the output. An analogy in statistics is that if the posterior distribution is the same as the prior distribution, then the data gives no new information. For a more informative $A$, the adversary must choose $p$ according to $a$, and $p$ cannot always be close to $q$. This motivates the following measure of informativeness, where ``$p\approx q$'' is measured by the KL divergence.

\smallskip{}

\begin{defn}
\label{def:info_capacity} The \emph{information capacity} of a pricing DC cone $A\in\mathrm{DCCone}(\mathcal{Y})$  is
\[
\mathrm{I}(A):=\inf_{q\in\Delta_{\mathcal{Y}}}\,\sup_{a\in A}\,\inf_{p\in\Delta_{\mathcal{Y}}:\,\langle p,a\rangle\le0}D(p\Vert q),
\]
where $\Delta_{\mathcal{Y}}$ is the set of finite distributions over $\mathcal{Y}$, and $D(p\Vert q):=\sum_{y\in\mathrm{supp}(p)}p(y)\log(p(y)/q(y))$ is the KL divergence (which is $\infty$ if $\mathrm{supp}(p)\nsubseteq\mathrm{supp}(q)$). We take $\mathrm{I}(\emptyset)=-\infty$ and $\mathrm{I}(\mathbb{R}^{\mathcal{Y}})=\infty$. 
\end{defn}
\smallskip{}

We can now state the main result of this paper, which shows that the capacity of the game-theoretic channel (Definition \ref{def:capacity}) is upper-bounded by, and is sometimes equal to, the information capacity of the range DC cone $W(\mathcal{X})$. 

\smallskip{}

\begin{thm}
[Game-theoretic channel coding theorem]\label{thm:cap}The capacity of the game-theoretic channel $W:\mathcal{X}\to\mathrm{DCCone}(\mathcal{Y})$ (where $\mathcal{X},\mathcal{Y}$ are finite) satisfies
\[
\mathrm{C}(W)\le\mathrm{I}(W(\mathcal{X})),
\]
where $W(\mathcal{X})=\bigcup_{x\in\mathcal{X}}W(x)$. Equality holds if $\mathbb{R}^{\mathcal{Y}}\backslash W(x)$ is open and convex for all $x$ (we call such $W$ \emph{pointwise closed and concave}). 
\end{thm}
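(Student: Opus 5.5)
The plan has two halves: a converse $\mathrm{C}(W)\le\mathrm{I}(W(\mathcal{X}))$, obtained by letting the adversary play a randomized strategy derived from Definition \ref{def:info_capacity}, and an achievability proof, obtained by reducing the pointwise closed and concave case to an ordinary discrete memoryless channel and invoking the ``only if'' construction (the nonconvex martingale strategy) from the proof of Proposition \ref{prop:bsc}.

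\emph{Converse.} Assume $\mathrm{I}:=\mathrm{I}(W(\mathcal{X}))<\infty$ and fix $\delta>0$. Choose $q\in\Delta_{\mathcal{Y}}$ with $\sup_{a\in W(\mathcal{X})}\inf_{p:\langle p,a\rangle\le0}D(p\Vert q)\le\mathrm{I}+\delta$. Then choose a selection $a\mapsto p_a$ with $\langle p_a,a\rangle\le0$ and $D(p_a\Vert q)\le\mathrm{I}+2\delta$ for every $a\in W(\mathcal{X})$. Given a winning strategy for $(W,n,L,\epsilon)$, analyze a random adversary that, after seeing $w_i$, draws $Y_i\sim p_{w_i}$. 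Since $w_i\in W(x_i)\subseteq W(\mathcal{X})$, we have $\mathbb{E}[w_i(Y_i)\mid Y^{i-1}]\le0$. Exactly as in the ``if'' direction of Proposition \ref{prop:bsc}, the guaranteed win then gives $P_m(\hat m(Y^n)\neq m)\le\epsilon$ for every $m$, where $P_m$ is the law of $Y^n$ given $m$.

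Let $\mathcal{D}_m=\{y^n:\hat m(y^n)=m\}$. These decoding regions are disjoint, so some $m$ has $q^{\otimes n}(\mathcal{D}_m)\le1/L$. By the chain rule, $D(P_m\Vert q^{\otimes n})=\sum_i\mathbb{E}[D(p_{w_i}\Vert q)]\le n(\mathrm{I}+2\delta)$. Data processing on the indicator of $\mathcal{D}_m$ then yields $(1-\epsilon)\log L-1\le n(\mathrm{I}+2\delta)$. With $L=\lfloor 2^{nR}\rfloor$, letting $\epsilon\to0$, $n\to\infty$ and then $\delta\to0$ gives $R\le\mathrm{I}$. The only care needed here is that $p_a$ is an arbitrary (choice) function; since $\mathcal{Y}$ is finite and the strategies are finite trees, no measurability issue arises.

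\emph{Achievability.} If $\mathbb{R}^{\mathcal{Y}}\backslash W(x)$ is open and convex, it is also upward closed and a cone. Separating each $a\in W(x)$ from this complement by a nonnegative functional should then give
\[
W(x)=\bigcup_{p\in P_x}p^{\circ}
\]
for some closed set $P_x\subseteq\Delta_{\mathcal{Y}}$. So choosing a portfolio in $W(x)$ amounts to choosing any distribution $p\in P_x$ (the BSC cone (\ref{eq:bsc_feedback}) is the model case). Writing $P=\bigcup_xP_x$, one has $W(\mathcal{X})=\bigcup_{p\in P}p^{\circ}$. I would next show that $\mathrm{I}(W(\mathcal{X}))=\inf_q\sup_{p\in P}D(p\Vert q)$, which is the divergence radius of the compact set $P$. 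For a fixed $p\in P$, the inner quantity $\sup_{a\in p^{\circ}}\inf_{p':\langle p',a\rangle\le0}D(p'\Vert q)$ equals $D(p\Vert q)$:
\begin{itemize}
\item the bound $\le D(p\Vert q)$ follows by taking $p'=p$;
\item the bound $\ge D(p\Vert q)$ follows by taking $a=\log(p/q)-D(p\Vert q)$, which lies in $p^{\circ}$ and makes $\langle p',a\rangle\le0$ force $D(p'\Vert q)\ge D(p\Vert q)$ via $D(p'\Vert q)\ge\langle p',\log(p/q)\rangle$.
\end{itemize}
By the Csiszár/Gallager radius--capacity (minimax) theorem, this radius equals $\sup_{\mu}I(U;Y)$ over finitely supported inputs $U\sim\mu$ on $P$, with $Y\mid U=p\sim p$.

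Fix a finite $P'\subseteq P$ whose capacity is close to the radius, and a finitely supported optimal input distribution on it. Shannon's theorem gives a DMC code with input alphabet $P'$, rate near $\mathrm{I}$, and maximal error $\epsilon$. For the game, the encoder maps $m$ to the codeword $(p_1,\dots,p_n)$, picks $x_i$ with $p_i\in P_{x_i}$, and plays the martingale portfolio $w_i(y')=P_e(m,(y^{i-1},y'))-P_e(m,y^{i-1})$. This portfolio lies in $p_i^{\circ}\subseteq W(x_i)$ by the law of total probability, and it wins exactly as in Proposition \ref{prop:bsc}.

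The step I expect to be the main obstacle is the achievability structure: the representation $W(x)=\bigcup_{p\in P_x}p^{\circ}$ (closedness and the separation argument, including compactness of $P_x$) and the identification of $\mathrm{I}(W(\mathcal{X}))$ with the divergence radius of $P$, so that the minimax theorem and the finite approximation $P'$ apply. The converse should be routine once the randomized adversary is in place.
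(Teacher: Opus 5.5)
Your proof is correct, but both halves take a different route from the paper's.

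\textbf{Converse.} The paper never randomizes the adversary. It recasts a guaranteed win as the degradedness relation (\ref{eq:channel_require}). It then chains three facts: additivity $\mathrm{I}(A^{\ltimes n})=n\,\mathrm{I}(A)$, monotonicity of $\mathrm{I}$ under domination and deterministic degradation, and an explicit evaluation of $\mathrm{I}(\mathbb{R}_{\ngtr0}^{[L]}\stackrel{\epsilon}{\oplus}\mathbb{R}_{\le0}^{[L]})$, which amounts to a non-probabilistic Fano inequality. Your argument uses the randomized adversary $Y_i\sim p_{w_i}$, the chain rule against $q^{\otimes n}$, and data processing onto the indicator of $\mathcal{D}_m$. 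This is essentially the unrolled content of the parts of those lemmas that are actually used: the $\le$ half of additivity is proved in the appendix by the same chain rule, and monotonicity under $f\#$ by data processing. Your version is more elementary and self-contained. The paper's version instead exhibits the bound as a consequence of structural properties of the DC-cone calculus.

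\textbf{Achievability.} Both proofs rest on the same separating-hyperplane step, and both finish with a no-feedback DMC code plus the martingale portfolio of Theorem \ref{thm:avcf}. They differ in how the rate is matched to $\mathrm{I}(W(\mathcal{X}))$. The paper separates only the finitely many portfolios in the support of a fixed $p_T$ and builds a polyhedral sub-channel $\tilde W\subseteq W$. It then uses the alternative characterization (\ref{eq:alt_char}), whose proof needs Sion's theorem and a Hausdorff-metric covering argument. You separate every portfolio, write $W(\mathcal{X})=\bigcup_{p\in P}p^{\circ}$, and invoke the capacity--radius identity for an arbitrary family $P$ on a finite alphabet. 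This bypasses (\ref{eq:alt_char}), and it identifies $\mathrm{I}(W(\mathcal{X}))$ transparently as the divergence radius of the distributions implied by the insurance policy.

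\textbf{Small repairs.}
\begin{itemize}
\item In your second bullet the sign of $a$ is reversed. With $a=\log(p/q)-D(p\Vert q)$ and full supports, $p'=q$ satisfies $\langle q,a\rangle=-D(q\Vert p)-D(p\Vert q)\le0$, yet $D(q\Vert q)=0$, so nothing is forced.
\item The right choice is $a=D(p\Vert q)-\log(p/q)$. Then $\langle p',a\rangle\le0$ gives $D(p'\Vert q)=\langle p',\log(p/q)\rangle+D(p'\Vert p)\ge D(p\Vert q)$, at least when $p$ and $q$ have full support.
\item You do not actually need that bullet. Achievability only uses $\mathrm{I}(W(\mathcal{X}))\le\inf_{q}\sup_{p\in P}D(p\Vert q)$, which is your first bullet, and your converse already supplies the reverse inequality.
\item Closedness or compactness of $P_x$ is also unnecessary, since the capacity--radius identity holds for arbitrary families.
\item You should exclude the case $W(x)=\mathbb{R}^{\mathcal{Y}}$ (arbitrage, infinite capacity) before separating. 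There the complement is empty, so no nonzero nonnegative normal exists.
\end{itemize}
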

\smallskip{}

 Theorem \ref{thm:cap} implies the conventional channel coding theorem with or without feedback (Sections \ref{subsec:dmc}, \ref{subsec:dmc_feedback}). For other settings where $W$ may not be pointwise closed and concave, the theorem only gives an upper bound on the capacity. Nevertheless, considering that game-theoretic channel coding generalizes zero-error channel coding (Section \ref{subsec:adv}) which is a major open problem, finding $\mathrm{C}(W)$ for all $W$ would be an even harder problem. 

The achievability proof of Theorem \ref{thm:cap} is deferred to Section \ref{subsec:avcf}. The remainder of this section is dedicated to the converse proof, which is based on the following properties of the information capacity of pricing DC cones. The proofs of these properties are given in Appendix \ref{subsec:pf_i_prop}.

\smallskip{}

\begin{prop}
\label{prop:i_prop}For nonempty $A\in\mathrm{DCCone}(\mathcal{Y})$, $B\in\mathrm{DCCone}(\mathcal{Z})$ where $\mathcal{Y},\mathcal{Z}$ are finite, we have:
\begin{itemize}
\item $\mathrm{I}(A)=0$ if and only if $\mathrm{hull}(A)\neq\mathbb{R}^{\mathcal{Y}}$, i.e., $A$ is non-informative. 
\item (Monotonicity) If $B\preceq A$ (Definition \ref{def:degraded}), then $\mathrm{I}(B)\le\mathrm{I}(A)$.
\item (Additivity) 
\[
\mathrm{I}(A\otimes B)=\mathrm{I}(A\ltimes B)=\mathrm{I}(A)+\mathrm{I}(B).
\]
\item (Alternative characterization) 
\begin{equation}
\mathrm{I}(A)=\sup_{p_{T}}\inf_{p_{Y|T}}I(T;Y),\label{eq:alt_char}
\end{equation}
where $I(T;Y)$ is the mutual information, the supremum is over finite discrete distributions $p_{T}$ over $A$, and the infimum is over conditional distributions $p_{Y|T}$ from $A$ to $\mathcal{Y}$ satisfying that $\langle t,p_{Y|T}(\cdot|t)\rangle\le0$ for every $t\in A$. The infimum is $\infty$ if no such $p_{Y|T}$ exists. 
\item (Generalizing probabilistic Shannon capacity) If $A=\bigcup_{x}(p_{Y|X}(\cdot|x))^{\circ}$, where $p_{Y|X}$ is a conditional distribution from a finite $\mathcal{X}$ to a finite $\mathcal{Y}$, then $\mathrm{I}(A)=\max_{p_{X}}I(X;Y)$ is the capacity of the channel $p_{Y|X}$. 
\end{itemize}
\end{prop}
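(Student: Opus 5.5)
Throughout, write $g(a,q):=\inf_{p\in\Delta_{\mathcal{Y}}:\langle p,a\rangle\le0}D(p\Vert q)$, so that $\mathrm{I}(A)=\inf_{q}\sup_{a\in A}g(a,q)$. The first tool I will establish is a Donsker--Varadhan type dual form of this I-projection:
\[
g(a,q)=\sup_{\lambda\ge0}\Big(-\log\sum_{y}q(y)2^{\lambda a(y)}\Big),
\]
with the convention that the value is $\infty$ when $a>0$ on $\mathrm{supp}(q)$. This is standard Lagrangian duality for minimizing the convex function $D(\cdot\Vert q)$ over $\Delta_{\mathcal{Y}}$ subject to one linear constraint, using finiteness of $\mathcal{Y}$. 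Since $A$ is a cone, $\lambda a\in A$, so $\sup_{a\in A}g(a,q)=\sup_{a\in A}\big(-\log\mathbb{E}_{q}2^{a(Y)}\big)$. This reformulation drives most of the items. The second tool is the observation that if $\mathrm{hull}(A)\ne\mathbb{R}^{\mathcal{Y}}$, then $A\subseteq q^{\circ}$ for some $q\in\Delta_{\mathcal{Y}}$. Indeed, a separating functional exists, and downward closedness forces its normal vector to be nonnegative, so it can be normalized to a probability vector.

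\emph{First item.} If $A\subseteq q^{\circ}$, then $p=q$ is feasible for every $a$, so $\mathrm{I}(A)\le0$; trivially $\mathrm{I}\ge0$. Conversely, if $A$ is informative, then for each $q$ some $a\in A$ has $\langle q,a\rangle>0$. The constraint set $\{p:\langle p,a\rangle\le0\}$ is compact and excludes $q$, so $g(a,q)>0$. Positivity of the infimum over $q$ then follows from lower semicontinuity in $q$ and compactness of $\Delta_{\mathcal{Y}}$.

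\emph{Shannon capacity item.} For $a\in p_{x}^{\circ}$ the choice $p=p_{x}$ is feasible, so $g(a,q)\le D(p_{x}\Vert q)$. The portfolio $a=\log(p_{x}/q)-D(p_{x}\Vert q)$, suitably slackened off the support, lies in $p_{x}^{\circ}$ and attains $D(p_{x}\Vert q)$ through the dual form with $\lambda=1$. Hence $\mathrm{I}(A)=\min_{q}\max_{x}D(p_{x}\Vert q)$, which is the classical minimax-redundancy formula for $\max_{p_{X}}I(X;Y)$.

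\emph{Alternative characterization.} I will write $I(T;Y)=\inf_{q}\sum_{t}p_{T}(t)D(p_{Y|T}(\cdot|t)\Vert q)$ and exchange the infimum over $p_{Y|T}$ (taken pointwise in $t$) with the infimum over $q$. This turns the right side of (\ref{eq:alt_char}) into
\[
\sup_{p_{T}}\inf_{q}\sum_{t}p_{T}(t)\,g(t,q).
\]
The objective is linear in $p_{T}$. It is convex in $q$, because it is a partial minimization of the jointly convex $D$ over a convex constraint. It is lower semicontinuous in $q$, by joint lower semicontinuity of $D$ and compactness of the feasible set of $p$. The set $\Delta_{\mathcal{Y}}$ is compact. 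A Ky Fan/Sion minimax theorem (allowing the value $+\infty$) then swaps the order to $\inf_{q}\sup_{t}g(t,q)=\mathrm{I}(A)$. Handling the infinite values in the minimax step is the delicate point here.

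\emph{Monotonicity.} Suppose $B\subseteq F\#A$ with $F$ pointwise non-informative, so $F(y)\subseteq r_{y}^{\circ}$ for some $r_{y}\in\Delta_{\mathcal{Z}}$ by the second tool. By the definition of $F\#A$, each $b\in B$ satisfies $b(z)\le a_{b}(y)+f_{b}(y,z)$ for some $a_{b}\in A$ and $f_{b}(y,\cdot)\in F(y)$. I will work with the alternative characterization. Given $p_{T}$ on $B$, push it to $S=a_{T}$ on $A$, take a near-optimal adversarial channel $p_{Y|S}$ for $A$, and set $p_{Z|T}(z|t)=\sum_{y}p_{Y|S}(y|a_{t})\,r_{y}(z)$. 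Then
\[
\langle p_{Z|t},b_{t}\rangle\le\mathbb{E}\big[a_{t}(Y)\big]+\sum_{y}p_{Y|S}(y|a_{t})\langle r_{y},f(y,\cdot)\rangle\le0,
\]
so this channel is feasible for $B$. Data processing along $T\to S\to Y\to Z$ gives $I(T;Z)\le I(S;Y)$, and hence $\mathrm{I}(B)\le\mathrm{I}(A)$.

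\emph{Additivity.} Since $A\otimes B\subseteq A\ltimes B$, domination gives $\mathrm{I}(A\otimes B)\le\mathrm{I}(A\ltimes B)$. It remains to show $\mathrm{I}(A\ltimes B)\le\mathrm{I}(A)+\mathrm{I}(B)$ and $\mathrm{I}(A\otimes B)\ge\mathrm{I}(A)+\mathrm{I}(B)$.

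For the upper bound I use the primal form with the product prior $q=q_{A}\times q_{B}$, where $q_{A},q_{B}$ are near-optimal. Given $c(y,z)=a(y)+f(y,z)$, choose $p_{Y}$ near-optimal for $a$ against $q_{A}$, and for each $y$ choose $p_{Z|y}$ near-optimal for $f(y,\cdot)$ against $q_{B}$. The product $p_{Y}p_{Z|Y}$ is then feasible for $c$, and the chain rule bounds $D(p_{Y}p_{Z|Y}\Vert q)\le\mathrm{I}(A)+\mathrm{I}(B)+\delta$.

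The lower bound for $\otimes$ is the step I expect to be the main obstacle, because the adversary's prior $q$ on $\mathcal{Y}\times\mathcal{Z}$ may be correlated. I will use the exponential dual form together with a tilting trick. Fix $q$, and pick $a\in A$ with $\mathbb{E}_{q_{Y}}2^{a(Y)}\le2^{-\mathrm{I}(A)+\delta}$. Define the tilted measure $q'(y,z)\propto q(y,z)2^{a(y)}$. Then pick $b\in B$, depending on $a$, with $\mathbb{E}_{q'_{Z}}2^{b(Z)}\le2^{-\mathrm{I}(B)+\delta}$. This is possible because $\sup_{b}$ dominates $\mathrm{I}(B)$ for every prior, in particular for $q'_{Z}$. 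Since
\[
\mathbb{E}_{q}2^{a(Y)+b(Z)}=\mathbb{E}_{q}\big[2^{a(Y)}\big]\,\mathbb{E}_{q'}\big[2^{b(Z)}\big]\le2^{-\mathrm{I}(A)-\mathrm{I}(B)+2\delta},
\]
the dual form yields $\sup_{c\in A\otimes B}g(c,q)\ge\mathrm{I}(A)+\mathrm{I}(B)-2\delta$. Infinite values, such as $\mathrm{I}(A)=\infty$ or $a>0$ on $\mathrm{supp}(q_{Y})$, are handled separately by the same construction.
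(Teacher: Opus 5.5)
Your central tool has the wrong sign. Gibbs' inequality $D(p\Vert q)\ge\langle p,\phi\rangle-\log\sum_{y}q(y)2^{\phi(y)}$, applied with $\phi=-\lambda a$ under the constraint $\langle p,a\rangle\le0$, gives $g(a,q)\ge-\log\sum_{y}q(y)2^{-\lambda a(y)}$. The correct identity is therefore $g(a,q)=\sup_{\lambda\ge0}\bigl(-\log\sum_{y}q(y)2^{-\lambda a(y)}\bigr)$. This version yields $+\infty$ when $a>0$ on $\mathrm{supp}(q)$, so no convention is needed.

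Your version already fails for $a=-\mathbf{1}$: every $p$ is feasible, so $g(a,q)=0$, but your expression equals $\sup_{\lambda\ge0}\lambda=\infty$. Its consequence $\sup_{a\in A}g(a,q)=\sup_{a\in A}(-\log\mathbb{E}_{q}2^{a(Y)})$ would be $+\infty$ for every nonempty DC cone, since $-K\mathbf{1}\in A$ for all $K\ge0$.

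In the Shannon item a second, compensating sign error masks the first. Your portfolio $a=\log(p_x/q)-D(p_x\Vert q)$ has $\langle q,a\rangle=-D(q\Vert p_x)-D(p_x\Vert q)\le0$, so $p=q$ is feasible and $g(a,q)=0$. The portfolio that works is its negative, $D(p_x\Vert q)-\log(p_x/q)$, truncated to a large finite value where $p_x=0$; this is exactly the paper's subgradient portfolio.

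The same problem affects your lower bound for $A\otimes B$. As written, the choice of $a$ is vacuous, because $a=-K\mathbf{1}$ makes $\mathbb{E}_{q_Y}2^{a(Y)}$ arbitrarily small, and the final appeal to the dual form is false. The corrected steps are:
\begin{itemize}
\item choose $a$ with $\mathbb{E}_{q_Y}2^{-a(Y)}\le2^{-\mathrm{I}(A)+\delta}$;
\item tilt by $q'\propto q\,2^{-a(y)}$;
\item choose $b$ with $\mathbb{E}_{q'_Z}2^{-b(Z)}\le2^{-\mathrm{I}(B)+\delta}$;
\item conclude from $\mathbb{E}_q2^{-a(Y)-b(Z)}=\mathbb{E}_q[2^{-a(Y)}]\,\mathbb{E}_{q'}[2^{-b(Z)}]$ and $g(c,q)\ge-\log\mathbb{E}_q2^{-c}$ for $c=a+b\in A\otimes B$.
\end{itemize}

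Once $a$ is replaced by $-a$ in every exponential, your proposal is correct, and in two places it takes a genuinely different route from the paper. Several items use the paper's own ingredients:
\begin{itemize}
\item the first item (compactness);
\item the Shannon item (the subgradient portfolio);
\item the bound $\mathrm{I}(A\ltimes B)\le\mathrm{I}(A)+\mathrm{I}(B)$ (product prior and chain rule);
\item monotonicity (distributions $r_y$ supporting $F(y)$, plus data processing), which you merely rephrase through the alternative characterization instead of via $\mathrm{I}(A\ltimes F)\le\mathrm{I}(A)$ and $\mathrm{I}(\pi_2\#C)\le\mathrm{I}(C)$.
\end{itemize}

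The first genuinely different step is the alternative characterization. The paper applies Sion and Kemperman's information-radius formula only to finitely generated cones. It then approximates a general $A$ by a finite Hausdorff $\epsilon$-net of the sets $a^{\circ\Delta}$, using uniform continuity. You instead eliminate the channel first through $I(T;Y)=\min_q\sum_tp_T(t)D(p_{Y|T}(\cdot|t)\Vert q)$. You then make a single Kneser--Fan swap between finitely supported $p_T$ (which needs no topology) and the compact set of priors, so no covering argument is needed. The $+\infty$ values you flag can be avoided by restricting the prior to $\{(1-\delta)q+\delta u:\,q\in\Delta_{\mathcal{Y}}\}$ with $u$ uniform. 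There $g(t,\cdot)$ is finite, convex and continuous. The cost is at most $\log\frac{1}{1-\delta}$, because $D(p\Vert(1-\delta)q+\delta u)\le D(p\Vert q)+\log\frac{1}{1-\delta}$.

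The second is the bound $\mathrm{I}(A\otimes B)\ge\mathrm{I}(A)+\mathrm{I}(B)$. The paper uses independent $T_A,T_B$ in the alternative characterization and a subgradient of the value function $f(\alpha,\beta)$ at $(0,0)$. Your change-of-measure argument instead works directly in the primal definition against an arbitrary, possibly correlated, prior. It needs only strong duality for the single-constraint I-projection. It also does not rely on that subgradient, whose existence is not automatic when $(0,0)$ lies on the boundary of the domain of $f$.
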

\smallskip{}

The alternative characterization (\ref{eq:alt_char}) gives an interpretation of the upper bound in Theorem \ref{thm:cap} using more familiar probabilistic notions. Intuitively, the encoder chooses $p_{T}$ (a distribution of the choice of portfolio $t\in A$), and the adversary controls $p_{Y|T}$ (how $y$ is chosen based on $t$). The adversary must ensure $\sum_{y}t(y)p_{Y|T}(y|t)\le0$, i.e., the expected payoff when the encoder chooses the portfolio $t\in A$ is nonpositive. Hence, $\mathrm{IC}(W(\mathcal{X}))$ is the minimum capacity among all ``no expected profit'' channels $p_{Y|T}$. This is similar to the arbitrarily varying channel \cite{blackwell1959capacity,ahlswede1978elimination,lapidoth2002reliable,csiszar2002capacity}. Nevertheless, this is not a valid converse proof of Theorem \ref{thm:cap} since game-theoretic channel coding is non-probabilistic, so ``expected profit'' is undefined.  We now give a converse proof of Theorem \ref{thm:cap} using Proposition \ref{prop:i_prop}.
\begin{IEEEproof}
[Upper bound in Theorem \ref{thm:cap}]Assume the channel coding game is a guaranteed win. By (\ref{eq:channel_require}), $\mathbb{R}_{\ngtr0}^{[L]}\stackrel{\epsilon}{\oplus}\mathbb{R}_{\le0}^{[L]}\sqsubseteq W^{\ltimes n}(\mathcal{X}^{n})$. By Proposition \ref{prop:i_prop},
\begin{align*}
 & n\mathrm{I}(W(\mathcal{X}))\\
 & \stackrel{(a)}{=}\mathrm{I}((W(\mathcal{X}))^{\ltimes n})\\
 & \stackrel{(b)}{\ge}\mathrm{I}(W^{\ltimes n}(\mathcal{X}^{n}))\\
 & \stackrel{(c)}{\ge}\mathrm{I}(\mathbb{R}_{\ngtr0}^{[L]}\stackrel{\epsilon}{\oplus}\mathbb{R}_{\le0}^{[L]})\\
 & \stackrel{(d)}{=}\inf_{q\in\Delta_{[L]}}\,\sup_{a\in\{(\hat{m}\mapsto\mathbf{1}\{\hat{m}\neq m\}-\epsilon):m\in[L]\}}\,\inf_{p\in\Delta_{[L]}:\langle a,p\rangle\le0}D(p\Vert q)\\
 & =\inf_{q\in\Delta_{[L]}}\,\max_{m\in[L]}\,\inf_{p\in\Delta_{[L]}:\,p(m)\ge1-\epsilon}D(p\Vert q)\\
 & \stackrel{(e)}{=}\inf_{p\in\Delta_{[L]}:\,p(1)\ge1-\epsilon}D(p\Vert u)\\
 & =\log L-H_{\mathrm{b}}(\epsilon)-\epsilon\log(L-1)\\
 & \ge(1-\epsilon)\log L-1,
\end{align*}
where (a) is by additivity, (b) is by monotonicity and $W(x_{1})\ltimes\cdots\ltimes W(x_{n})\subseteq(W(\mathcal{X}))^{\ltimes n}$, (c) is by monotonicity, (d) is because
\begin{align*}
 & \mathbb{R}_{\ngtr0}^{[L]}\stackrel{\epsilon}{\oplus}\mathbb{R}_{\le0}^{[L]}\\
 & =\big\{(\hat{m}\mapsto t(\mathbf{1}\{\hat{m}\neq m\}-\epsilon)):m\in[L],t\ge0\big\}+\mathbb{R}_{\le0}^{[L]}
\end{align*}
by direct evaluation, and (e) is by the convexity of $D(p\Vert q)$ and symmetry, where $u\in\Delta_{[L]}$, $u(m)=1/L$, and $H_{\mathrm{b}}$ is the binary entropy function. The above is essentially a non-probabilistic Fano's inequality \cite{fano1961transmission}. Hence, $(\log L)/n\le(1-\epsilon)^{-1}(\mathrm{I}(W(\mathcal{X}))+1/n)$. Taking $\epsilon\to0$, $n\to\infty$ gives $\mathrm{C}(W)\le\mathrm{I}(W(\mathcal{X}))$.  The achievability proof is deferred to Section \ref{subsec:avcf}.
\end{IEEEproof}
\smallskip{}

\section{Special Cases of Game-Theoretic Channels}\label{sec:cases}

In this section, we will study various conventional notions of channels, namely adversarial channels and discrete memoryless channels with or without feedback, and see how the framework of game-theoretic channels and Theorem \ref{thm:cap} apply to these channels.

\subsection{Zero-Error Channel Coding}\label{subsec:adv}

A zero-error channel \cite{shannon1956zero} is given by a characteristic bipartite graph $\mathcal{E}\subseteq\mathcal{X}\times\mathcal{Y}$, which means that when the input is $x\in\mathcal{Y}$, the adversary can choose any output $y\in\mathcal{Y}$ with $(x,y)\in\mathcal{E}$.  A zero-error coding scheme is given by an encoding function $f_{n}:[L]\to\mathcal{X}^{n}$ and a decoding function $g_{n}:\mathcal{Y}^{n}\to[L]$ such that for every $m\in[L]$ and $y^{n}$ with $(x_{i},y_{i})\in\mathcal{E}$ for all $i\in[n]$ where $x^{n}=f_{n}(m)$, we must have $g_{n}(y^{n})=m$. We now construct a corresponding game-theoretic channel, and show that the channel coding game is a win (regardless of the maximum loss  $0<\epsilon<1$) if and only if zero-error coding is possible.

\smallskip{}

\begin{thm}
\label{thm:adv}For finite $\mathcal{X},\mathcal{Y}$, the channel coding game $(W_{\mathrm{ad}},n,L,\epsilon)$ (Definition \ref{def:game_channel} or \ref{def:game_channel_one}) for $W_{\mathrm{ad}}:\mathcal{X}\to\mathrm{DCCone}(\mathcal{Y})$,
\[
W_{\mathrm{ad}}(x):=\bigcap_{y:\,(x,y)\in\mathcal{E}}\mathbf{e}_{y}^{\circ},
\]
(where $\mathbf{e}_{y}\in\mathbb{R}^{\mathcal{Y}}$, $\mathbf{e}_{y}(y')=\mathbf{1}\{y'=y\}$, and the above intersection is $\mathbb{R}^{\mathcal{Y}}$ if there is no such $y$) and $0<\epsilon<1$ is a guaranteed win if and only if a zero-error coding scheme exists with blocklength $n$ and message cardinality $L$. Hence, $\mathrm{C}(W_{\mathrm{ad}})$ equals the zero-error capacity of $\mathcal{E}$.
\end{thm}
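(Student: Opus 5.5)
First, I will make $W_{\mathrm{ad}}(x)$ explicit. Since $\mathbf{e}_{y}^{\circ}=\{a:\,a(y)\le0\}$, we get
\[
W_{\mathrm{ad}}(x)=\{a\in\mathbb{R}^{\mathcal{Y}}:\,a(y)\le0\;\text{for all }y\text{ with }(x,y)\in\mathcal{E}\},
\]
and when $x$ has no neighbours this is $\mathbb{R}^{\mathcal{Y}}$. A portfolio in this cone may pay arbitrarily much on outputs that are \emph{not} allowed for $x$, but nothing positive on allowed outputs. I will work with the interactive form of the game (Definition \ref{def:game_channel}). Throughout, call $y^{n}$ \emph{consistent} with $x^{n}$ if $(x_{i},y_{i})\in\mathcal{E}$ for all $i$.

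\emph{Zero-error code $\Rightarrow$ guaranteed win.} Take a zero-error code $(f_{n},g_{n})$. The encoder sends $x^{n}=f_{n}(m)$ and the decoder outputs $g_{n}(y^{n})$. At time $i$, after seeing $y^{i-1}$, the encoder picks $w_{i}$ as follows.
\begin{itemize}
\item If some earlier $y_{j}$ ($j<i$) was not allowed for $x_{j}$, set $w_{i}=0$.
\item Otherwise set $w_{i}(y)=1\cdot\mathbf{1}\{(x_{i},y)\notin\mathcal{E}\}$. This lies in $W_{\mathrm{ad}}(x_{i})$, since it is $0$ on allowed outputs.
\end{itemize}
The total payoff is then checked in two cases. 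If $y^{n}$ is consistent with $x^{n}$, every $w_{i}(y_{i})=0$ and decoding is correct, so the payoff is $0$. If not, let $j$ be the first time an output is disallowed; then $w_{j}(y_{j})=1$ and all other terms are $0$, so the payoff is at least $1-1=0$. Either way the payoff is at least $0>-\epsilon$, so the team wins.

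\emph{Guaranteed win $\Rightarrow$ zero-error code.} Fix a winning strategy with codewords $x^{n}(m)$, portfolios $w_{i}(m,y^{i-1})$ and decoder $\hat{m}$, and set $f_{n}(m)=x^{n}(m)$, $g_{n}=\hat{m}$. Take any $m$ and any $y^{n}$ consistent with $x^{n}(m)$. Then $w_{i}(m,y^{i-1})\in W_{\mathrm{ad}}(x_{i})$ and $(x_{i},y_{i})\in\mathcal{E}$, so $w_{i}(m,y^{i-1})(y_{i})\le0$ for every $i$. The winning condition therefore gives $-\mathbf{1}\{\hat{m}(y^{n})\neq m\}\ge-\epsilon>-1$. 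Hence $\hat{m}(y^{n})=m$, so the code is zero-error. The only point needing care is that this must hold for \emph{every} consistent $y^{n}$. This is exactly the deterministic guarantee, because the adversary may choose any $y^{n}$ after seeing the strategy.

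\emph{Capacity.} Since the equivalence holds for every $n$, $L$ and every $\epsilon\in(0,1)$, the definition of $\mathrm{C}(W_{\mathrm{ad}})$ in Definition \ref{def:capacity} matches the zero-error capacity definition. The dependence on $\epsilon$ disappears, and $\limsup$-type existence of suitable $n$ is the same in both definitions because zero-error capacity is a limit by superadditivity. One degenerate case must be covered: if some $x$ has no neighbours, then $W_{\mathrm{ad}}(x)=\mathbb{R}^{\mathcal{Y}}$, and both sides are infinite. On the game side the encoder can arbitrage; on the zero-error side no output is possible, so every code is vacuously zero-error. The main obstacle here is only this bookkeeping, and the argument is otherwise direct.
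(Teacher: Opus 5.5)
Your proof is correct and matches the paper's argument. Achievability uses the indicator portfolio $w_i(y)=\mathbf{1}\{(x_i,y)\notin\mathcal{E}\}$; the converse observes that $w_i(y_i)\le 0$ along any consistent $y^n$, which forces $\hat{m}=m$. Your additions (zeroing portfolios after the first violation, which is harmless but unnecessary since the indicator portfolios are already nonnegative, and the superadditivity remark justifying the capacity identification) only make explicit bookkeeping the paper leaves implicit.
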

\smallskip{}

\begin{IEEEproof}
The assumption that ``for every $x$, there exists $y$ such that $(x,y)\in\mathcal{E}$'' is usually imposed on $\mathcal{E}$, though we do not require this assumption since the theorem is trivially true in this case. If there is an $x$ with no possible output $y$, then any coding scheme that always uses this $x$ is vacuously valid, and $W_{\mathrm{ad}}(x)=\mathbb{R}^{\mathcal{Y}}$ so the team can perform arbitrage. The capacity is infinite. Hence, we focus on the case where for every $x$, there exists $y$ such that $(x,y)\in\mathcal{E}$.

Intuitively, portfolios in $\bigcap_{y:\,(x,y)\in\mathcal{E}}\mathbf{e}_{y}^{\circ}$ must have nonpositive payoff when $(x,y)\in\mathcal{E}$, but can have arbitrarily large payoff when $(x,y)\notin\mathcal{E}$, so the adversary must ensure $(x_{i},y_{i})\in\mathcal{E}$ to stop the encoder from winning. We now give a rigorous proof. For the ``if'' direction, consider a coding scheme $f_{n},g_{n}$. We now describe a strategy for the channel coding game. The encoder produces $x^{n}=f_{n}(m)$. At time $i$, the encoder chooses the portfolio $w_{i}\in W_{\mathrm{ad}}(x_{i})$, $w_{i}(y):=\mathbf{1}\{(x_{i},y)\notin\mathcal{E}\}$. The decoder outputs $\hat{m}=g_{n}(y^{n})$. If the adversary chooses $y_{i}$ with $(x_{i},y_{i})\notin\mathcal{E}$, then $w_{i}(x_{i},y_{i})=1$ is sufficient to win the game. If the adversary never chooses $y_{i}$ with $(x_{i},y_{i})\notin\mathcal{E}$, then $g_{n}(y^{n})=m$ so the team also wins. 

For the ``only if'' direction, consider a strategy for the channel coding game which guarantees a win. Let $f_{n}(m)$ be the $x^{n}$ produced the encoder upon the message $m$, and $g_{n}(y^{n})$ be the $\hat{m}$ produced by the decoder upon $y^{n}$. Assume the contrary that there exists $y^{n}$ with $(x_{i},y_{i})\in\mathcal{E}$ for all $i\in[n]$ where $g_{n}(y^{n})\neq m$. Since the team wins if the adversary outputs $y^{n}$, we have $\sum_{i}w_{i}(y_{i})\ge1$. However, since $w_{i}\in W_{\mathrm{ad}}(x_{i})\subseteq\mathbf{e}_{y_{i}}^{\circ}$, we must have $w_{i}(y_{i})\le0$, leading to a contradiction. Hence, $f_{n},g_{n}$ is a valid coding scheme.
\end{IEEEproof}

\smallskip{}

Note that Theorem \ref{thm:cap} gives the bound
\begin{equation}
\mathrm{C}(W)\le\mathrm{I}(W_{\mathrm{ad}}(\mathcal{X}))=\sup_{P_{X}}\inf_{P_{Y|X}:\mathbb{P}((X,Y)\in\mathcal{E})=1}I(X;Y).\label{eq:adv_ub}
\end{equation}
This is the upper bound on the zero-error capacity in \cite{shannon1956zero}, which is generally loose.

\smallskip{}

\subsection{Zero-Error Channels with Feedback}\label{subsec:adv_feedback}

A zero-error channel with feedback \cite{shannon1956zero} is similar to the zero-error channel in Section \ref{subsec:adv}, except that the encoder can observe the past channel outputs $y^{i-1}$ in order to produce $x_{i}$ at time $i$. A zero-error feedback coding scheme is given by the encoding functions $f_{n,i}:[L]\times\mathcal{Y}^{i-1}\to\mathcal{X}$ for $i\in[n]$ and decoding function $g_{n}:\mathcal{Y}^{n}\to[L]$, such that for every $m\in[L]$ and $x^{n},y^{n}$ with $x_{i}=f_{n,i}(m,y^{i-1})$ and $(x_{i},y_{i})\in\mathcal{E}$ for all $i\in[n]$, we must have $g_{n}(y^{n})=m$. We now construct a game-theoretic channel, and show that the channel coding game is a win if and only if zero-error coding is possible. Note that this game-theoretic channel has a singleton input set $\{0\}$, i.e., it has no explicit input. Instead, the encoder influence $y_{i}$ through the choice of portfolio.

\smallskip{}

\begin{thm}
\label{thm:adv_feedback}For finite $\mathcal{X},\mathcal{Y}$, the channel coding game $(W_{\mathrm{adfb}},n,L,\epsilon)$ (Definition \ref{def:game_channel} or \ref{def:game_channel_one}) for $W_{\mathrm{adfb}}:\{0\}\to\mathrm{DCCone}(\mathcal{Y})$,
\[
W_{\mathrm{adfb}}(0):=\bigcup_{x\in\mathcal{X}}\,\bigcap_{y:\,(x,y)\in\mathcal{E}}\mathbf{e}_{y}^{\circ},
\]
(where $\mathbf{e}_{y}\in\mathbb{R}^{\mathcal{Y}}$, $\mathbf{e}_{y}(y')=\mathbf{1}\{y'=y\}$) is a guaranteed win if and only if a zero-error feedback coding scheme exists with blocklength $n$ and message cardinality $L$. Hence, $\mathrm{C}(W_{\mathrm{adfb}})$ equals the zero-error feedback capacity of $\mathcal{E}$.
\end{thm}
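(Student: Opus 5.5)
The plan mirrors the proof of Theorem \ref{thm:adv}, with the encoder's causal choice of $x_i$ now absorbed into the choice of portfolio, as in Proposition \ref{prop:bsc}. As there, first dispose of the degenerate case: if some $x$ has no $y$ with $(x,y)\in\mathcal{E}$, then $W_{\mathrm{adfb}}(0)=\mathbb{R}^{\mathcal{Y}}$, so both the game and zero-error feedback coding are trivially winnable. Hence assume every $x$ has at least one admissible output. Write $A_x:=\bigcap_{y:(x,y)\in\mathcal{E}}\mathbf{e}_y^{\circ}=\{a:\,a(y)\le0\ \forall y\text{ with }(x,y)\in\mathcal{E}\}$, so that $W_{\mathrm{adfb}}(0)=\bigcup_x A_x$.

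For the ``if'' direction, start from a zero-error feedback scheme $f_{n,i},g_n$. At time $i$, knowing $m$ and $y^{i-1}$, the encoder computes $x_i=f_{n,i}(m,y^{i-1})$ and plays $w_i(y):=\mathbf{1}\{(x_i,y)\notin\mathcal{E}\}$, which lies in $A_{x_i}\subseteq W_{\mathrm{adfb}}(0)$. The decoder outputs $g_n(y^n)$. Two cases cover every adversary play. If $(x_i,y_i)\notin\mathcal{E}$ for some $i$, the insurance pays at least $1$ and the team wins. Otherwise every $w_i(y_i)=0$, and the zero-error property gives $g_n(y^n)=m$, so the team again wins.

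For the ``only if'' direction, take a winning strategy with portfolios $w_i(m,y^{i-1})\in W_{\mathrm{adfb}}(0)$ and decoder $\hat m(y^n)$. Define $f_{n,i}(m,y^{i-1})$ to be some $x$ with $w_i(m,y^{i-1})\in A_x$, which exists by the union structure, and set $g_n=\hat m$. Now fix $m$ and any $y^n$ consistent with this scheme, i.e. $(x_i,y_i)\in\mathcal{E}$ with $x_i=f_{n,i}(m,y^{i-1})$. Membership $w_i\in A_{x_i}$ then gives $w_i(y_i)\le0$ for every $i$. Since the adversary may play exactly this $y^n$, the win condition yields $-\mathbf{1}\{\hat m\neq m\}\ge-\epsilon>-1$, hence $\hat m=m$. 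So the scheme is zero-error.

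The capacity statement follows because the equivalence holds for every $n,L$ and every $0<\epsilon<1$, so the two rate definitions coincide. The one point needing care is that the encoder's feedback map $f_{n,i}$ must depend only on $(m,y^{i-1})$. This holds because $w_i$ is itself a function of $(m,y^{i-1})$ in the interactive game of Definition \ref{def:game_channel}, and the one-pass version of Definition \ref{def:game_channel_one} is equivalent to it. I expect no step to be a serious obstacle; the only subtlety is selecting the $x$ measurably/consistently, which is trivial here since $\mathcal{X}$ is finite.
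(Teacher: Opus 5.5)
Your proposal is correct and follows the paper's proof essentially step for step. It uses the same portfolio $w_i(y)=\mathbf{1}\{(f_{n,i}(m,y^{i-1}),y)\notin\mathcal{E}\}$ in the ``if'' direction, and the same extraction of $f_{n,i}(m,y^{i-1})$ as an $x$ whose component $A_x$ contains $w_i$ in the ``only if'' direction. Your conclusion via $-\mathbf{1}\{\hat m\neq m\}\ge-\epsilon>-1$ is slightly more careful than the paper's ``$\sum_i w_i(y_i)\ge 1$''. The degenerate-case reduction is harmless but unnecessary, since both directions go through unchanged when some $A_x=\mathbb{R}^{\mathcal{Y}}$.
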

\smallskip{}

\begin{IEEEproof}
For the ``if'' direction, consider a coding scheme $f_{n,i},g_{n}$. We now describe a strategy for the channel coding game. At time $i$, the encoder computes $\tilde{x}_{i}=f_{n,i}(m,y^{i-1})$, and chooses the portfolio $w_{i}\in W_{\mathrm{adfb}}(0)$, $w_{i}(y):=\mathbf{1}\{(\tilde{x}_{i},y)\notin\mathcal{E}\}$. The decoder outputs $\hat{m}=g_{n}(y^{n})$. If the adversary chooses $y_{i}$ with $(\tilde{x}_{i},y_{i})\notin\mathcal{E}$, then $w_{i}(\tilde{x}_{i},y_{i})=1$ is sufficient to win the game. If the adversary never chooses $y_{i}$ with $(\tilde{x}_{i},y_{i})\notin\mathcal{E}$, then $g_{n}(y^{n})=m$ so the team also wins. 

For the ``only if'' direction, consider a strategy for the channel coding game which guarantees a win. Let $f_{n,i}(m,y^{i-1})$ be an $\tilde{x}\in\mathcal{X}$ such that $w_{i}(m,y^{i-1})\in\bigcap_{y:\,(\tilde{x},y)\in\mathcal{E}}\mathbf{e}_{y}^{\circ}$, where $w_{i}(m,y^{i-1})$ is the output of the encoder at time $i$ upon the message $m$ and past sequence $y^{i-1}$ (such $x$ exists since $w_{i}(m,y^{i-1})\in W_{\mathrm{adfb}}(0)$), and $g_{n}(y^{n})$ be the $\hat{m}$ produced by the decoder upon $y^{n}$. Assume the contrary that there exists $\tilde{x}^{n},y^{n}$ with $\tilde{x}_{i}=f_{n,i}(m,y^{i-1})$ and $(\tilde{x}_{i},y_{i})\in\mathcal{E}$ for all $i\in[n]$, but $g_{n}(y^{n})\neq m$. Since the team wins if the adversary outputs $y^{n}$, we have $\sum_{i}w_{i}(y_{i})\ge1$ where $w_{i}=w_{i}(m,y^{i-1})$. However, since $w_{i}\in\bigcap_{y:\,(\tilde{x}_{i},y)\in\mathcal{E}}\mathbf{e}_{y}^{\circ}\subseteq\mathbf{e}_{y_{i}}^{\circ}$, we must have $w_{i}(y_{i})\le0$, leading to a contradiction. Hence, $f_{n,i},g_{n}$ is a valid coding scheme.
\end{IEEEproof}

\smallskip{}

Comparing $W_{\mathrm{adfb}}$ with $W_{\mathrm{ad}}$ in Theorem \ref{thm:adv}, we can see that  $W_{\mathrm{adfb}}(0)=W_{\mathrm{ad}}(\mathcal{X})$, i.e., all the range of $W_{\mathrm{ad}}$ is included in $W_{\mathrm{adfb}}(0)$. The difference between having an explicit input (in $W_{\mathrm{ad}}$) and combining all inputs as a union (in $W_{\mathrm{adfb}}$) is that an explicit input $x^{n}$ must be fixed at the beginning of the game in Definition \ref{def:game_channel}, whereas the choice of portfolio $w_{i}$ can depend on the past $y^{i-1}$. The zero-error feedback capacity is generally larger than the zero-error non-feedback capacity \cite{shannon1956zero}. Hence, even though $W_{\mathrm{adfb}}$ and $W_{\mathrm{ad}}$ have the same range DC cone, they do not generally have the same capacity. 

We may consider a ``dual channel'' where the roles of the encoder and the adversary are swapped. At time $i$, the adversary produces $x_{i}$, and then the encoder chooses $y_{i}$ with $(x_{i},y_{i})\in\mathcal{E}$. This corresponds to the game-theoretic channel $W_{\mathrm{adfb}}^{\diamondsuit}:\{0\}\to\mathrm{DCCone}(\mathcal{Y})$, where $W_{\mathrm{adfb}}^{\diamondsuit}(0)=(W_{\mathrm{adfb}}(0))^{\diamondsuit}$ is the dual DC cone (Definition \ref{def:dual}). The dual channel can be computed as
\begin{align*}
W_{\mathrm{adfb}}^{\diamondsuit}(0) & =\bigcap_{x\in\mathcal{X}}\,\bigcup_{y:\,(x,y)\in\mathcal{E}}\mathbf{e}_{y}^{\circ}\\
 & =\bigcup_{f\in\mathcal{Y}^{\mathcal{X}}:\forall x.(x,f(x))\in\mathcal{E}}\,\bigcap_{x\in\mathcal{X}}\mathbf{e}_{f(x)}^{\circ}\\
 & =\bigcup_{\mathcal{S}\subseteq\mathcal{Y}:\,\forall x\in\mathcal{X}.\exists y\in\mathcal{S}.(x,y)\in\mathcal{E}}\,\bigcap_{y\in\mathcal{S}}\mathbf{e}_{y}^{\circ}.
\end{align*}
Comparing this with Theorem \ref{thm:adv_feedback}, we can see that the dual channel is the zero-error channel with a characteristic bipartite graph where the left vertex set is $\{\mathcal{S}\subseteq\mathcal{Y}:\,\forall x\in\mathcal{X}.\exists y\in\mathcal{S}.(x,y)\in\mathcal{E}\}$ (the sets $\mathcal{S}\subseteq\mathcal{Y}$ that covers $\mathcal{X}$ in $\mathcal{E}$), the right vertex set is $\mathcal{Y}$, and there is an edge $(\mathcal{S},y)$ if $y\in\mathcal{S}$.

\smallskip{}

\subsection{Discrete Memoryless Channels without Feedback}\label{subsec:dmc}

Consider a discrete memoryless channel (DMC) $p_{Y|X}$ from $\mathcal{X}$ to $\mathcal{Y}$. A coding scheme with blocklength $n$, message cardinality $L$ and maximum error probability $\epsilon$ is given by an encoding function $f_{n}:[L]\to\mathcal{X}^{n}$ and a decoding function $g_{n}:\mathcal{Y}^{n}\to[L]$ such that when $M\in[L]$, $X^{n}=f_{n}(M)$, $Y_{i}|X_{i}\sim p_{Y|X}$, we have $\max_{m\in[L]}\mathbb{P}(M\neq g_{n}(Y^{n})|M=m)\le\epsilon$. 

Recall that the distribution $p_{Y|X}(\cdot|x)$ corresponds to the halfspace DC cone $(p_{Y|X}(\cdot|x))^{\circ}=\{a\in\mathbb{R}^{\mathcal{Y}}:\,\langle p_{Y|X}(\cdot|x),a\rangle\le0\}$. Hence, a corresponding game-theoretic channel can be constructed as $W(x)=(p_{Y|X}(\cdot|x))^{\circ}$. The proof of the following result is similar to Proposition \ref{prop:bsc}, though $x$ is considered an explicit input here since there is no feedback. For the proof of Theorem \ref{thm:dmc}, refer to Theorem \ref{thm:avcf} which subsumes Theorem \ref{thm:dmc}.

\smallskip{}

\begin{thm}
\label{thm:dmc}For finite $\mathcal{X},\mathcal{Y}$, the channel coding game $(W_{\mathrm{dmc}},n,L,\epsilon)$ (Definition \ref{def:game_channel} or \ref{def:game_channel_one}) for $W_{\mathrm{dmc}}:\mathcal{X}\to\mathrm{DCCone}(\mathcal{Y})$, $W_{\mathrm{dmc}}(x)=(p_{Y|X}(\cdot|x))^{\circ}$ is a guaranteed win if and only if a coding scheme exists for the DMC $p_{Y|X}$ with blocklength $n$, message cardinality $L$ and maximum error probability $\epsilon$. Hence, $\mathrm{C}(W_{\mathrm{dmc}})$ equals the probabilistic Shannon capacity of $p_{Y|X}$.
\end{thm}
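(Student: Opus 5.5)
The plan follows the template of Proposition \ref{prop:bsc}. The only new feature is that the input $x^{n}$ is now an explicit, noncausal choice fixed at the start, rather than being encoded in the portfolio. Both directions are carried out for a fixed message $m$ with $x^{n}=f_{n}(m)$.

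\emph{``If'' direction (game winning strategy $\Rightarrow$ coding scheme).} Take a winning strategy: an encoder map $m\mapsto x^{n}(m)$, portfolios $w_{i}(m,y^{i-1})\in(p_{Y|X}(\cdot|x_{i}(m)))^{\circ}$, and a decoder $\hat{m}(y^{n})$. Set $f_{n}=x^{n}(\cdot)$ and $g_{n}=\hat{m}(\cdot)$.

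1. Fix $m$ and let $Y_{i}\sim p_{Y|X}(\cdot|x_{i}(m))$ independently. Then $Y_{i}$ is independent of $Y^{i-1}$, and $w_{i}(m,Y^{i-1})$ is a function of $Y^{i-1}$ only.
2. Hence $\mathbb{E}[w_{i}(m,Y^{i-1})(Y_{i})\,|\,Y^{i-1}]=\langle p_{Y|X}(\cdot|x_{i}),w_{i}\rangle\le0$ by the halfspace condition.
3. Summing over $i$ and taking expectations of the deterministic winning inequality gives $\mathbb{P}(g_{n}(Y^{n})\neq m)\le\epsilon$.

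This is exactly the argument of Proposition \ref{prop:bsc}.

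\emph{``Only if'' direction (coding scheme $\Rightarrow$ winning strategy).} Given $f_{n},g_{n}$, the encoder sends $x^{n}=f_{n}(m)$ and plays the nonconvex martingale strategy. Define
\[
P_{e}(m,y^{i})=\mathbb{P}(g_{n}(Y^{n})\neq m\,|\,M=m,Y^{i}=y^{i}),
\]
and set $w_{i}(y')=P_{e}(m,(y^{i-1},y'))-P_{e}(m,y^{i-1})$. The only point needing care is conditioning on zero-probability prefixes. I would handle it by defining $P_{e}(m,y^{i})$ directly as $\sum_{y_{i+1}^{n}}\prod_{j>i}p_{Y|X}(y_{j}|x_{j})\mathbf{1}\{g_{n}(y^{n})\neq m\}$. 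This is well defined for every $y^{i}$ because the channel is memoryless and there is no feedback.

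With this definition, the tower identity $P_{e}(m,y^{i-1})=\sum_{y'}p_{Y|X}(y'|x_{i})P_{e}(m,(y^{i-1},y'))$ holds for all $y^{i-1}$. Therefore $\langle p_{Y|X}(\cdot|x_{i}),w_{i}\rangle=0$, so $w_{i}\in W_{\mathrm{dmc}}(x_{i})$. The sum then telescopes to
\[
\sum_{i}w_{i}(y_{i})=\mathbf{1}\{g_{n}(y^{n})\neq m\}-P_{e}(m,\emptyset),
\]
and the decoder outputs $g_{n}(y^{n})$. The team's payoff is therefore $-P_{e}(m,\emptyset)\ge-\epsilon$ for every $y^{n}$, so the game is won.

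\emph{Capacity statement.} The equivalence holds for every $(n,L,\epsilon)$, so the game-theoretic capacity in Definition \ref{def:capacity} coincides with the maximal-error operational capacity of the DMC. By the classical channel coding theorem, this is $\max_{p_{X}}I(X;Y)$. The equivalence with Definition \ref{def:game_channel_one} follows from the discussion preceding (\ref{eq:channel_require}).

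I expect no real obstacle. The one step requiring care is the definition of $P_{e}$ off-support, which is needed so that the halfspace condition and the telescoping identity hold for every adversarial $y^{n}$ and not merely almost surely. Making the "$\ge$ one fixed distribution" argument rigorous is the main thing to check.
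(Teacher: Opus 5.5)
Your proof is correct and is essentially the paper's argument. The paper proves this by deferring to Theorem~\ref{thm:avcf}, whose proof with $\mathcal{Z}$ and $\mathcal{V}$ singletons reduces exactly to your two ingredients: the nonconvex martingale strategy (its $P_{e}$ is defined through $\hat{m}(y^{i},Y_{i+1}^{n})$, which handles zero-probability prefixes just as your explicit product formula does) and your expectation argument for the direction from a winning strategy to a coding scheme.
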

\smallskip{}

Although $W_{\mathrm{dmc}}(x)$ is convex, the $n$-use cone $W_{\mathrm{dmc}}^{\ltimes n}(\mathcal{X}^{n})$ (\ref{eq:n_use}) is non-convex. Also, note that $\mathrm{I}(W_{\mathrm{dmc}}(\mathcal{X}))$ is the probabilistic Shannon capacity of $p_{Y|X}$ due to Proposition \ref{prop:i_prop}. Hence, Theorem \ref{thm:cap} also implies that $\mathrm{C}(W_{\mathrm{dmc}})=\mathrm{I}(W_{\mathrm{dmc}}(\mathcal{X}))$ is the probabilistic Shannon capacity.\smallskip{}

\subsection{Discrete Memoryless Channels with Feedback}\label{subsec:dmc_feedback}

We now consider the DMC $p_{Y|X}$ with feedback. The meaning of a feedback coding scheme has been explained in Section \ref{sec:martingale}, so it will not be repeated here. We have seen in Section \ref{subsec:adv_feedback} that, to add feedback to a channel, we can take the union of the DC cones $W_{\mathrm{dmc}}(x)$ to form $W_{\mathrm{dmcfb}}(0)=W_{\mathrm{dmc}}(\mathcal{X})=\bigcup_{x\in\mathcal{X}}W_{\mathrm{dmc}}(x)$. For the proof of Theorem \ref{thm:dmc_feedback}, refer to Theorem \ref{thm:avcf} which subsumes Theorem \ref{thm:dmc_feedback}.

\smallskip{}

\begin{thm}
\label{thm:dmc_feedback}For finite $\mathcal{X},\mathcal{Y}$, the channel coding game $(W_{\mathrm{dmcfb}},n,L,\epsilon)$ (Definition \ref{def:game_channel} or \ref{def:game_channel_one}) for $W_{\mathrm{dmcfb}}:\{0\}\to\mathrm{DCCone}(\mathcal{Y})$, $W_{\mathrm{dmcfb}}(0)=\bigcup_{x\in\mathcal{X}}(p_{Y|X}(\cdot|x))^{\circ}$ is a guaranteed win if and only if a feedback coding scheme exists for the DMC $p_{Y|X}$ with blocklength $n$, message cardinality $L$ and maximum error probability $\epsilon$. Hence, $\mathrm{C}(W_{\mathrm{dmcfb}})$ equals the Shannon capacity of $p_{Y|X}$.
\end{thm}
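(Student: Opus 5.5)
The plan is to follow the proof of Proposition \ref{prop:bsc} almost verbatim, with $\mathrm{BSC}(\beta)$ replaced by a general DMC $p_{Y|X}$ and the two halfspaces replaced by the $|\mathcal{X}|$ halfspaces $(p_{Y|X}(\cdot|x))^{\circ}$. The game has the singleton input set $\{0\}$, so the only encoder action is the choice of portfolio $w_{i}\in W_{\mathrm{dmcfb}}(0)$ at each time $i$. This choice may depend on $(m,y^{i-1})$, and it will play the role of the causal channel input $\tilde{x}_{i}=f_{n,i}(m,y^{i-1})$.

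\textbf{``If'' direction.} Start from a winning strategy $w_{i}(m,y^{i-1})\in\bigcup_{x}(p_{Y|X}(\cdot|x))^{\circ}$ together with a decoder $g_{n}$.
\begin{itemize}
\item Define $f_{n,i}(m,y^{i-1})$ to be any $x$ with $\langle p_{Y|X}(\cdot|x),w_{i}(m,y^{i-1})\rangle\le0$; the smallest such $x$ in a fixed order will do.
\item Run the DMC with feedback under this encoder. Conditioned on $(M=m,Y^{i-1})$, the output satisfies $Y_{i}\sim p_{Y|X}(\cdot|\tilde{X}_{i})$, so $\mathbb{E}[w_{i}(m,Y^{i-1})(Y_{i})\,|\,Y^{i-1}]\le0$.
\item Summing over $i$ and taking expectations gives $\mathbb{E}\big[\sum_{i}w_{i}(Y_{i})\big]\le0$.
\item Winning holds for every $y^{n}$, hence also in expectation. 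Therefore $\mathbb{P}(g_{n}(Y^{n})\neq m\,|\,M=m)\le\epsilon+\mathbb{E}\big[\sum_{i}w_{i}\big]\le\epsilon$.
\end{itemize}

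\textbf{``Only if'' direction.} Use the nonconvex martingale strategy. Set $P_{e}(m,y^{i})=\mathbb{P}(g_{n}(Y^{n})\neq m\,|\,M=m,Y^{i}=y^{i})$ and let the encoder play $w_{i}(y')=P_{e}(m,(y^{i-1},y'))-P_{e}(m,y^{i-1})$.
\begin{itemize}
\item By the law of total probability, $\langle p_{Y|X}(\cdot|f_{n,i}(m,y^{i-1})),w_{i}\rangle=0$, so $w_{i}\in W_{\mathrm{dmcfb}}(0)$.
\item The sum telescopes to $P_{e}(m,y^{n})-P_{e}(m,\emptyset)=\mathbf{1}\{g_{n}(y^{n})\neq m\}-P_{e}(m,\emptyset)$. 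Hence the final payoff is at least $-\epsilon$.
\end{itemize}

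\textbf{Obstacle and capacity statement.} The only real subtlety is that $P_{e}(m,y^{i})$ is undefined when $y^{i}$ has zero probability under the scheme. I will define it arbitrarily on such histories, for instance by continuing to compute it formally via the encoder. Then the identity $\langle p,w_{i}\rangle=0$ still holds at each node, because it is the defining one-step recursion of $P_{e}$, and $P_{e}(m,y^{n})=\mathbf{1}\{g_{n}(y^{n})\neq m\}$ holds for every $y^{n}$. With this convention the argument does not rely on conditioning on null events.

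The capacity claim then follows. The game is a guaranteed win exactly when a feedback code exists, so $\mathrm{C}(W_{\mathrm{dmcfb}})$ is the feedback capacity of $p_{Y|X}$. By Shannon's theorem that feedback does not increase DMC capacity, this is the Shannon capacity. As a consistency check, this matches $\mathrm{I}(W_{\mathrm{dmcfb}}(0))=\max_{p_{X}}I(X;Y)$ from Proposition \ref{prop:i_prop}.
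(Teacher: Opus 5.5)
Your proposal is correct and follows the paper's argument. The paper obtains this theorem as a special case of Theorem~\ref{thm:avcf}: the noncausal input is trivial, the DMC input is the causal input $Z$, and the adversary set is a singleton. That proof is your expectation argument in one direction and the nonconvex martingale strategy of Proposition~\ref{prop:bsc} in the other, and the capacity claim then follows because feedback does not increase DMC capacity (or from Theorem~\ref{thm:cap}).

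Your backward-recursive definition of $P_{e}$ on zero-probability histories is careful and correct; the paper leaves this point implicit. Your ``if''/``only if'' labels follow the phrasing of Proposition~\ref{prop:bsc}, so they are swapped relative to this statement, which is harmless.
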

\smallskip{}

Theorem \ref{thm:cap} also implies that $\mathrm{C}(W_{\mathrm{dmcfb}})=\mathrm{I}(W_{\mathrm{dmcfb}}(\mathcal{X}))$ is the Shannon capacity. Channels with noiseless feedback are characterized by one pricing DC cone, and do not require an explicit input $x$, as its input can be incorporated into the pricing DC cone. In this sense, channels with noiseless feedback are more natural than channels without feedback. Channels characterized by one DC cone (\emph{single-cone channels}) will be further discussed in Section \ref{sec:single_cone}. 

\smallskip{}

\subsection{Arbitrarily Varying Channels with Feedback}\label{subsec:avcf}

In this section, we study arbitrarily varying channels with feedback (AVCF), which is a general class of channels (subsuming the channels in Sections \ref{subsec:adv}, \ref{subsec:adv_feedback}, \ref{subsec:dmc}, \ref{subsec:dmc_feedback}) that can be modelled as a game-theoretic channel. Unlike usual AVCs \cite{blackwell1959capacity,ahlswede1978elimination,lapidoth2002reliable,csiszar2002capacity}, we assume the existence of noiseless feedback, so the encoder and the adversary know $Y^{i-1}$ at time $i$.\footnote{Another notion of AVC with feedback was studied in \cite{lomnitz2013universal}. Unlike \cite{lomnitz2013universal}, we split the input into a causal part and a noncausal part.} The input to the channel at time $i$ is a pair $(X_{i},Z_{i})\in\mathcal{X}\times\mathcal{Z}$, where $X_{1},\ldots,X_{n}$ (the \emph{noncausal inputs}) must be chosen by the encoder at the beginning and cannot depend on the feedback, whereas $Z_{i}$ (the \emph{causal inputs}) can depend on $Y^{i-1}$. The adversary produces $V_{i}$ (the \emph{adversarial input}) at time $i$, which can depend on $Y^{i-1}$ and $Z^{i}$. This aspect is similar to the causal adversarial channel \cite{chen2015characterization}.

\smallskip{}

\begin{defn}
A\emph{ arbitrarily varying channel with feedback} (AVCF) is characterized by a conditional distribution $p_{Y|X,Z,V}$ from $\mathcal{X}\times\mathcal{Z}\times\mathcal{V}$ to $\mathcal{Y}$, where $\mathcal{X}$ is the noncausal input set, $\mathcal{Z}$ is the causal input set, $\mathcal{V}$ is the adversary's action set, and $\mathcal{Y}$ is the output set. Operationally, for a blocklength $n$ and message cardinality $L$:
\begin{itemize}
\item Encoder observes message $M\in[L]$, produces $X^{n}:=x^{n}(M)\in\mathcal{X}^{n}$.
\item For $i=1,\ldots,n$:
\begin{itemize}
\item Encoder produces $Z_{i}:=z_{i}(m,Y^{i-1})\in\mathcal{Z}$.
\item Adversary produces  $V_{i}:=v_{i}(m,Y^{i-1},Z^{i})\in\mathcal{V}$. 
\item Generate $Y_{i}|(X_{i},Z_{i},V_{i})\sim P_{Y|X,Z,V}$.
\end{itemize}
\item Decoder produces $\hat{M}:=\hat{m}(Y^{n})\in[L]$.
\end{itemize}
A coding scheme with maximum error probability $\epsilon$ is given by the tuple of functions $(x^{n},(z_{i})_{i\in[n]},\hat{m})$, satisfying that
\[
\max_{m\in[L],\,(v_{i})_{i}}\mathbb{P}(\hat{M}\neq m\,|\,M=m)\le\epsilon,
\]
i.e., we consider the worst $m$ and the adversary's strategy $(v_{i})_{i\in[n]}$ (where $v_{i}:[L]\times\mathcal{Y}^{i-1}\times\mathcal{Z}^{i}\to\mathcal{V}$). The capacity of the AVCF is the supremum of $R\ge0$ such that for any $n_{0}\in\mathbb{Z}_{>0}$, $\epsilon>0$, there exists $n\ge n_{0}$ and a coding scheme with $L=\lfloor2^{nR}\rfloor$ and maximum error probability $\epsilon$. We now show that the AVCF has the same capacity and finite-blocklength performance as the corresponding game-theoretic channel.
\end{defn}
\smallskip{}

\begin{thm}
\label{thm:avcf}Given an AVCF $p_{Y|X,Z,V}$ with finite $\mathcal{X},\mathcal{Z},\mathcal{V},\mathcal{Y}$, construct a game-theoretic channel $W:\mathcal{X}\to\mathrm{DCCone}(\mathcal{Y})$, 
\[
W(x):=\bigcup_{z\in\mathcal{Z}}\bigcap_{v\in\mathcal{V}}(p_{Y|X,Z,V}(\cdot|x,z,v))^{\circ}.
\]
The channel coding game $(W,n,L,\epsilon)$ (Definition \ref{def:game_channel} or \ref{def:game_channel_one}) is a guaranteed win if and only if the AVCF has a coding scheme with maximal error probability $\epsilon$. This continues to hold if we require $\sum_{j=1}^{i}w_{j}(y_{j})\ge-\epsilon$ for all $i$ in Definition \ref{def:game_channel}. Hence, $\mathrm{C}(W)$ equals the capacity of the AVCF.
\end{thm}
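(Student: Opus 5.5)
The plan follows the two directions of Proposition \ref{prop:bsc}, now with an adversarial state $v$ and a noncausal input $x^n$.

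\emph{``If'' direction (winning strategy $\Rightarrow$ AVCF code).} Fix a winning strategy $x^{n}(m)$, $w_{i}(m,y^{i-1})\in W(x_{i}(m))$, $\hat m(y^{n})$. For each $(m,y^{i-1})$, since $w_{i}\in\bigcup_{z}\bigcap_{v}(p_{Y|X,Z,V}(\cdot|x_{i},z,v))^{\circ}$, I choose $z_{i}(m,y^{i-1})$ to be some $z$ with
\[
\sum_{y}p(y|x_{i},z,v)\,w_{i}(y)\le0 \quad \text{for all } v\in\mathcal{V}.
\]
The AVCF scheme is $(x^{n},(z_{i}),\hat m)$. Take any adversarial strategy $(v_{i})$ and generate $Y^{n}$ accordingly. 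The conditional expectation of $w_{i}(m,Y^{i-1})(Y_{i})$ given $(Y^{i-1},Z^{i},V_{i})$ is then nonpositive, whatever $V_{i}$ is. Because the game is won for every $y^{n}$, taking expectations of the winning inequality gives
\[
\mathbb{P}(\hat M\neq m)\le\epsilon+\sum_{i}\mathbb{E}[w_{i}(Y_{i})]\le\epsilon .
\]
If the prefix condition $\sum_{j\le i}w_{j}(y_{j})\ge-\epsilon$ is also imposed, it is simply unused here.

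\emph{``Only if'' direction (AVCF code $\Rightarrow$ winning strategy).} Here I use a nonconvex martingale strategy against a \emph{worst-case} adversary. Fix a coding scheme. For each $m$, define the value function by backward induction:
\[
P_{e}(m,y^{n})=\mathbf{1}\{\hat m(y^{n})\neq m\},\qquad
P_{e}(m,y^{i-1})=\max_{v\in\mathcal{V}}\sum_{y}p(y|x_{i}(m),z_{i}(m,y^{i-1}),v)\,P_{e}(m,(y^{i-1},y)).
\]
This is the maximal error probability achievable by an adaptive adversary from that history onward. Since $v_{i}$ may depend on $Y^{i-1}$, and $Z^{i}$ is a function of $(m,Y^{i-1})$, the adversary realizes this maximum. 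Hence $P_{e}(m,\emptyset)$ equals the worst-case error probability, which is at most $\epsilon$. This dynamic-programming identity is the one step needing care: the error criterion maximizes over adaptive strategies $(v_{i})$, and the maximum over strategies equals the recursive max because the choices at different histories decouple.

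I then set
\[
w_{i}(y)=P_{e}(m,(y^{i-1},y))-P_{e}(m,y^{i-1}).
\]
For every $v$, $\langle p(\cdot|x_{i},z_{i},v),w_{i}\rangle\le0$ by the definition of the max. So $w_{i}\in\bigcap_{v}(p(\cdot|x_{i},z_{i},v))^{\circ}\subseteq W(x_{i})$. Telescoping gives
\[
\sum_{i}w_{i}(y_{i})-\mathbf{1}\{\hat m\neq m\}=-P_{e}(m,\emptyset)\ge-\epsilon .
\]
The prefix condition also holds, since $\sum_{j\le i}w_{j}(y_{j})=P_{e}(m,y^{i})-P_{e}(m,\emptyset)\ge-\epsilon$.

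\emph{Capacity and the equivalence of the two games.} Equality of the capacities follows immediately from the finite-blocklength equivalence just shown. The one-pass game of Definition \ref{def:game_channel_one} is equivalent to Definition \ref{def:game_channel} by the semidirect-product discussion, so the statement holds for both formulations. The main obstacle is only the justification of the recursive worst-case value. Finiteness of $\mathcal{V}$ ensures the maxima are attained, which is needed both for the adversary to realize $P_{e}(m,\emptyset)$ and for the inequality $\langle p,w_{i}\rangle\le0$ to hold for every $v$.
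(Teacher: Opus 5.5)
Your proof is correct and follows the paper's argument essentially step for step. In one direction you pick $z_i(m,y^{i-1})$ with $w_i\in\bigcap_{v}(p_{Y|X,Z,V}(\cdot|x_i,z_i,v))^{\circ}$, let the adversary sample $Y_i$ as in the AVCF, and take expectations of the winning inequality. In the other you use the worst-case martingale strategy $w_i(y)=P_e(m,(y^{i-1},y))-P_e(m,y^{i-1})$ and telescope, which also yields the running-deficit condition.

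The differences are cosmetic. Your ``if''/``only if'' labels are swapped relative to the statement, whose ``if'' direction is AVCF code $\Rightarrow$ guaranteed win. You also define $P_e$ by backward induction and then argue it equals the worst-case conditional error, whereas the paper defines it as that worst-case error and asserts the recursion.
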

\smallskip{}

Theorem \ref{thm:avcf} reveals an interesting duality between the causal input $Z_{i}$ and the adversarial input $V_{i}$. Consider a channel $p_{Y|X,S}$ with state $S\in\mathcal{S}$. If the encoder controls $S$ causally (i.e., we consider the AVCF $p_{Y|X,Z,V}$ with $Z=S$, and $V$ is a constant), then the game-theoretic channel is $W_{\mathrm{E}}(x):=\bigcup_{s\in\mathcal{S}}(p_{Y|X,S}(\cdot|x,s))^{\circ}$. If the adversary controls $S$ causally (i.e., $V=S$, and $Z$ is a constant), then the channel is $W_{\mathrm{A}}(x):=\bigcap_{s\in\mathcal{S}}(p_{Y|X,S}(\cdot|x,s))^{\circ}$. The two DC cones $W_{\mathrm{E}}(x),W_{\mathrm{A}}(x)$ are duals of each other (Definition \ref{def:dual}). Taking the dual swaps the roles of the encoder and the adversary.

Theorem \ref{thm:cap} gives an upper bound $\mathrm{C}(W)\le\sup_{p_{X,Z}}\inf_{p_{V|X,Z}}I(X,Z;Y)$, which can also be seen by observing that if the adversary generates $V_{i}|(X_{i},Z_{i})\sim p_{V|X,Z}$, then the capacity is upper-bounded by the feedback capacity of the induced DMC $p_{Y|X,Z}$.

We now prove Theorem \ref{thm:avcf}.

\begin{IEEEproof}
We first prove the ``if'' direction, and assume there exists a code $(x^{n},(z_{i})_{i\in[n]},\hat{m})$ for the AVCF with maximal error probability at most $\epsilon$. We now describe a strategy for the encoder in the channel coding game. For $m\in[L]$, $y^{i}\in\mathcal{Y}^{i}$, $0\le i\le n$, let 
\[
P_{e}(m,y^{i}):=\max_{(v_{j})_{j>i}}\mathbb{P}(\hat{m}(y^{i},Y_{i+1}^{n})\neq m\,|\,M=m,Y^{i}=y^{i})
\]
be the worst-case error probability conditional on $M=m$, $X^{n}=x^{n}=x^{n}(m)$ and $Y^{i}=y^{i}$, maximized over the adversary's strategies $(v_{j})_{j>i}$ after time $i$. Since the maximum error probability is $\epsilon$, we have $P_{e}(m,\emptyset)\le\epsilon$, where $\emptyset$ denotes the sequence with length $0$, and the recursive formula
\begin{align}
 & P_{e}(m,y^{i-1})\nonumber \\
 & =\max_{v\in\mathcal{V}}\sum_{y}p_{Y|X,Z,V}(y|x_{i},z_{i}(m,y^{i-1}),v)P_{e}(m,y^{i-1},y)\nonumber \\
 & =\max_{v\in\mathcal{V}}\langle p_{i,v},P_{e}(m,y^{i-1},\cdot)\rangle,\label{eq:pf_max_pe}
\end{align}
where $p_{i,v}:=(y\mapsto p_{Y|X,Z,V}(y|x_{i},z_{i}(m,y^{i-1}),v))\in\mathbb{R}^{\mathcal{Y}}$, $P_{e}(m,y^{i-1},y)$ is $P_{e}$ applied to $m$ and $(y^{i-1},y)\in\mathcal{Y}^{i}$, and  $P_{e}(m,y^{i-1},\cdot)=(y\mapsto P_{e}(m,y^{i-1},y))\in\mathbb{R}^{\mathcal{Y}}$. At time $i$, the encoder chooses
\[
w_{i}(y')=P_{e}(m,y^{i-1},y')-P_{e}(m,y^{i-1}).
\]
To check that $w_{i}\in W(x_{i})$, letting $z_{i}=z_{i}(m,y^{i-1})$, for any $v\in\mathcal{V}$, we have 
\begin{align*}
 & \langle p_{Y|X,Z,V}(\cdot|x_{i},z_{i},v),w_{i}\rangle\\
 & =\langle p_{i,v},P_{e}(m,y^{i-1},\cdot)-P_{e}(m,y^{i-1})\rangle\,\le\,0
\end{align*}
due to (\ref{eq:pf_max_pe}). Hence, $w_{i}\in W(x_{i})$. The decoder outputs $\hat{m}=\hat{m}(y^{n})$. We have 
\begin{align*}
 & \sum_{i=1}^{n}w_{i}(y_{i})-\mathbf{1}\{\hat{m}\neq m\}\\
 & =P_{e}(m,y^{n})-P_{e}(m,\emptyset)-\mathbf{1}\{\hat{m}\neq m\}\,\ge\,-\epsilon.
\end{align*}
Therefore, the team wins. Also note that $\sum_{j=1}^{i}w_{j}(y_{j})=P_{e}(m,y^{i})-P_{e}(m,\emptyset)\ge-\epsilon$, so the team is never in a deficit more than $\epsilon$.

We then prove the ``only if'' direction, and consider a winning strategy for the channel coding game. We now construct a coding scheme. Let $x^{n}(m)$ be the $x^{n}$ produced by the encoder upon message $m$, and $z_{i}(m,y^{i-1})$ be a $z_{i}\in\mathcal{Z}$ such that the $w_{i}$ produced by the encoder upon $m,y^{i-1}$ satisfies $w_{i}\in\{p_{Y|X,Z,V}(\cdot|x_{i},z_{i},v):\,v\in\mathcal{V}\}^{\circ}$ (such a $z_{i}$ exists since $w_{i}\in W(x_{i})$). Let $\hat{m}(y^{n})$ be the $\hat{m}$ produced by the decoder upon $y^{n}$. We now show that for every $m$ and every adversary's strategy $(v_{i})_{i}$ for the AVCF, the error probability is at most $\epsilon$. Fix $m$ and $(v_{i})_{i}$. For the channel coding game, assume the adversary produces $y_{i}$ following $p_{Y|X,Z,V}(\cdot|x_{i},z_{i},v_{i}(m,y^{i-1},z^{i}))$. Then $x^{n},y^{n},z^{n}$ will have the same distribution as in the AVCF when the adversary uses the strategy $(v_{i})_{i}$. We have $\mathbb{E}[w_{i}(y_{i})]\le0$ since $w_{i}\in\{p_{Y|X,Z,V}(\cdot|x_{i},z_{i},v):\,v\in\mathcal{V}\}^{\circ}$. Since the team wins, we have
\begin{align*}
-\epsilon & \le\mathbb{E}\Big[\sum_{i=1}^{n}w_{i}(y_{i})-\mathbf{1}\{\hat{m}\neq m\}\Big]\le-\mathbb{P}(\hat{m}\neq m).
\end{align*}
Hence, the error probability is at most $\epsilon$.

\end{IEEEproof}
\smallskip{}

We now give an achievability proof of Theorem \ref{thm:cap} using Theorem \ref{thm:avcf}.
\begin{IEEEproof}
[Achievability of Theorem \ref{thm:cap}]Consider any $W:\mathcal{X}\to\mathrm{DCCone}(\mathcal{Y})$ (with finite $\mathcal{X},\mathcal{Y}$) that is pointwise closed and concave, i.e., $\mathbb{R}^{\mathcal{Y}}\backslash W(x)$ is open and convex for all $x$. If there exists $x$ such that $W(x)=\mathbb{R}^{\mathcal{Y}}$, then the encoder can use $x$ to perform arbitrage, and an arbitrarily large transmission rate is possible. Hence, we assume $W(x)\neq\mathbb{R}^{\mathcal{Y}}$ for all $x$. We first consider the case where $\mathbb{R}^{\mathcal{Y}}\backslash W(x)$ is a polyhedral cone for all $x$, i.e., we can find a conditional distribution $p_{Y|X,Z}$ from $\mathcal{X}\times\mathcal{Z}$ to $\mathcal{Y}$ (where $\mathcal{Z}$ is finite) such that $\mathbb{R}^{\mathcal{Y}}\backslash W(x)=\{a\in\mathbb{R}^{\mathcal{Y}}:\,\forall z.\,\langle p_{Y|X,Z}(\cdot|x,z),a\rangle>0\}$. Then $W(x)=\bigcup_{z}(p_{Y|X,Z}(\cdot|x,z))^{\circ}$ corresponds to an AVCF without adversary (i.e., $\mathcal{V}$ is a singleton set), that is a DMC with noncausal input $X$ and causal input $Z$. For the achievability, we can treat $Z$ as a noncausal input and consider any scheme for the DMC from $(X,Z)$ to $Y$ without feedback, which can achieve the capacity $\max_{p_{X,Z}}I(X,Z;Y)$. By Theorem \ref{thm:avcf}, we have $\mathrm{C}(W)\ge\max_{p_{X,Z}}I(X,Z;Y)$. Since $W(\mathcal{X})=\bigcup_{x,z}(p_{Y|X,Z}(\cdot|x,z))^{\circ}$, by Proposition \ref{prop:i_prop}, we have $\mathrm{I}(W(\mathcal{X}))=\max_{p_{X,Z}}I(X,Z;Y)\le\mathrm{C}(W)$.

We then consider the case where $\mathbb{R}^{\mathcal{Y}}\backslash W(x)$ is not a polyhedral cone. Using the alternative characterization in Proposition \ref{prop:i_prop}, we have $\mathrm{I}(W(\mathcal{X}))=\sup_{p_{T}}\inf_{p_{Y|T}}I(T;Y)$, where $p_{T}$ ranges over finite discrete distributions over $W(\mathcal{X})$, and $p_{Y|T}$ ranges over conditional distributions from $W(\mathcal{X})$ to $\mathcal{Y}$ satisfying that $\langle t,p_{Y|T}(\cdot|t)\rangle\le0$ for all $t\in W(\mathcal{X})$. Hence, it suffices to show that for every $p_{T}$, we can achieve the rate $\inf_{p_{Y|T}}I(T;Y)$. 

Fix any $p_{T}$, and assume its support is $S\subseteq W(\mathcal{X})$ which is finite. For $a\in S$, let $x(a)\in\mathcal{X}$ such that $a\in W(x(a))$, and let $r(a)\in\mathbb{R}^{\mathcal{Y}}$ be a normal of a separating hyperplane between $\{\gamma a:\,\gamma\ge0\}$ (closed and convex) and $\mathbb{R}^{\mathcal{Y}}\backslash W(x(a))$ (open and convex), i.e., $\langle r(a),a\rangle\le0$ and $\langle r(a),b\rangle>0$ for all $b\in\mathbb{R}^{\mathcal{Y}}\backslash W(x(a))$. In particular, since $W(x(a))\neq\mathbb{R}^{\mathcal{Y}}$, we have $\langle r(a),b\rangle>0$ for all $b\in\mathbb{R}_{>0}^{\mathcal{Y}}$, so $r(a)\in\mathbb{R}_{>0}^{\mathcal{Y}}\backslash\{0\}$. Hence, we can assume $r(a)\in\Delta_{\mathcal{Y}}$. Define $\tilde{\mathcal{X}}:=\bigcup_{a\in S}\{x(a)\}\subseteq\mathcal{X}$, and $\tilde{W}:\tilde{\mathcal{X}}\to\mathrm{DCCone}(\mathcal{Y})$, $\tilde{W}(x):=\bigcup_{a\in S:\,x(a)=x}r(a)^{\circ}$. We have $a\in\tilde{W}(x(a))$ for $a\in S$ (so $S\subseteq\tilde{W}(\tilde{\mathcal{X}})$), and $\tilde{W}(x)\subseteq W(x)$ for all $x\in\tilde{\mathcal{X}}$ since $r(a)^{\circ}$ and $\mathbb{R}^{\mathcal{Y}}\backslash W(x(a))$ are disjoint. Hence, any strategy for the channel coding game for $\tilde{W}$ is also valid for $W$. Also, $\tilde{W}$ is pointwise closed and concave, and $\mathbb{R}^{\mathcal{Y}}\backslash\tilde{W}(x)$ is a polyhedral cone for all $x\in\tilde{\mathcal{X}}$. Hence, we have $\mathrm{C}(W)\ge\mathrm{C}(\tilde{W})\ge\mathrm{I}(\tilde{W}(\tilde{\mathcal{X}}))$. Since the support of $p_{T}$ is $S\subseteq\tilde{W}(\tilde{\mathcal{X}})$, by the alternative characterization in Proposition \ref{prop:i_prop}, $\mathrm{I}(\tilde{W}(\tilde{\mathcal{X}}))\ge\inf_{p_{Y|T}}I(T;Y)$, which completes the proof.

\end{IEEEproof}
\smallskip{}

\subsection{Hamming Adversarial Channel}

Another widely studied adversarial channel is the Hamming adversarial channel \cite{hamming1950error}, where the encoder sends $x^{n}\in\mathcal{X}^{n}$, and the adversary can choose any $y^{n}\in\mathcal{X}^{n}$ with Hamming distance $|\{i:x_{i}\neq y_{i}\}|\le\zeta n$. The goal is to send $m\in[L]$ with zero error. Assume that the encoder can purchase an insurance at $\$\gamma\ge0$ for each $i=1,\ldots,n$, which pays $\$\gamma/p$ if $y_{i}\neq x_{i}$, where $p=\zeta\epsilon$. This is possible for a game-theoretic channel $W_{\mathrm{H}}:\mathcal{X}\to\mathrm{DCCone}(\mathcal{X})$, 
\[
W_{\mathrm{H}}(x):=\big\{ a\in\mathbb{R}^{\mathcal{X}}:a\le\gamma((\mathbf{1}-\mathbf{e}_{x})/p-\mathbf{1})\;\text{for some}\;\gamma\ge0\big\},
\]
where $\mathbf{e}_{x},\mathbf{1}\in\mathbb{R}^{\mathcal{X}}$ are the $x$-th standard basis vector and the all-one vector, respectively. The encoder can choose $\gamma=\epsilon/n$ for each $i$ to guarantee that either $|\{i:x_{i}\neq y_{i}\}|\le\zeta n$ and the maximum loss is at most $\$\epsilon$, or $|\{i:x_{i}\neq y_{i}\}|>\zeta n$ and the gain from the insurance is at least $\$(1-\epsilon)$.  This immediately implies the following sufficient condition for coding with the game-theoretic channel $W_{\mathrm{H}}$.

\smallskip{}

\begin{prop}
\label{prop:Hamming}For finite $\mathcal{X},\mathcal{Y}$, the channel coding game $(W_{\mathrm{H}},n,L,\epsilon)$ (Definition \ref{def:game_channel} or \ref{def:game_channel_one}) with $p=\zeta\epsilon$ is a guaranteed win if a zero-error coding scheme for the Hamming adversarial channel exists with blocklength $n$ and message cardinality $L$.
\end{prop}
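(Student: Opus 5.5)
We will construct an explicit winning strategy from the given zero-error code; no earlier result beyond Definition \ref{def:game_channel} is needed. The idea is the fixed (non-adaptive) hedging strategy sketched just before the statement: buy a constant amount of the ``$y_i\neq x_i$'' insurance at every time step. Then either the adversary makes at most $\zeta n$ substitutions, so the zero-error decoder succeeds, or it makes more, so the insurance pays enough to cover the $\$1$ penalty. We assume $\zeta>0$ so that $p=\zeta\epsilon>0$ is well defined. Note that $\mathcal{Y}=\mathcal{X}$ here.

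Fix a zero-error Hamming code $f_n:[L]\to\mathcal{X}^n$, $g_n:\mathcal{X}^n\to[L]$. By assumption, $g_n(y^n)=m$ whenever $|\{i:y_i\neq x_i\}|\le\zeta n$ with $x^n=f_n(m)$. The team's strategy is as follows. Upon $m$, the encoder outputs $x^n=f_n(m)$. At each time $i$, regardless of $y^{i-1}$, it plays
\[
w_i:=\gamma\big((\mathbf{1}-\mathbf{e}_{x_i})/p-\mathbf{1}\big)\quad\text{with }\gamma=\epsilon/n .
\]
This is in $W_{\mathrm{H}}(x_i)$ directly from the definition, taking the inequality with equality. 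Concretely, $w_i(y)=\frac{\epsilon}{n}\big(\mathbf{1}\{y\neq x_i\}/p-1\big)$. The decoder outputs $\hat m=g_n(y^n)$.

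For the verification, let $k:=|\{i:y_i\neq x_i\}|$. Summing over $i$ gives
\[
\sum_{i=1}^n w_i(y_i)=\frac{\epsilon}{n}\Big(\frac{k}{p}-n\Big)=\frac{k}{\zeta n}-\epsilon .
\]
We then split into two cases.
\begin{itemize}
\item If $k\le\zeta n$, the zero-error property gives $\hat m=m$. The total payoff is $\frac{k}{\zeta n}-\epsilon\ge-\epsilon$.
\item If $k>\zeta n$, the insurance gain exceeds $1-\epsilon$. Hence the total is at least $\frac{k}{\zeta n}-\epsilon-1>-\epsilon$, even if $\hat m\neq m$.
\end{itemize}
In both cases the team wins, for every $m$ and every $y^n$, so the game is a guaranteed win.

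There is no real obstacle. The only points needing care are checking membership $w_i\in W_{\mathrm{H}}(x_i)$ and calibrating the constant $\gamma=\epsilon/n$ against $p=\zeta\epsilon$ so that the threshold $k=\zeta n$ produces exactly a payoff of $1$ from the insurance, which the summation above confirms. If the stronger requirement in the footnote of Definition \ref{def:game_channel} is imposed, that the running payoff never drops below $-\epsilon$, it also holds. Each partial sum is $\frac{\epsilon}{n}\big(\frac{k_i}{p}-i\big)\ge-\frac{\epsilon i}{n}\ge-\epsilon$, where $k_i$ counts the mismatches among the first $i$ positions.
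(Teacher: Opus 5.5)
Your proposal is correct and follows essentially the paper's argument: a constant, non-adaptive premium $\gamma=\epsilon/n$ per coordinate with $p=\zeta\epsilon$, then a split on whether the number of mismatches exceeds $\zeta n$. Your explicit membership check $w_i\in W_{\mathrm{H}}(x_i)$ and your extra verification of the running-deficit condition from the footnote are also correct.
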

\smallskip{}

Nevertheless, this is not a necessary condition since the encoder can choose a different ``weight'' $\gamma$ for different coordinate $i$ in game-theoretic channel coding. Hence, the game-theoretic channel does not exactly generalizes the Hamming adversarial channel (unless we generalize the game-theoretic channel to allow imposing that the $\gamma$ must be constant for each $i$).

\smallskip{}

\section{Single-Cone Channels}\label{sec:single_cone}

This subsection focuses on channels where $\mathcal{X}=\{0\}$ is a singleton set, which are channels with noiseless feedback such as zero-error channels with feedback (Section \ref{subsec:adv_feedback}), DMCs with feedback (Section \ref{subsec:dmc_feedback}), and AVCFs with $\mathcal{X}=\{0\}$ (Section \ref{subsec:avcf}). We call them \emph{single-cone channels} since they are characterized by one pricing DC cone $A=W(0)\in\mathrm{DCCone}(\mathcal{Y})$. Single-cone channels have a rich structure, and support various operations (let $A\in\mathrm{DCCone}(\mathcal{Y})$, $B\in\mathrm{DCCone}(\mathcal{Z})$):
\begin{itemize}
\item   The \emph{union channel} of $A,B$ is $A\cup B$ (assume $\mathcal{Y}=\mathcal{Z}$). This means the encoder can choose whether to use channel $A$ or $B$ at each time, though the output set is still $\mathcal{Y}$ so the decoder does not know which channel is chosen.
\item The \emph{intersection channel} is $A\cap B$ (assume $\mathcal{Y}=\mathcal{Z}$). This means the encoder chooses an input for both channels $A,B$ at each time, and the adversary chooses one of the channels and use that channel. 
\item The \emph{Minkowski sum channel} is $A+B$ (assume $\mathcal{Y}=\mathcal{Z}$). This means the encoder chooses an input for both channels $A,B$, and the adversary needs to satisfy the constraints of both inputs. This is not particularly interesting as it usually leads to arbitrage.
\item The \emph{dual channel} of $A$ is $A^{\diamondsuit}$, corresponding to swapping the roles of the encoder and the adversary (see Section \ref{subsec:adv_feedback}).
\item The \emph{implication channel} is (assume $\mathcal{Y}=\mathcal{Z}$)
\[
A\to B:=(B\backslash A)+\mathbb{R}_{\le0}^{\mathcal{Y}}.
\]
Note that $A\to B$ is the smallest pricing DC cone such that $(A\to B)\cup A\supseteq B$.  This turns $\mathrm{DCCone}(\mathcal{Y})\backslash\{\emptyset\}$ into a Heyting algebra \cite{heyting1930formalen} in a similar manner as \cite{li2025coding}, with bottom element $\mathbb{R}^{\mathcal{Y}}$, top element $\mathbb{R}_{\le0}^{\mathcal{Y}}$, ordering $\supseteq$, meet $\cup$, join $\cap$ and implication $\to$. 
\item The \emph{convex combination channel} is $A\stackrel{\lambda}{\oplus}B$ where $0\le\lambda\le1$ (assume $\mathcal{Y}=\mathcal{Z}$). This means the encoder chooses an input for both channels $A,B$, and then channel $A$ is used with probability $1-\lambda$, or channel $B$ is used with probability $\lambda$.
\item In particular, $A\stackrel{\epsilon}{\oplus}\mathbb{R}_{\le0}^{\mathcal{Y}}$ is a ``robust'' version of channel $A$, where we allow a probability $\epsilon$ for the output to be arbitrary. A coding scheme for this channel will also work for every channel that is close to $A$. For example, if $A=\bigcup_{x\in\mathcal{X}}(p_{Y|X}(\cdot|x))^{\circ}$ is a DMC, and $B=\bigcup_{x\in\mathcal{X}}(q_{Y|X}(\cdot|x))^{\circ}$ is another DMC that is close to $p_{Y|X}$ in terms of $\infty$-R\'{e}nyi divergence \cite{renyi1961measures}, that is, $D_{\infty}(p_{Y|X}(\cdot|x)\Vert q_{Y|X}(\cdot|x))\le-\log(1-\epsilon)$ for every $x$, or equivalently,
\[
\min_{x,y}q_{Y|X}(y|x)/p_{Y|X}(y|x)\ge1-\epsilon,
\]
then $A\stackrel{\epsilon}{\oplus}\mathbb{R}_{\le0}^{\mathcal{Y}}\subseteq B$, so a scheme for $A\stackrel{\epsilon}{\oplus}\mathbb{R}_{\le0}^{\mathcal{Y}}$ will also work for $B$.\footnote{To show this, let $r(y|x)=\epsilon^{-1}(q(y|x)-(1-\epsilon)p(y|x))$ (we omit the subscripts ``$Y|X$''). We have $q(y|x)=(1-\epsilon)p(y|x)+\epsilon r(y|x)$, and hence $q(\cdot|x)^{\circ}=p(\cdot|x)\stackrel{\epsilon}{\oplus}r(\cdot|x)\supseteq p(\cdot|x)\stackrel{\epsilon}{\oplus}\mathbb{R}_{\le0}^{\mathcal{Y}}$. Taking union over $x$ gives $B\supseteq A\stackrel{\epsilon}{\oplus}\mathbb{R}_{\le0}^{\mathcal{Y}}$. } Note that the DC cone corresponding to the decoding constraint in (\ref{eq:channel_require}) is the robust version of the noiseless channel.
\item The \emph{causal product channel} is $A\ltimes B$. To use $A\ltimes B$ once means that the encoder first uses the channel $A$, observes the channel output via feedback, and then uses the channel $B$. The decoder observes both channel outputs.
\item The \emph{adversarial product channel} is $A\otimes B$. This means the encoder chooses an input for both channels $A,B$ simultaneously, and the adversary produces the outputs of the two channels such that the constraints of both $A$ and $B$ are satisfied, but the outputs can be coupled in an adversarial manner. 
\item The \emph{sum channel} is $A\sqcup B\in\mathrm{DCCone}(\mathcal{Y}\cup\mathcal{Z})$ (assuming $\mathcal{Y},\mathcal{Z}$ are disjoint),\footnote{If $\mathcal{Y},\mathcal{Z}$ are not disjoint, we consider their disjoint union $(\{1\}\times\mathcal{Y})\cup(\{2\}\times\mathcal{Z})$ instead of $\mathcal{Y}\cup\mathcal{Z}$.}
\begin{align*}
A\sqcup B & :=\big\{ c\in\mathbb{R}^{\mathcal{Y}\cup\mathcal{Z}}:\,(\exists a\in A.\,\forall y\in\mathcal{Y}.\,c(y)\le a(y))\\
 & \quad\quad\mathrm{or}\;\;(\exists b\in B.\,\forall z\in\mathcal{Z}.\,c(z)\le b(z))\big\}.
\end{align*}
The encoder can either choose a portfolio $a\in A$ for outcomes in $\mathcal{Y}$ and set arbitrarily large payoffs for outcomes in $\mathcal{Z}$ (which forces the adversary to pick an outcome in $\mathcal{Y}$), or choose a portfolio $b\in B$ for $\mathcal{Z}$ and set arbitrarily large payoffs for $\mathcal{Y}$ (which forces the adversary to pick an outcome in $\mathcal{Z}$). Hence, the encoder can choose whether to use $A$ or $B$, and the decoder knows the choice (unlike the union channel $A\cup B$ where the decoder does not know). This corresponds to the conventional notion of sum channel \cite{shannon1948mathematical}.
\end{itemize}
\smallskip{}

\section{Duality and Adversarial Cost Channel Coding}\label{sec:adv_cost}

In the channel coding game (Definition \ref{def:game_channel}), the encoder chooses the portfolios, and the adversary chooses the channel outputs. We can consider a ``dual channel coding game'' where their roles are swapped, so the adversary chooses the portfolios, and the encoder chooses the channel outputs. If the adversary chooses portfolio $t$, and the encoder chooses $y$, then the encoder loses $t(y)$. Hence, the portfolio can be regarded as an adversarial cost function. The game is described as follows.

\smallskip{}

\begin{defn}
\label{def:game_channel_adv} The \emph{adversarial cost channel coding game} $(T,n,L,\epsilon)$ with cost channel $T:\mathcal{X}\to\mathrm{DCCone}(\mathcal{Y})$, blocklength $n$, message cardinality $L\in\mathbb{Z}_{>0}$ and maximum loss $0<\epsilon<1$ is defined as follows:
\begin{itemize}
\item Adversary produces $m\in[L]$, observed by encoder.
\item Encoder produces $x^{n}\in\mathcal{X}^{n}$, observed by adversary.
\item For $i=1,\ldots,n$:
\begin{itemize}
\item Adversary produces $t_{i}\in T(x_{i})$, observed by encoder.
\item Encoder produces $y_{i}\in\mathcal{Y}$, observed by adversary.
\end{itemize}
\item Decoder observes $y^{n}$, produces $\hat{m}\in[L]$.
\item The team wins if $-\sum_{i=1}^{n}t_{i}(y_{i})-\mathbf{1}\{\hat{m}\neq m\}\ge-\epsilon$.
\end{itemize}
\end{defn}
\smallskip{}

We show that the adversarial cost channel coding game with a cost channel $T$ is equivalent to the channel coding game (Definition \ref{def:game_channel}) with the ``dual channel'' $W$ where $W(x)=(T(x))^{\diamondsuit}$ is the dual DC cone of $T(x)$ (Definition \ref{def:dual}). Hence, all results on the channel coding game also apply to the adversarial cost channel coding game after taking the dual.

\smallskip{}

\begin{thm}
Assume $\mathcal{Y}$ is finite, and $T(x)\notin\{\emptyset,\mathbb{R}^{\mathcal{Y}}\}$ for all $x$. The adversarial cost channel coding game $(T,n,L,\epsilon)$ is a guaranteed win if and only if the channel coding game $(W,n,L,\epsilon)$ with $W(x)=(T(x))^{\diamondsuit}$ is a guaranteed win.
\end{thm}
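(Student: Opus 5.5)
The plan is to show that, once the encoder's noncausal choice $m\mapsto x^{n}(m)$ and the decoder $g:\mathcal{Y}^{n}\to[L]$ are fixed, both games reduce to the same backward-induction recursion. Each step of that recursion is an exact restatement of Definition \ref{def:dual}. Both games have the same first move (encoder picks $x^{n}$) and last move (decoder picks $\hat{m}$), and the adversary's choice of $m$ is universally quantified in both. So it suffices to fix $m$, $x^{n}$ and $g$, and compare the interactive middle phases.

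\textbf{Channel coding game.} Set $R_{n}(y^{n}):=\mathbf{1}\{g(y^{n})\neq m\}-\epsilon$. This is the total insurance gain the encoder must collect along the path $y^{n}$. Recursively define
\[
R_{i-1}(y^{i-1}):=\inf\{c\in\mathbb{R}:\,R_{i}(y^{i-1},\cdot)-c\in W(x_{i})\}.
\]
By downward closedness, $R_{i}(y^{i-1},\cdot)-c\in W(x_{i})$ holds exactly when some $w\in W(x_{i})$ satisfies $c+w(y)\ge R_{i}(y^{i-1},y)$ for all $y$. Hence, by induction from $i=n$ down to $1$, the team can win from history $y^{i-1}$ with accumulated gain $c$ if and only if $c\ge R_{i-1}(y^{i-1})$, provided the infimum is attained. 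The game is therefore won (for this $m$) if and only if $R_{0}\le0$.

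\textbf{Adversarial cost game.} Define $S_{n}:=R_{n}$ and
\[
S_{i-1}(y^{i-1}):=\inf\{c:\,\forall t\in T(x_{i})\;\exists y.\;t(y)+S_{i}(y^{i-1},y)-c\le0\}.
\]
Here $c$ is the current slack, and the encoder answers $t$ by choosing $y$. By Definition \ref{def:dual}, the condition inside the infimum is exactly $S_{i}(y^{i-1},\cdot)-c\in(T(x_{i}))^{\diamondsuit}=W(x_{i})$. So the two recursions are literally identical, $S_{i}=R_{i}$, and the encoder wins the cost game if and only if $R_{0}\le0$. The quantifier order matches the game: the adversary reveals $t_{i}$ first, then the encoder picks $y_{i}$ depending on $t_{i}$. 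Because $\mathcal{Y}$ is finite, the existential over $y$ is a genuine choice, so no randomization or minimax argument is needed. Taking the union over encoder and decoder strategies then gives the theorem.

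\textbf{Main obstacle: attainment and finiteness of the infima.} The induction needs the sets $\{c:\,R_{i}(y^{i-1},\cdot)-c\in W(x_{i})\}$ to be nonempty, bounded below, and closed. Each such set is upward closed in $c$ by downward closedness, so only these three properties need checking.
\begin{itemize}
\item \emph{Closed.} By Proposition \ref{prop:double_dual}, $W(x_{i})=T(x_{i})^{\diamondsuit}$ is closed, so the set of admissible $c$ is closed and any finite infimum is attained.
\item \emph{Bounded below.} $T(x)\neq\emptyset$ gives $W(x)\neq\mathbb{R}^{\mathcal{Y}}$. By finiteness of $\mathcal{Y}$ and downward closedness, $W(x)$ then cannot contain $R-c$ for every $c$, so $R_{i-1}>-\infty$.
\item \emph{Nonempty.} $T(x)\neq\mathbb{R}^{\mathcal{Y}}$ gives $W(x)\neq\emptyset$, so $W(x)$ contains a nonpositive element. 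Hence some finite $c$ works and $R_{i-1}<\infty$.
\end{itemize}
The same three facts make the cost-game recursion well defined. Since $T^{\diamondsuit}$ is closed, the encoder's threshold $S_{i-1}$ is itself achievable. I expect this bookkeeping to be the only delicate part. In particular, the hypothesis $T(x)\notin\{\emptyset,\mathbb{R}^{\mathcal{Y}}\}$ enters exactly here; without it the infima can be $\pm\infty$ and the recursion breaks.
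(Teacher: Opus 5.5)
Your proof is correct, but it is organized differently from the paper's. The paper converts strategies directly in each direction:
\begin{itemize}
\item From a winning strategy for the channel coding game, the encoder answers each revealed $t_i$ with a $y_i$ satisfying $t_i(y_i)+w_i(y_i)\le 0$.
\item From a winning strategy $y_i(m,y^{i-1},t_i)$ for the adversarial cost game, it builds the portfolio $w_i(y')=-\sup\{t(y'):\,t\in T(x_i),\,y_i(m,y^{i-1},t)=y'\}$. Outputs that no $t$ leads to are patched with a large payoff, and the final bound comes from an infimum over the $t_i$ consistent with the path.
\end{itemize}

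You instead compute a common value function. After history $y^{i-1}$, the set of winning accumulated gains is $[R_{i-1}(y^{i-1}),\infty)$ in both games. This holds because the one-step condition $R_i(y^{i-1},\cdot)-c\in W(x_i)$ is, by Definition~\ref{def:dual}, literally the cost-game condition ``for every $t$ there is a $y$ with $t(y)+R_i(y^{i-1},y)-c\le 0$.''

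The trade-offs are as follows. The paper's first direction is a one-line pointwise response needing no closedness, and its second direction explicitly translates a given strategy. Your route is symmetric but costs the bookkeeping you identified: closedness of $T(x)^{\diamondsuit}$ (Proposition~\ref{prop:double_dual}) for attainment, plus the finiteness and nonemptiness checks where the hypothesis $T(x)\notin\{\emptyset,\mathbb{R}^{\mathcal{Y}}\}$ enters. In exchange you get canonical winning strategies. One is the portfolio $w_i=R_i(y^{i-1},\cdot)-R_{i-1}(y^{i-1})$, which is exactly the analogue of the paper's nonconvex martingale strategy. The other is the response $y_i$ with $t_i(y_i)+R_i(y^{i-1},y_i)-R_{i-1}(y^{i-1})\le 0$.

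Your threshold invariant also proves something the paper only asserts. The paper's second direction assumes, with a brief remark, that the encoder's response in the cost game can be taken independent of $t^{i-1}$. That remark is loose, since the accumulated cost does depend on $t^{i-1}$. The reduction is what lets the paper split the infimum over $(t_1,\ldots,t_n)$ into a sum of per-step infima. Your canonical response depends only on $(m,y^{i-1},t_i)$ and wins whenever $R_0\le 0$, which justifies that reduction.
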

\smallskip{}

\begin{IEEEproof}
Consider a winning strategy for the channel coding game (CCG). Assume that the encoder produces $w_{i}=w_{i}(m,y^{i-1})\in W(x_{i})$ upon observing $m,y^{i-1}$ at time $i$. We now design a strategy for the adversarial cost channel coding game (ACCCG), which is the same as that for CCG except that the encoder produces $y_{i}$ which satisfies $t_{i}(y_{i})+w_{i}(m,y^{i-1})(y_{i})\le0$ at time $i$ (this is possible since $w_{i}(m,y^{i-1})\in(T(x_{i}))^{\diamondsuit}$). Then $-\sum_{i}t_{i}(y_{i})-\mathbf{1}\{\hat{m}\neq m\}\ge\sum_{i}w_{i}(m,y^{i-1})(y_{i})-\mathbf{1}\{\hat{m}\neq m\}\ge-\epsilon$. Hence, this is a winning strategy for ACCCG.

For the other direction, consider a winning strategy for ACCCG. Assume that the encoder produces $y_{i}=y_{i}(m,y^{i-1},t_{i})$ at time $i$ (we can assume that $y_{i}$ does not depend on $t^{i-1}$ since $t^{i-1}$ only influences the game through $y^{i-1}$; the state of the game at time $i$ is $(m,y^{i-1})$). We now modify this strategy to give a strategy for CCG, where the encoder would produce $w_{i}$ with $w_{i}(y')=-\sup_{t\in T(x_{i}):\,y_{i}(m,y^{i-1},t)=y'}t(y')$ upon observing $m,y^{i-1}$ at time $i$. Note that $w_{i}(y')>-\infty$, or else the adversary in ACCCG can choose $t\in T(x_{i})$ with $y_{i}(m,y^{i-1},t)=y'$ and $t(y')$ being arbitrarily large, and choose $0\in T(x_{i'})$ for subsequent turns $i'>i$ to make the payoff of the team arbitrarily small, so the team cannot win. If there is no $t\in T(x_{i})$ with $y_{i}(m,y^{i-1},t)=y'$, take $w_{i}(y')$ large enough so that $\sum_{i'=1}^{i-1}w_{i'}(y_{i'})+w_{i}(y')\ge1$, so in case if the adversary in CCG chooses $y_{i}=y'$, the encoder can simply choose $0\in w_{i'}(x_{i'})$ for subsequent turns $i'>i$ to win the game. To check that $w_{i}\in W(x_{i})=(T(x_{i}))^{\diamondsuit}$, for any $t\in T(x_{i})$, letting $y'=y_{i}(m,y^{i-1},t)$, we have $w_{i}(y')+t(y')\le0$, so $w_{i}\in W(x_{i})$. The total payoff is 
\begin{align*}
 & \sum_{i}w_{i}(y_{i})-\mathbf{1}\{\hat{m}\neq m\}\\
 & =\inf_{t_{i}\in T(x_{i}):\,y_{i}(m,y^{i-1},t)=y_{i},\,i\in[n]}\sum_{i}(-t_{i}(y_{i}))-\mathbf{1}\{\hat{m}\neq m\}\\
 & \ge-\epsilon,
\end{align*}
since $y_{i}=y_{i}(m,y^{i-1},t_{i})$ is a winning strategy for ACCCG which guarantees a payoff at least $-\epsilon$ regardless of $t_{i}$. Hence, this is a winning strategy for CCG.
\end{IEEEproof}
\smallskip{}

\section{Lossless Source Coding Game}\label{sec:lossless}

We have seen that a random channel can be simulated by a deterministic game. Using a similar construction, we can simulate a random source, and define a game corresponding to lossless source coding.

\smallskip{}

\begin{defn}
\label{def:game_lossless} The \emph{lossless source coding game }is parametrized by a tuple $(A,n,L,\epsilon)$, where $A\in\mathrm{DCCone}(\mathcal{X})$, $n,L\in\mathbb{Z}_{>0}$, and $0<\epsilon<1$. It is defined as follows:
\begin{itemize}
\item For $i=1,\ldots,n$:
\begin{itemize}
\item Encoder produces $w_{i}\in A$, observed by the adversary.
\item Adversary produces $x_{i}\in\mathcal{X}$, observed by the encoder.
\end{itemize}
\item Encoder produces $m\in[L]$.
\item Decoder observes $m$, produces $\hat{x}^{n}\in\mathcal{X}^{n}$.
\item The team (encoder and decoder) wins if $\sum_{i=1}^{n}w_{i}(x_{i})-\mathbf{1}\{\hat{x}^{n}\neq x^{n}\}\ge-\epsilon$.
\end{itemize}
The \emph{entropy} of $A$, denoted as $\mathrm{H}(A)$, is the infimum of the set of $R\ge0$ such that for every $n_{0}\in\mathbb{Z}_{>0}$, $0<\epsilon<1$, there exists $n\ge n_{0}$ such that the lossless source coding game $(A,n,\lfloor2^{nR}\rfloor,\epsilon)$ is a guaranteed win. If $A=\mathbb{R}^{\mathcal{X}}$, take $\mathrm{H}(A)=-\infty$.
\end{defn}
\smallskip{}

Similar to Definition \ref{def:game_channel_one}, we can replace the loop with 1) encoder produces $w\in A^{\ltimes n}$; and 2) adversary observes $w$ and produces $x^{n}\in\mathcal{X}^{n}$. The team wins if $w(x^{n})-\mathbf{1}\{\hat{x}^{n}\neq x^{n}\}\ge-\epsilon$. 

The lossless source coding game generalizes the conventional notion of source coding, as shown below.

\smallskip{}

\begin{thm}
\label{thm:lossless}Let $p_{X}\in\Delta_{\mathcal{X}}$ for a finite $\mathcal{X}$, and $A=p_{X}^{\circ}$. The lossless source coding game $(A,n,L,\epsilon)$ is a guaranteed win if and only if there exists a lossless source coding scheme $(f_{n},g_{n})$ for $p_{X}$ with error probability at most $\epsilon$, i.e., $f_{n}:\mathcal{X}^{n}\to[L]$ and $g_{n}:[L]\to\mathcal{X}^{n}$ with $\mathbb{P}(g_{n}(f_{n}(X^{n}))\neq X^{n})\le\epsilon$ where $X_{i}\stackrel{\mathrm{iid}}{\sim}p_{X}$. 
\end{thm}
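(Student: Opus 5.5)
The plan mirrors the proofs of Proposition \ref{prop:bsc} and Theorem \ref{thm:avcf}, with the roles of the random outcome and the message interchanged. For the ``only if'' direction, we run the game against a random adversary. Fix a winning strategy for the team. Its encoder gives portfolios $w_{i}(x^{i-1})\in p_{X}^{\circ}$ and a final index $f_{n}(x^{n})\in[L]$, and its decoder gives $g_{n}:[L]\to\mathcal{X}^{n}$. We let the adversary draw $X_{i}\stackrel{\mathrm{iid}}{\sim}p_{X}$, independently of the past. Because $w_{i}(X^{i-1})\in p_{X}^{\circ}$ and $X_{i}$ is independent of $X^{i-1}$, we have $\mathbb{E}[w_{i}(X^{i-1})(X_{i})\,|\,X^{i-1}]=\langle p_{X},w_{i}(X^{i-1})\rangle\le0$. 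The win condition holds for every $x^{n}$, so it also holds in expectation. This gives
\[
-\epsilon\le\mathbb{E}\Big[\sum_{i}w_{i}(X_{i})\Big]-\mathbb{P}(g_{n}(f_{n}(X^{n}))\neq X^{n})\le-\mathbb{P}(g_{n}(f_{n}(X^{n}))\neq X^{n}).
\]
Hence $(f_{n},g_{n})$ is a lossless source code with error probability at most $\epsilon$.

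For the ``if'' direction, we use the martingale strategy. Take a scheme $(f_{n},g_{n})$ with error probability at most $\epsilon$. For $0\le i\le n$ and $x^{i}\in\mathcal{X}^{i}$, define the conditional error probability
\[
P_{e}(x^{i}):=\mathbb{P}\big(g_{n}(f_{n}(x^{i},X_{i+1}^{n}))\neq(x^{i},X_{i+1}^{n})\big),
\]
where $X_{i+1}^{n}$ is i.i.d.\ $p_{X}$. Then $P_{e}(\emptyset)\le\epsilon$, and $P_{e}(x^{n})=\mathbf{1}\{g_{n}(f_{n}(x^{n}))\neq x^{n}\}$. At time $i$, after observing $x^{i-1}$, the encoder chooses $w_{i}(x')=P_{e}(x^{i-1},x')-P_{e}(x^{i-1})$. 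By the law of total probability, $\langle p_{X},w_{i}\rangle=0$, so $w_{i}\in p_{X}^{\circ}$.

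One support subtlety needs care. By the paper's convention, $\langle\cdot,\cdot\rangle$ sums only over $\mathrm{supp}(p_{X})$. On symbols $x'\notin\mathrm{supp}(p_{X})$, $P_{e}(x^{i-1},x')$ is still well defined, and $w_{i}(x')$ lies in $[-1,1]$. Membership in $p_{X}^{\circ}$ puts no constraint on those coordinates, so any values there are allowed. For simplicity we can set $w_{i}(x')$ large, for example $2$, so the encoder wins outright whenever the adversary plays an off-support symbol. Either choice keeps the telescoping argument valid, because it only needs $w_{i}(x')=P_{e}(x^{i-1},x')-P_{e}(x^{i-1})$ on the realized path, or a larger value.

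After time $n$, the encoder outputs $m=f_{n}(x^{n})$ and the decoder outputs $g_{n}(m)$. The payoff telescopes:
\[
\sum_{i}w_{i}(x_{i})-\mathbf{1}\{\hat{x}^{n}\neq x^{n}\}\ge P_{e}(x^{n})-P_{e}(\emptyset)-\mathbf{1}\{\hat{x}^{n}\neq x^{n}\}=-P_{e}(\emptyset)\ge-\epsilon.
\]
Hence the team always wins.

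The main obstacle is bookkeeping rather than any real difficulty. We must check that the encoder's final message may depend on the whole of $x^{n}$, which it may, since the encoder observes every $x_{i}$. We must also handle zero-probability symbols correctly, as discussed above. Everything else is a direct adaptation of the proof of Theorem \ref{thm:avcf} with $\mathcal{V}$ a singleton.
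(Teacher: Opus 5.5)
Your proposal is correct and follows the paper's proof. The ``if'' direction uses the same martingale strategy $w_i(x')=P_e(x^{i-1},x')-P_e(x^{i-1})$ with a telescoping payoff, and the ``only if'' direction uses an i.i.d.\ $p_X$ adversary and takes expectations. One small improvement over the paper: you define $P_e$ by substituting $x^i$ rather than conditioning on $X^i=x^i$, so it stays well defined on prefixes containing symbols outside $\mathrm{supp}(p_X)$, a case the paper's conditional definition leaves unaddressed.
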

\smallskip{}

\begin{IEEEproof}
We first prove the ``if'' direction, and consider a coding scheme $(f_{n},g_{n})$ with $\mathbb{P}(g_{n}(f_{n}(X^{n}))\neq X^{n})\le\epsilon$ where $X_{i}\stackrel{\mathrm{iid}}{\sim}p_{X}$. Let $P_{e}(x^{i}):=\mathbb{P}(g_{n}(f_{n}(X^{n}))\neq X^{n}\,|\,X^{i}=x^{i})$ for $i\in\{0,\ldots,n\}$, $x^{i}\in\mathcal{X}^{i}$. Then $P_{e}(\emptyset)\le\epsilon$. To construct a strategy for the game, at time $i$, the encoder produces $w_{i}(x')=P_{e}((x^{i-1},x'))-P_{e}(x^{i-1})$. Since $P_{e}(x^{i-1})=\sum_{x'}p_{X}(x')P_{e}((x^{i-1},x'))$, we have $\langle p_{X},w_{i}\rangle=0$, so $w_{i}\in A$. The total payoff is $\sum_{i=1}^{n}w_{i}(x_{i})-\mathbf{1}\{\hat{x}^{n}\neq x^{n}\}=P_{e}(x^{n})-P_{e}(\emptyset)-\mathbf{1}\{\hat{x}^{n}\neq x^{n}\}\ge-\epsilon$. Hence, this is a winning strategy.

We then prove the ``only if'' direction, and consider a winning strategy for the game. Assume the adversary produces $x_{i}$ i.i.d. following $p_{X}$. Then $\mathbb{E}[w_{i}(x_{i})]=\langle p_{X},w_{i}\rangle\le0$. Since the team wins, we have $-\epsilon\le\mathbb{E}[\sum_{i=1}^{n}w_{i}(x_{i})-\mathbf{1}\{\hat{x}^{n}\neq x^{n}\}]\le-\mathbb{P}(\hat{x}^{n}\neq x^{n})$, so $\mathbb{P}(\hat{x}^{n}\neq x^{n})\le\epsilon$. Hence, the encoder's strategy for mapping $x^{n}$ to $m$, and the decoder's strategy for mapping $m$ to $\hat{x}^{n}$, is a valid coding scheme for lossless source coding.
\end{IEEEproof}
\smallskip{}

The optimal rate of the game can be exactly characterized, as shown in the following result.

\smallskip{}

\begin{thm}
\label{thm:entropy}For finite $\mathcal{X}$, the entropy of $A$ (optimal rate of the lossless source coding game) can be given as
\[
\mathrm{H}(A)=\inf_{a\in A}\,\sup_{p\in\Delta_{\mathcal{X}}:\,\langle p,a\rangle\le0}H(p),
\]
where $H(p)$ denotes the entropy of the distribution $p$. We have $\mathrm{H}(A)=-\infty$ if $A=\mathbb{R}^{\mathcal{X}}$, and $\mathrm{H}(A)=\infty$ if $A=\emptyset$.
\end{thm}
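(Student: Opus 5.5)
Write $h(a):=\sup_{p\in\Delta_{\mathcal{X}}:\langle p,a\rangle\le0}H(p)$ and $h^{*}:=\inf_{a\in A}h(a)$. The plan is to prove $\mathrm{H}(A)\le h^{*}$ (achievability) and $\mathrm{H}(A)\ge h^{*}$ (converse) separately, after disposing of the degenerate cases. If $A=\emptyset$ no strategy exists, so $\mathrm{H}(A)=\infty$, matching $\inf\emptyset$. If $A=\mathbb{R}^{\mathcal{X}}$ the value is $-\infty$ by convention; also $h(1)=\sup\emptyset=-\infty$. Otherwise $A$ contains $0$, so $h^{*}\le\log|\mathcal{X}|$.

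\emph{Achievability.} Fix $a\in A$ with $h(a)<R$. We may assume $a\notin\mathbb{R}_{>0}^{\mathcal{X}}$, since otherwise arbitrage already wins; hence $\{p:\langle p,a\rangle\le0\}$ is a nonempty compact set. The encoder plays the scaled portfolio $w_{i}=\gamma a$ at every step, with $\gamma>0$ to be chosen. After $n$ steps its gain is $\gamma n\langle\hat{p},a\rangle$, where $\hat{p}$ is the type of $x^{n}$. The code indexes all sequences whose type satisfies $\langle\hat{p},a\rangle\le\delta$.
\begin{itemize}
\item By continuity of $H$ and compactness, for small $\delta>0$ every such type has $H(\hat{p})<R$. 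The number of such sequences is therefore at most $(n+1)^{|\mathcal{X}|}2^{n\max H}<2^{nR}$ for large $n$, so these $x^{n}$ can be decoded correctly.
\item If $x^{n}$ is decoded correctly, the loss is at most $\gamma n\max(0,-\min_{x}a(x))$.
\item If $x^{n}$ is decoded incorrectly, the gain exceeds $\gamma n\delta$.
\end{itemize}
Choosing $\gamma=1/(n\delta)$ makes the second loss vanish, but the first loss becomes $\max(0,-\min a)/\delta$, which is not small. So a constant $\gamma$ does not directly give small $\epsilon$.

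The fix is to follow the proof of Theorem \ref{thm:lossless}: use a nonconvex-martingale/conditional-probability strategy rather than a constant portfolio. Consider the probability $p^{*}$ attaining $h(a)$; when $\langle p^{*},a\rangle=0$, we have $p^{*\circ}\ni a$. The cleaner route is to show that there exists $p$ with $H(p)\le h(a)$ and $p^{\circ}\subseteq A$. By Theorem \ref{thm:lossless}, together with the classical source coding theorem, this would give rate $H(p)$. However, $A$ need not contain a halfspace. Instead, since $A$ is downward closed and a cone, and since $\{\gamma a-b\}\subseteq A$, restrict to the convex DC cone $C_{a}=\{\gamma a-b:\gamma\ge0,b\ge0\}\subseteq A$.

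The game with the convex cone $C_{a}$ is a Shafer--Vovk game, so its value can be analyzed by minimax. I would prove that the minimal loss for a fixed code set $S\subseteq\mathcal{X}^{n}$ equals the sup over adversary mixed strategies $q$ on $\mathcal{X}^{n}$, compatible with the constraint $\langle q(\cdot|x^{i-1}),a\rangle\le0$, of $q(S^{c})$. This uses the minimax theorem on the convex set $C_{a}^{\ltimes n}$, whose cone is convex because $C_{a}$ is convex. Then I pick $S$ to be the $2^{nR}$ sequences with the largest ``worst-case'' weight. Bounding $\sup_{q}q(S^{c})\to0$ is a universal-coding/type argument: any such $q$ concentrates, via a martingale law of large numbers, on types with $\langle\hat{p},a\rangle\le\delta$, which lie in the set $S$ defined above. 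I expect this minimax-plus-concentration step to be the main technical obstacle.

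\emph{Converse.} Suppose $R<h^{*}$ and the game wins with $L=2^{nR}$. For the encoder's first portfolio and every subsequent one $w_{i}\in A$, we have $h(w_{i})\ge h^{*}>R$. Hence the adversary can respond causally by sampling $x_{i}\sim p_{i}$ with $\langle p_{i},w_{i}\rangle\le0$ and $H(p_{i})\ge h^{*}-\eta$. For the existence of $p_{i}$, note that the sup defining $h$ need not be attained, but an $\eta$-close $p_{i}$ suffices. The expected insurance gain is then $\le0$, so by the argument of Theorem \ref{thm:lossless} the error probability is at most $\epsilon$. On the other hand, $H(X^{n})=\sum_{i}H(X_{i}|X^{i-1})\ge n(h^{*}-\eta)$, and Fano's inequality gives $nR\ge(1-\epsilon)n(h^{*}-\eta)-1$. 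Letting $\epsilon,\eta\to0$ shows $\mathrm{H}(A)\ge h^{*}$.
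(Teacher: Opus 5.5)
\textbf{Converse.} Your converse is the same as the paper's. The adversary answers each $w_i$ with a maximum-entropy $p_i$ subject to $\langle p_i,w_i\rangle\le0$. The supremum is in fact attained, since $\{p\in\Delta_{\mathcal{X}}:\langle p,w_i\rangle\le0\}$ is compact and nonempty when $A\neq\mathbb{R}^{\mathcal{X}}$. So the error probability is at most $\epsilon$, and $H(X^n)\ge nh^*$ by the chain rule. One slip: $nR\ge(1-\epsilon)n(h^*-\eta)-1$ is not what Fano's inequality gives. Use $\log L\ge H(X^n)-1-\epsilon n\log|\mathcal{X}|$ instead, which yields the same limit.

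\textbf{Achievability: comparison with the paper.} The paper uses exactly the strategy you tried first and discarded. It plays $w_i=(\gamma/n)a$ for all $i$ and codes $S=\{x^n:n^{-1}\sum_i a(x_i)<(1-\epsilon)/\gamma\}$. It counts $|S|$ via Sanov's theorem against the uniform distribution, and then lets $\gamma\to\infty$.

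Your objection applies to that argument. It only checks sequences outside $S$. But the constant sequence at $\arg\min_x a(x)$ lies in $S$, and when decoded correctly it earns $\gamma\min_x a(x)$, which is below $-\epsilon$ once $\gamma>\epsilon/|\min_x a(x)|$. Sending $\gamma\to\infty$ only makes this worse. The problem cannot be avoided by choosing $a\ge0$: for $A=p^{\circ}$ with full-support $p$, the only such $a$ is $0$.

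So an adaptive strategy inside $C_a$ is genuinely needed, and your route supplies one. Your LP-duality step is sound: it is a finite LP over the polyhedral cone $C_a^{\ltimes n}$, whose dual feasible set is exactly the set of $q$ with conditionals in $\{p:\langle p,a\rangle\le0\}$. The concentration step is Azuma--Hoeffding. With $m=\min a\le0<M=\max a$, every such $q$ satisfies $q(\langle\hat p,a\rangle>\delta)\le e^{-n\delta^2/(2(M-m)^2)}$. Simply take $S=\{x^n:\langle\hat p,a\rangle\le\delta\}$; the ``largest worst-case weight'' selection is unnecessary. If $M\le0$, then $h(a)=\log|\mathcal{X}|$ and there is nothing to prove.

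\textbf{An explicit strategy.} You can bypass the minimax step with a game-theoretic Hoeffding strategy. By convexity of $u\mapsto e^{\lambda u}$ on $[m,M]$ and Hoeffding's lemma, for each $\lambda\ge0$ there is $\beta\ge0$ with $e^{\lambda a(x)-\lambda^2(M-m)^2/8}\le1+\beta a(x)$ for all $x$.
\begin{itemize}
\item Start from capital $K_0=\epsilon$ and play $w_i=\beta K_{i-1}a\in C_a\subseteq A$, where $K_i=K_{i-1}+w_i(x_i)$.
\item By induction, $K_i\ge\epsilon\exp(\lambda\sum_{j\le i}a(x_j)-i\lambda^2(M-m)^2/8)>0$. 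Hence the total payoff $K_n-\epsilon$ is always at least $-\epsilon$, even at intermediate times.
\item With $\lambda=4\delta/(M-m)^2$, the payoff is at least $\epsilon e^{2n\delta^2/(M-m)^2}-\epsilon\ge1-\epsilon$ on $S^c$ once $n$ is large, which covers a decoding error.
\end{itemize}
Combined with your type count and your compactness argument that $\max\{H(p):\langle p,a\rangle\le\delta\}<R$ for small $\delta$, this completes the achievability in a few lines.
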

\smallskip{}

\begin{IEEEproof}
 We focus on the case $A\neq\mathbb{R}^{\mathcal{X}}$, $A\neq\emptyset$ (otherwise the result follows from the definition). We first prove that we can achieve any rate greater than $R^{*}:=\inf_{a\in A}\sup_{p:\,\langle p,a\rangle\le0}H(p)$. Consider any $a\in A$. The strategy of the encoder is to output $w_{i}=(\gamma/n)a$ for all $i$, where $\gamma>0$ will be chosen later. The team wins as long as $\sum_{i=1}^{n}w_{i}(x_{i})=(\gamma/n)\sum_{i}a(x_{i})\ge1-\epsilon$, regardless of whether $\hat{x}^{n}\neq x^{n}$. Hence, we only have to design a code for transmitting $x^{n}$ in $S:=\{x^{n}\in\mathcal{X}^{n}:\,n^{-1}\sum_{i}a(x_{i})<(1-\epsilon)/\gamma\}$, which is possible as long as $|S|\le L=\lfloor2^{nR}\rfloor$. Applying Sanov' theorem \cite{sanov1957probability,dembo2009large} on the uniform distribution $u\in\Delta_{\mathcal{X}}$ over $\mathcal{X}$, we have
\[
\frac{|S|}{|\mathcal{X}|^{n}}\le(n+1)^{|\mathcal{X}|}2^{-n\inf_{p:\,\langle p,a\rangle<(1-\epsilon)/\gamma}D(p\Vert u)}.
\]
Since $D(p\Vert u)=\log|\mathcal{X}|-H(p)$, we have $|S|\le(n+1)^{|\mathcal{X}|}2^{n\sup_{p:\,\langle p,a\rangle<(1-\epsilon)/\gamma}H(p)}$. Hence, we can achieve any rate $R>\sup_{p:\,\langle p,a\rangle<(1-\epsilon)/\gamma}H(p)$. Taking $\gamma\to\infty$, we can achieve any rate $R>\sup_{p:\,\langle p,a\rangle\le0}H(p)$. 

We then show the converse that any achievable rate must satisfy $R\ge R^{*}=\inf_{a\in A}\sup_{p:\,\langle p,a\rangle\le0}H(p)$. Consider a winning strategy. Assume that the adversary generates $x_{i}$ according to the $p$ that attains $\sup_{p:\,\langle p,w_{i}\rangle\le0}H(p)$. We have $\mathbb{E}[w_{i}(x_{i})]\le0$. Since the team wins, we have $-\epsilon\le\mathbb{E}[\sum_{i=1}^{n}w_{i}(x_{i})-\mathbf{1}\{\hat{x}^{n}\neq x^{n}\}]\le-\mathbb{P}(\hat{x}^{n}\neq x^{n})$, so $\mathbb{P}(\hat{x}^{n}\neq x^{n})\le\epsilon$. Also, we have 
\begin{align*}
H(x^{n}) & =\sum_{i=1}^{n}H(x_{i}|x^{i-1})\\
 & =\sum_{i=1}^{n}\sup_{p:\,\langle p,w_{i}\rangle\le0}H(p)\;\ge\;nR^{*}.
\end{align*}
By Fano's inequality \cite{fano1961transmission}, $H(x^{n})\le H(\hat{x}^{n})+H(x^{n}|\hat{x}^{n})\le(1+\epsilon)\log L+1\le(1+\epsilon)nR+1$. Taking $\epsilon\to0$, $n\to\infty$ gives $R\ge R^{*}$.
\end{IEEEproof}
\smallskip{}

The information capacity and the entropy of a DC cone can be expressed in terms of the following more general quantity.

\smallskip{}

\begin{defn}
For $A,B\in\mathrm{DCCone}(\mathcal{Y})$, their \emph{relative entropy} is
\[
\mathrm{D}(A\Vert B):=\sup_{b\in B}\,\inf_{q\in b^{\circ\Delta}}\,\sup_{a\in A}\,\inf_{p\in a^{\circ\Delta}}D(p\Vert q),
\]
where $a^{\circ\Delta}:=\{p\in\Delta_{\mathcal{Y}}:\,\langle p,a\rangle\le0\}$.
\end{defn}
\smallskip{}

We can check that for finite $\mathcal{Y}$ and $A\in\mathrm{DCCone}(\mathcal{Y})\backslash\{\emptyset,\mathbb{R}^{\mathcal{Y}}\}$, we have $\mathrm{I}(A)=\mathrm{D}(A\Vert\mathbb{R}_{\le0}^{\mathcal{Y}})$ and $\mathrm{H}(A)=\log|\mathcal{Y}|-\mathrm{D}(A\Vert(\mathrm{Unif}(\mathcal{Y}))^{\circ})$, where $\mathrm{Unif}(\mathcal{Y})$ is the uniform distribution over $\mathcal{Y}$.

\smallskip{}

\begin{rem}
We may extend Definition \ref{def:game_lossless} to lossy source coding, by replacing $\mathbf{1}\{\hat{x}^{n}\neq x^{n}\}$ with $\mathbf{1}\{\sum_{i}d(x_{i},\hat{x}_{i})>nD\}$ where $d(x,\hat{x})$ is a distortion function. We may also extend it to channel simulation \cite{bennett2002entanglement,cuff2013distributed,li2024channel}, where the requirement that the channel is simulated may be written as a degradedness constraint (Definition \ref{def:degraded}). These are left for future studies. 
\end{rem}
\smallskip{}

\section{Acknowledgement}

This work was partially supported by a grant from the Research Grants Council of the Hong Kong Special Administrative Region, China {[}Project No: CUHK 14209823 (GRF){]}.

\smallskip{}

\appendix{}

\subsection{Proof of Proposition \ref{prop:double_dual}}\label{subsec:pf_double_dual}

We first prove that $\mathrm{cl}(A)=\{c\in\mathbb{R}^{\mathcal{Y}}:\,\forall c'\in\mathbb{R}^{\mathcal{Y}}.(c'<c\,\to\,c'\in A)\}$ with respect to the box topology. Consider $c\in\mathbb{R}^{\mathcal{Y}}$ such that for all $c'\in\mathbb{R}^{\mathcal{Y}}$ with $c'<c$, we have $c'\in A$. Consider an open neighborhood $\prod_{y}S(y)\ni c$ where $S:\mathcal{Y}\to\mathcal{O}$ (where $\mathcal{O}$ is the set of open subsets of $\mathbb{R}$). Let $c'(y)\in S(y)$, $c'(y)<c(y)$ for all $y$. Then $c'<c$, so $c'\in A$. Hence, $A\cap\prod_{y}S(y)\neq\emptyset$. Therefore, $c\in\mathrm{cl}(A)$. For the other direction, consider any $c\in\mathrm{cl}(A)$. Consider any $c'<c$. Consider the open neighborhood $\prod_{y}(c'(y),c(y)+1)\ni c$. Let $b\in A\cap\prod_{y}(c'(y),c(y)+1)$. Since $b\in A$ and $c'<b$, as $A$ is downward closed, $c'\in A$.

We prove that $A^{\diamondsuit}$ is closed with respect to the box topology. Assume $b\in\mathrm{cl}(A^{\diamondsuit})$.  For any $b'\in\mathbb{R}^{\mathcal{Y}}$ with $b'<b$ (pointwise), we have $b'\in A^{\diamondsuit}$.  Assume the contrary that $b\notin A^{\diamondsuit}$. There exists $a\in A$ such that $a+b>0$, $(b-a)/2<b$, so $(b-a)/2\in A^{\diamondsuit}$. However, $a+(b-a)/2=(a+b)/2>0$,  contradicting the definition of $A^{\diamondsuit}$. Hence, $b\in A^{\diamondsuit}$.

To prove $A\subseteq A^{\diamondsuit\diamondsuit}$, fix any $a\in A$. For $\tilde{a}\in A^{\diamondsuit}$, there exists $y$ such that $a(y)+\tilde{a}(y)\le0$. Hence, $a\in A^{\diamondsuit\diamondsuit}$. Since $A^{\diamondsuit\diamondsuit}$ is closed, we also have $\mathrm{cl}(A)\subseteq A^{\diamondsuit\diamondsuit}$.

We now prove $A^{\diamondsuit\diamondsuit}\subseteq\mathrm{cl}(A)$. Fix any $c\in A^{\diamondsuit\diamondsuit}$. For all $\tilde{a}\in A^{\diamondsuit}$, $c+\tilde{a}>0$ does not hold. Consider any $c'<c$. We have $-c'\notin A^{\diamondsuit}$, so there exists $a\in A$ such that $a-c'>0$, $a>c'$. Hence, $c'\in A$. Therefore, $c\in\mathrm{cl}(A)$.

$(A\cup B)^{\diamondsuit}=A^{\diamondsuit}\cap B^{\diamondsuit}$ follows from the definition. $A^{\diamondsuit}\cup B^{\diamondsuit}\subseteq(A\cap B)^{\diamondsuit}$ follows from $A^{\diamondsuit}\subseteq(A\cap B)^{\diamondsuit}$, which follows from the definition.

Finally, we prove $(A\cap B)^{\diamondsuit}\subseteq A^{\diamondsuit}\cup B^{\diamondsuit}$. Let $c\in(A\cap B)^{\diamondsuit}$.  Assume the contrary that $c\notin A^{\diamondsuit}\cup B^{\diamondsuit}$. Since $c\notin A^{\diamondsuit}$, there exists $a\in A$ such that $c+a>0$. Since $c\notin B^{\diamondsuit}$, there exists $b\in B$ such that $c+b>0$. Hence, $c(y)>-\min\{a(y),b(y)\}$ for all $y$. Since $(y\mapsto\min\{a(y),b(y)\})\in A\cap B$ and $c\in(A\cap B)^{\diamondsuit}$, there exists $y$ such that $c(y)+\min\{a(y),b(y)\}\le0$, leading to a contradiction. Hence, $c\in A^{\diamondsuit}\cup B^{\diamondsuit}$.

\subsection{Proof of Proposition \ref{prop:dual_commute}}\label{subsec:pf_dual_commute}

We prove the following slightly stronger statements.
\begin{prop}
For $A\in\mathrm{DCCone}(\mathcal{Y})$, $F:\mathcal{Y}\to\mathrm{DCCone}(\mathcal{Z})$, we have $A^{\diamondsuit}\ltimes F^{\diamondsuit}\subseteq(A\ltimes F)^{\diamondsuit}$. Equality holds if $|\mathcal{Z}|<\infty$ and $F(y)\notin\{\emptyset,\mathbb{R}^{\mathcal{Z}}\}$ for all $y$.
\end{prop}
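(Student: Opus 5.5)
The plan is to prove the two inclusions separately. The inclusion $A^{\diamondsuit}\ltimes F^{\diamondsuit}\subseteq(A\ltimes F)^{\diamondsuit}$ follows directly from the definitions. For the reverse inclusion, I would take $c\in(A\ltimes F)^{\diamondsuit}$ and split it explicitly as an element of $A^{\diamondsuit}$ plus a fibrewise element of $F^{\diamondsuit}$, using the closedness of $F(y)^{\diamondsuit}$ from Proposition \ref{prop:double_dual}.

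\emph{Inclusion.} Take $c(y,z)=\tilde a(y)+\tilde f(y,z)$ with $\tilde a\in A^{\diamondsuit}$ and $\tilde f(y,\cdot)\in F(y)^{\diamondsuit}$ for all $y$. Let $s(y,z)=a(y)+f(y,z)$ be any element of $A\ltimes F$. Since $\tilde a\in A^{\diamondsuit}$, pick $y$ with $a(y)+\tilde a(y)\le 0$. For this $y$, we have $f(y,\cdot)\in F(y)$ and $\tilde f(y,\cdot)\in F(y)^{\diamondsuit}$, so pick $z$ with $f(y,z)+\tilde f(y,z)\le0$. Adding gives $s(y,z)+c(y,z)\le0$, so $c\in(A\ltimes F)^{\diamondsuit}$.

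\emph{Reverse inclusion (assuming $|\mathcal Z|<\infty$ and $F(y)\notin\{\emptyset,\mathbb R^{\mathcal Z}\}$).} Fix $c\in(A\ltimes F)^{\diamondsuit}$. For each $y$ define
\[
g(y):=\inf\{t\in\mathbb R:\ c(y,\cdot)-t\in F(y)^{\diamondsuit}\}.
\]
Set $\tilde a:=g$ and $\tilde f(y,\cdot):=c(y,\cdot)-g(y)$, so that $c=\tilde a+\tilde f$. Three facts must be checked.

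(i) $g(y)$ is finite. For the lower bound, $F(y)\neq\emptyset$, so pick $f\in F(y)$. Since $\mathcal Z$ is finite, $c(y,\cdot)-t+f>0$ pointwise for very negative $t$. Hence $c(y,\cdot)-t\notin F(y)^{\diamondsuit}$ for such $t$, and $g(y)>-\infty$. For the upper bound, note first that if $\mathrm{cl}(F(y))=\mathbb R^{\mathcal Z}$, then every $c'$ lies below $c'+1\in\mathrm{cl}(F(y))$, so $c'\in F(y)$ by the closure characterization; that would give $F(y)=\mathbb R^{\mathcal Z}$. So $\mathrm{cl}(F(y))\neq\mathbb R^{\mathcal Z}$. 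By Proposition \ref{prop:double_dual}, $F(y)^{\diamondsuit\diamondsuit}=\mathrm{cl}(F(y))$, so $F(y)^{\diamondsuit}\neq\emptyset$. Take $b\in F(y)^{\diamondsuit}$; then $c(y,\cdot)-t\le b$ for large $t$ (again by finiteness of $\mathcal Z$), and downward closure gives $g(y)<\infty$.

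(ii) $\tilde f(y,\cdot)\in F(y)^{\diamondsuit}$. The set of admissible $t$ is an up-set, and $F(y)^{\diamondsuit}$ is closed by Proposition \ref{prop:double_dual}. Hence the infimum is attained, and $\tilde f(y,\cdot)=c(y,\cdot)-g(y)\in F(y)^{\diamondsuit}$.

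(iii) $\tilde a\in A^{\diamondsuit}$. This is the main step. Suppose not, so there is $a\in A$ with $a(y)+g(y)>0$ for all $y$. For each $y$ choose $\delta_y>0$ with $a(y)+g(y)-\delta_y>0$. Since $g(y)-\delta_y<g(y)$, the function $c(y,\cdot)-g(y)+\delta_y$ is not in $F(y)^{\diamondsuit}$. So there is $f_y\in F(y)$ with $f_y(z)+c(y,z)-g(y)+\delta_y>0$ for all $z$. Define $s(y,z):=a(y)+f_y(z)$, which lies in $A\ltimes F$. Then
\[
s(y,z)+c(y,z)>a(y)+g(y)-\delta_y>0\quad\text{for all }(y,z),
\]
which contradicts $c\in(A\ltimes F)^{\diamondsuit}$. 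The choices of $\delta_y$ and $f_y$ are made pointwise in $y$, so no finiteness of $\mathcal Y$ is needed here.

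Finiteness of $\mathcal Z$ and the conditions $F(y)\neq\emptyset,\mathbb R^{\mathcal Z}$ enter only in step (i), to make $g(y)$ finite.
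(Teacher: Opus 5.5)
Your proof is correct and is essentially the paper's argument: for finite $\mathcal{Z}$ your shift $g(y)$ coincides with the paper's $b(y)=\sup_{h\in F(y)}\min_{z}(h(z)+c(y,z))$, and your step (iii) is the same contradiction the paper uses for $b\in A^{\diamondsuit}$. The paper checks $c(y,\cdot)-b(y)\in F(y)^{\diamondsuit}$ directly via the minimizer over $z$, whereas you use closedness of the dual, and you also justify finiteness of $g(y)$, which the paper only asserts. One correction to your closing remark: step (ii) also needs $|\mathcal{Z}|<\infty$, because for infinite $\mathcal{Z}$ the map $t\mapsto c(y,\cdot)-t$ is not continuous in the box topology, and the infimum can fail to be attained (e.g.\ $\mathcal{Z}=\mathbb{Z}_{>0}$, $c(y,\cdot)=0$, $F(y)$ generated by $h_0(z)=1/z$, where the admissible set is $(0,\infty)$).
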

\begin{IEEEproof}
Consider $t\in A^{\diamondsuit}\ltimes F^{\diamondsuit}$, and let $t(y,z)=b(y)+g(y,z)$ where $b\in A^{\diamondsuit}$, $g(y,\cdot)\in F^{\diamondsuit}(y)$. Consider $s\in A\ltimes F$ with $s(y,z)=a(y)+f(y,z)$, $a\in A$, $f(y,\cdot)\in F(y)$. Since $b\in A^{\diamondsuit}$, there exists $y$ such that $a(y)+b(y)\le0$. Since $g(y,\cdot)\in F^{\diamondsuit}(y)$, there exists $z$ such that $f(y,z)+g(y,z)\le0$. Hence, $s(y,z)+t(y,z)\le0$. We have $t\in(A\ltimes F)^{\diamondsuit}$.

For the other direction, consider $t\in(A\ltimes F)^{\diamondsuit}$. Let $b(y):=\sup_{h\in F(y)}\inf_{z}(h(z)+t(y,z))$, which is finite since $|\mathcal{Z}|<\infty$ and $F(y)\notin\{\emptyset,\mathbb{R}^{\mathcal{Z}}\}$. We first prove $b\in A^{\diamondsuit}$. Assume the contrary that $b\notin A^{\diamondsuit}$. Then there exists $a\in A$ such that $a+b>0$, $b>-a$. Let $f\in\mathbb{R}^{\mathcal{Y}\times\mathcal{Z}}$ with $f(y,\cdot)\in F(y)$ and $\inf_{z}(f(y,z)+t(y,z))>-a(y)$ for all $y$. Since $t\in(A\ltimes F)^{\diamondsuit}$, there exists $y,z$ such that $a(y)+f(y,z)+t(y,z)\le0$, contradicting with $\inf_{z}(f(y,z)+t(y,z))>-a(y)$. Hence, $b\in A^{\diamondsuit}$.

We then prove $(z\mapsto t(y,z)-b(y))\in F^{\diamondsuit}(y)$ for all $y$. Fix $y$ and let $v(z):=t(y,z)-b(y)$. Assume the contrary that $v\notin F^{\diamondsuit}(y)$. There exists $g\in F(y)$ such that $g+v>0$. For every $z$, we have $v(z)\le t(y,z)-\inf_{z'}(g(z')+t(y,z'))$, and hence, $g(z)+t(y,z)-\inf_{z'}(g(z')+t(y,z'))>0$. Since $\mathcal{Z}$ is finite, taking $z$ to be the minimizer of $g(z)+t(y,z)$ leads to a contradiction. Hence, $v\in F^{\diamondsuit}(y)$. Therefore, $t\in A^{\diamondsuit}\ltimes F^{\diamondsuit}$.
\end{IEEEproof}
\smallskip{}

\begin{prop}
For $A\in\mathrm{DCCone}(\mathcal{Y})$, $f:\mathcal{Y}\to\mathcal{Z}$, we have $f\#A^{\diamondsuit}\subseteq(f\#A)^{\diamondsuit}$. Equality holds if $|f^{-1}(\{z\})|<\infty$ for all $z\in\mathcal{Z}$.
\end{prop}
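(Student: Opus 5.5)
We prove the inclusion directly from the definitions, and then prove the reverse inclusion under the finiteness assumption by constructing a witness inside $A^{\diamondsuit}$.

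\textbf{Inclusion $f\#A^{\diamondsuit}\subseteq(f\#A)^{\diamondsuit}$.} Take $\tilde{b}\in f\#A^{\diamondsuit}$, so that $\tilde{a}:=\tilde{b}\circ f\in A^{\diamondsuit}$. Take any $b\in f\#A$, so that $a:=b\circ f\in A$. By the definition of $A^{\diamondsuit}$ there is $y$ with $a(y)+\tilde{a}(y)\le0$. Setting $z=f(y)$ gives $b(z)+\tilde{b}(z)\le0$. Hence $\tilde{b}\in(f\#A)^{\diamondsuit}$.

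\textbf{Reverse inclusion.} Take $\tilde{b}\in(f\#A)^{\diamondsuit}$; we must show $\tilde{a}:=\tilde{b}\circ f\in A^{\diamondsuit}$. Suppose, for contradiction, that some $a\in A$ satisfies $a(y)+\tilde{b}(f(y))>0$ for every $y$. Define $b(z):=\min_{y\in f^{-1}(\{z\})}a(y)$ when the fiber is nonempty. This minimum is attained and finite because each fiber is finite. For each $z\notin f(\mathcal{Y})$ pick a value $b(z)>-\tilde{b}(z)$; this is unconstrained, since such $z$ are never hit by $y\mapsto b(f(y))$.

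Next check $b\in f\#A$. For every $y$ we have $b(f(y))\le a(y)$, and $A$ is downward closed, so $b\circ f\in A$.

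Finally, $b+\tilde{b}>0$ everywhere. On $f(\mathcal{Y})$, pick the minimizer $y$ in the fiber; then $b(z)+\tilde{b}(z)=a(y)+\tilde{b}(f(y))>0$. Off $f(\mathcal{Y})$ it holds by the choice of $b(z)$. This contradicts $\tilde{b}\in(f\#A)^{\diamondsuit}$, which requires some $z$ with $b(z)+\tilde{b}(z)\le0$. So $\tilde{b}\circ f\in A^{\diamondsuit}$, i.e.\ $\tilde{b}\in f\#A^{\diamondsuit}$.

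\textbf{Main obstacle.} The delicate step is taking the fiberwise minimum. With infinite fibers the infimum of $a$ need not be attained, so strict positivity could fail or the infimum could be $-\infty$. This is exactly where the hypothesis $|f^{-1}(\{z\})|<\infty$ is used. The handling of $z$ outside the image of $f$ also needs care, but it is free because those coordinates of $b$ are unconstrained.
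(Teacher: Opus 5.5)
Your proof is correct and follows the paper's argument: the forward inclusion comes from unwinding the definitions, and the reverse inclusion is by contradiction using the fiberwise minimum $b(z)=\min_{y\in f^{-1}(\{z\})}a(y)$ together with downward closure. Your explicit choice of $b(z)>-\tilde{b}(z)$ for $z$ outside the image of $f$ also fixes a point the paper leaves implicit, since there the minimum over an empty fiber is not a real number.
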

\begin{IEEEproof}
Consider $t\in f\#A^{\diamondsuit}$. We have $(y\mapsto t(f(y)))\in A^{\diamondsuit}$. Assume the contrary that $t\notin(f\#A)^{\diamondsuit}$. There exists $b\in f\#A$ such that $t+b>0$. We have $(y\mapsto b(f(y)))\in A$. Hence, there exists $y$ such that $b(f(y))+t(f(y))\le0$, contradicting with $t+b>0$. Therefore, $t\in(f\#A)^{\diamondsuit}$.

For the other direction, consider $t\in(f\#A)^{\diamondsuit}$. Assume the contrary that $t\notin f\#A^{\diamondsuit}$, i.e., $(y\mapsto t(f(y)))\notin A^{\diamondsuit}$. There exists $a\in A$ such that $a(y)+t(f(y))>0$ for all $y$. Let $b\in\mathbb{R}^{\mathcal{Z}}$, $b(z):=\min_{y\in f^{-1}(\{z\})}a(y)$. Since $(y\mapsto b(f(y)))\le a\in A$, we have $b\in f\#A$. We also have $b(z)+t(z)>0$ for all $z$, contradicting with $t\in(f\#A)^{\diamondsuit}$.
\end{IEEEproof}
\smallskip{}

For the remainder of Proposition \ref{prop:dual_commute}, we have $F^{\diamondsuit}\#A^{\diamondsuit}=\pi_{2}\#(A^{\diamondsuit}\ltimes F^{\diamondsuit})\subseteq\pi_{2}\#(A\ltimes F)^{\diamondsuit}\subseteq(\pi_{2}\#(A\ltimes F))^{\diamondsuit}=(F\#A)^{\diamondsuit}$. Equality holds if $|\mathcal{Y}|,|\mathcal{Z}|<\infty$ and $F(y)\notin\{\emptyset,\mathbb{R}^{\mathcal{Z}}\}$ for all $y$. Let $F:[2]\to\mathrm{DCCone}(\mathcal{Y})$, $F(1)=A$, $F(2)=B$. We have $A^{\diamondsuit}\stackrel{\lambda}{\oplus}B^{\diamondsuit}:=F^{\diamondsuit}\#((1-\lambda,\lambda)^{\circ})\subseteq(F\#((1-\lambda,\lambda)^{\circ}))^{\diamondsuit}=(A\stackrel{\lambda}{\oplus}B)^{\diamondsuit}$. Equality holds if $|\mathcal{Y}|<\infty$ and $A,B\notin\{\emptyset,\mathbb{R}^{\mathcal{Y}}\}$.

\subsection{Proof of Proposition \ref{prop:i_prop}}\label{subsec:pf_i_prop}

For $a\in\mathbb{R}^{\mathcal{Y}}$, write $a^{\circ\Delta}:=\{q\in\Delta_{\mathcal{Y}}:\,\langle q,a\rangle\le0\}$. We first prove that $\mathrm{I}(A)=0$ if and only if $\mathrm{hull}(A)\neq\mathbb{R}^{\mathcal{Y}}$. If $\mathrm{I}(A)=0$, then for any $\epsilon>0$, there exists $q\in\Delta_{\mathcal{Y}}$ such that for all $a\in A$, we have $\inf_{p\in a^{\circ\Delta}}D(p\Vert q)<\epsilon$, so there exists $p\in a^{\circ\Delta}$ (or $a\in p^{\circ}$) with $D(p\Vert q)<\epsilon$. Since $\Delta_{\mathcal{Y}}$ is compact, there exists a convergent sequence $q_{1},q_{2},\ldots\in\Delta_{\mathcal{Y}}$ and $\epsilon_{i}\to0$ such that for all $i\ge1$ and $a\in A$, there exists $p\in a^{\circ\Delta}$ with $D(p\Vert q_{i})<\epsilon_{i}$, so $\Vert p-q_{i}\Vert_{1}\le\sqrt{2\epsilon_{i}}$ by Pinsker's inequality.  Let $\lim_{i}q_{i}=q$. Assume the contrary that there exists $a\in A$, $a\notin q^{\circ}$, i.e., $\langle q,a\rangle>0$. There exists $\delta>0$ such that $\langle p,a\rangle>\langle q,a\rangle/2$ for every $p\in\Delta_{\mathcal{Y}}$ with $\Vert p-q\Vert_{1}\le\delta$, leading to a contradiction by taking $i$ large enough so $\sqrt{2\epsilon_{i}}+\Vert q_{i}-q\Vert_{1}<\delta$. Hence, we have $A\subseteq q^{\circ}$, so $\mathrm{hull}(A)\subseteq q^{\circ}\neq\mathbb{R}^{\mathcal{Y}}$.

For the other direction, assume $\mathrm{hull}(A)\neq\mathbb{R}^{\mathcal{Y}}$. If $\epsilon\mathbf{1}\in\mathrm{hull}(A)$ for some $\epsilon>0$ (where $\mathbf{1}\in\mathbb{R}^{\mathcal{Y}}$ is the all one vector), then $\mathbb{R}_{\le0}^{\mathcal{Y}}+\gamma\epsilon\mathbf{1}\subseteq\mathrm{hull}(A)$ for all $\gamma\ge0$, implying $\mathrm{hull}(A)=\mathbb{R}^{\mathcal{Y}}$. Hence, $\epsilon\mathbf{1}\notin\mathrm{hull}(A)$ for every $\epsilon>0$, so $0$ is on the boundary of $\mathrm{hull}(A)$. Let $q\in\Delta_{\mathcal{Y}}$ be the normal of a supporting hyperplane of $\mathrm{hull}(A)$ at point $0$, i.e., $\mathrm{hull}(A)\subseteq q^{\circ}$. For every $a\in A$, taking $p=q$, we have $\langle a,p\rangle\le0$ and $D(p\Vert q)=0$. So $\mathrm{I}(A)=0$. 

For the alternative characterization, we assume $A\neq\mathbb{R}^{\mathcal{Y}}$ (otherwise both sides of (\ref{eq:alt_char}) are infinite). We first prove the case where $A=\mathrm{dccone}(S)$ is generated by a finite set $S\subseteq\mathbb{R}^{\mathcal{Y}}$, where $\mathrm{dccone}(S):=\bigcup_{\gamma\ge0}\gamma S+\mathbb{R}_{\le0}^{\mathcal{Y}}$ is the smallest pricing DC cone containing $S$. In this case, it suffices to consider $p_{T}$ and $p_{Y|T}$ supported over $T\in S$. We have
\begin{align}
 & \sup_{p_{T}}\inf_{p_{Y|T}}I(T;Y)\nonumber \\
 & \stackrel{(a)}{=}\inf_{p_{Y|T}}\sup_{p_{T}}I(T;Y)\nonumber \\
 & \stackrel{(b)}{=}\inf_{p_{Y|T}}\inf_{q\in\Delta_{\mathcal{Y}}}\sup_{a\in A}D(p_{Y|T}(\cdot|a)\Vert q)\nonumber \\
 & =\inf_{q\in\Delta_{\mathcal{Y}}}\sup_{a\in A}\inf_{p\in a^{\circ\Delta}}D(p\Vert q)\;=\;\mathrm{I}(A),\label{eq:pf_alt_finite}
\end{align}
where $p_{Y|T}$ ranges over conditional distributions from $S$ to $\mathcal{Y}$ satisfying that $\langle t,p_{Y|T}(\cdot|t)\rangle\le0$ for every $t\in A$, (a) is by Sion's minimax theorem \cite{sion1958general,mceliece1983communication}, and (b) is by the information radius characterization of channel capacity \cite{kemperman1974shannon}. We now consider general $A$. Since the supremum on the right-hand side of (\ref{eq:alt_char}) is over finitely-supported $p_{T}$, combining this with (\ref{eq:pf_alt_finite}), we have 
\begin{equation}
\sup_{p_{T}}\inf_{p_{Y|T}}I(T;Y)=\sup_{S\subseteq A,\,|S|<\infty}\mathrm{I}(\mathrm{dccone}(S))\le\mathrm{I}(A).\label{eq:pf_alt_sup}
\end{equation}
It is left to construct $S$ that ``covers'' $A$ so that $\mathrm{I}(\mathrm{dccone}(S))\approx\mathrm{I}(A)$. Fix any $\epsilon>0$. Let $S\subseteq A$ be a finite set such that for every $a\in A$, there exists $b\in S$ with $d(a^{\circ\Delta},b^{\circ\Delta})\le\epsilon$, where $d(a^{\circ\Delta},b^{\circ\Delta})$ is the Hausdorff metric between $a^{\circ\Delta}$ and $b^{\circ\Delta}$. This $S$ can be constructed using the standard metric covering construction, i.e., starting with $S=\emptyset$, and iteratively adding any $a\in A$ with $\min_{b\in S}d(a^{\circ\Delta},b^{\circ\Delta})>\epsilon$ into $S$. To show that this process terminates, since $\Delta_{\mathcal{Y}}$ is compact, the metric space of nonempty compact subsets of $\Delta_{\mathcal{Y}}$ with the Hausdorff metric is compact as well, and hence can be covered by finitely many metric balls of radius $\epsilon/2$. If the process does not terminate, then there must exist $b_{1},b_{2}\in S$ where $b_{1}^{\circ\Delta}$ and $b_{2}^{\circ\Delta}$ are in the same metric ball, leading to a contradiction. Hence, we can construct a finite $S$. Let $u\in\mathbb{R}^{\mathcal{Y}}$, $u(y)=1/|\mathcal{Y}|$. Fix any $0<\delta<1$. We have
\begin{align*}
\mathrm{I}(A) & =\inf_{q\in\Delta_{\mathcal{Y}}}\sup_{a\in A}\inf_{p\in a^{\circ\Delta}}D(p\Vert q)\\
 & \le\inf_{q\in\Delta_{\mathcal{Y}}}\sup_{a\in A}\inf_{p\in a^{\circ\Delta}}D(p\Vert(1-\delta)q+\delta u)\\
 & \stackrel{(a)}{\le}\inf_{q\in\Delta_{\mathcal{Y}}}\sup_{a\in S}\inf_{p\in a^{\circ\Delta}}D(p\Vert(1-\delta)q+\delta u)+c_{\delta,\epsilon}\\
 & \le\inf_{q\in\Delta_{\mathcal{Y}}}\sup_{a\in S}\inf_{p\in a^{\circ\Delta}}((1-\delta)D(p\Vert q)+\delta D(p\Vert u))+c_{\delta,\epsilon}\\
 & \le\inf_{q\in\Delta_{\mathcal{Y}}}\sup_{a\in S}\inf_{p\in a^{\circ\Delta}}D(p\Vert q)+\delta\log|\mathcal{Y}|+c_{\delta,\epsilon}\\
 & =\mathrm{I}(\mathrm{dccone}(S))+\delta\log|\mathcal{Y}|+c_{\delta,\epsilon}\\
 & \stackrel{(b)}{=}\sup_{p_{T}}\inf_{p_{Y|T}}I(T;Y)+\delta\log|\mathcal{Y}|+c_{\delta,\epsilon},
\end{align*}
where (a) is because $(p,q)\mapsto D(p\Vert(1-\delta)q+\delta u)$ is uniformly continuous by the Heine-Cantor theorem, i.e., there exists $c_{\delta,\epsilon}>0$ with $\lim_{\epsilon\to0}c_{\delta,\epsilon}=0$ such that $|D(p\Vert(1-\delta)q+\delta u)-D(p'\Vert(1-\delta)q'+\delta u)|\le c_{\delta,\epsilon}$ as long as $\Vert p-p'\Vert+\Vert q-q'\Vert\le\epsilon$, and (b) is due to (\ref{eq:pf_alt_sup}). The proof is completed by taking $\epsilon\to0$, $\delta\to0$.

To show that $\mathrm{I}(A)$ generalizes Shannon capacity, consider $A=\bigcup_{x}(p_{Y|X}(\cdot|x))^{\circ}$.  We have
\begin{align*}
\mathrm{I}(A) & =\inf_{q\in\Delta_{\mathcal{Y}}}\sup_{a\in A}\inf_{p\in a^{\circ\Delta}}D(p\Vert q)\\
 & =\inf_{q\in\Delta_{\mathcal{Y}}}\max_{x}\sup_{a\in p_{Y|X}(\cdot|x)^{\circ}}\inf_{p\in a^{\circ\Delta}}D(p\Vert q)\\
 & \stackrel{(a)}{=}\inf_{q\in\Delta_{\mathcal{Y}}}\max_{x}D(p_{Y|X}(\cdot|x)\Vert q)\\
 & =\max_{p_{X}}I(X;Y),
\end{align*}
where (a) is because $D(p\Vert q)$ is convex in $p$, so we can consider a subgradient $v\in\mathbb{R}^{\mathcal{Y}}$ such that $D(p\Vert q)\ge D(p_{Y|X}(\cdot|x)\Vert q)+\langle p-p_{Y|X}(\cdot|x),v\rangle$ for all $p\in\Delta_{\mathcal{Y}}$. Taking $a(y)=\langle p_{Y|X}(\cdot|x),v\rangle-v(y)$, we have $a\in p_{Y|X}(\cdot|x)^{\circ}$ and $D(p\Vert q)\ge D(p_{Y|X}(\cdot|x)\Vert q)-\langle p,a\rangle\ge D(p_{Y|X}(\cdot|x)\Vert q)$ for $p\in a^{\circ\Delta}$.

For the additivity property, we first prove $\mathrm{I}(A\ltimes B)\le\mathrm{I}(A)+\mathrm{I}(B)$. Fix any $q_{Y}\in\Delta_{\mathcal{Y}}$, $q_{Z}\in\Delta_{\mathcal{Z}}$. For any $s\in A\ltimes B$, let $a\in A$ and $f\in\mathbb{R}^{\mathcal{Y}\times\mathcal{Z}}$ with $f(y,\cdot)\in B$ for all $y$, such that $s(y,z)=a(y)+f(y,z)$. Note that for $p\in\Delta_{\mathcal{Y}\times\mathcal{Z}}$, the condition ``$p_{Y}\in a^{\circ\Delta}$ and $p_{Z|Y}(\cdot|y)\in f(y,\cdot)^{\circ\Delta}$ for all $y$'' (where $p_{Y}(y)=\sum_{z}p(y,z)$ and $p_{Z|Y}(z|y)=p(y,z)/p_{Y}(y)$)\footnote{If $p_{Y}(y)=0$, take an arbitrary distribution for $p_{Z|Y}(\cdot|y)$.} implies the condition ``$p\in s^{\circ\Delta}$''. Hence,
\begin{align*}
 & \inf_{p\in s^{\circ\Delta}}D(p\Vert q_{Y}\times q_{Z})\\
 & \le\inf_{p_{Y}\in a^{\circ\Delta},\,p_{Z|Y}(\cdot|y)\in f(y,\cdot)^{\circ\Delta}}D(p_{Y}p_{Z|Y}\Vert q_{Y}\times q_{Z})\\
 & =\inf_{p_{Y}\in a^{\circ\Delta},\,p_{Z|Y}(\cdot|y)\in f(y,\cdot)^{\circ\Delta}}\Big(D(p_{Y}\Vert q_{Y})+\sum_{y}p_{Y}(y)D(p_{Z|Y}(\cdot|y)\Vert q_{Z})\Big)\\
 & =\inf_{p_{Y}\in a^{\circ\Delta}}\Big(D(p_{Y}\Vert q_{Y})+\sum_{y}p_{Y}(y)\inf_{p_{Z|Y}(\cdot|y)\in f(y,\cdot)^{\circ\Delta}}D(p_{Z|Y}(\cdot|y)\Vert q_{Z})\Big)\\
 & \le\inf_{p_{Y}\in a^{\circ\Delta}}\Big(D(p_{Y}\Vert q_{Y})+\sum_{y}p_{Y}(y)\sup_{b\in B}\inf_{p_{Z|Y}(\cdot|y)\in b^{\circ\Delta}}D(p_{Z|Y}(\cdot|y)\Vert q_{Z})\Big)\\
 & =\inf_{p_{Y}\in a^{\circ\Delta}}D(p_{Y}\Vert q_{Y})+\sup_{b\in B}\inf_{p_{Z}\in b^{\circ\Delta}}D(p_{Z}\Vert q_{Z}).
\end{align*}
Therefore,
\begin{align*}
 & \sup_{s\in A\ltimes B}\inf_{p\in s^{\circ\Delta}}D(p\Vert q_{Y}\times q_{Z})\\
 & \le\sup_{a\in A}\inf_{p_{Y}\in a^{\circ\Delta}}D(p_{Y}\Vert q_{Y})+\sup_{b\in B}\inf_{p_{Z}\in b^{\circ\Delta}}D(p_{Z}\Vert q_{Z}),
\end{align*}
which gives $\mathrm{I}(A\ltimes B)\le\mathrm{I}(A)+\mathrm{I}(B)$.

We then prove $\mathrm{I}(A\otimes B)\ge\mathrm{I}(A)+\mathrm{I}(B)$ using (\ref{eq:alt_char}). Let $T_{A}\in A$ and $T_{B}\in B$ be independent random variables with finite supports. Let
\begin{align*}
f(\alpha,\beta) & :=\inf_{P_{Y,Z|T_{A},T_{B}}:\,\mathbb{E}[T_{A}(Y)]\le\alpha,\,\mathbb{E}[T_{B}(Z)]\le\beta}I(T_{A},T_{B};Y,Z)\\
 & \ge\inf_{P_{Y,Z|T_{A},T_{B}}:\,\mathbb{E}[T_{A}(Y)]\le\alpha,\,\mathbb{E}[T_{B}(Z)]\le\beta}(I(T_{A};Y)+I(T_{B};Z))\\
 & \ge\inf_{P_{Y|T_{A}}:\,\mathbb{E}[T_{A}(Y)]\le\alpha}I(T_{A};Y)+\inf_{P_{Z|T_{B}}:\,\mathbb{E}[T_{B}(Z)]\le\beta}I(T_{B};Z).
\end{align*}
Note that $f$ is convex in $(\alpha,\beta)\in\mathbb{R}^{2}$. Let $-v\in\mathbb{R}^{2}$ be a subgradient of $f$ at $(0,0)$, i.e., $f(\alpha,\beta)\ge f(0,0)-v_{1}\alpha-v_{2}\beta$ for all $(\alpha,\beta)\in\mathbb{R}^{2}$. By (\ref{eq:alt_char}), we have
\begin{align*}
 & \mathrm{I}(A\otimes B)\\
 & \ge\inf_{P_{Y,Z|T_{A},T_{B}}:\,\mathbb{E}[v_{1}T_{A}(Y)+v_{2}T_{B}(Z)]\le0}I(T_{A},T_{B};Y,Z)\\
 & =\inf_{\alpha,\beta:\,v_{1}\alpha+v_{2}\beta\le0}f(\alpha,\beta)\\
 & =f(0,0)\\
 & \ge\inf_{P_{Y|T_{A}}:\,\mathbb{E}[T_{A}(Y)]\le0}I(T_{A};Y)+\inf_{P_{Z|T_{B}}:\,\mathbb{E}[T_{B}(Z)]\le0}I(T_{B};Z).
\end{align*}
Hence, $\mathrm{I}(A\otimes B)\ge\mathrm{I}(A)+\mathrm{I}(B)$.

Before proving the monotonicity property, we first prove $\mathrm{I}(f\#A)\le\mathrm{I}(A)$ for $f:\mathcal{Y}\to\mathcal{Z}$. Fix any $q\in\Delta_{\mathcal{Y}}$. Let $q'=f\#q\in\Delta_{\mathcal{Z}}$. Fix any $b\in f\#A$, i.e., $(y\mapsto b(f(y)))\in A$. Let $a=(y\mapsto b(f(y)))$. For any $p\in a^{\circ\Delta}$, taking $p'=f\#p$, we have $\langle p',b\rangle=\sum_{y}p(y)b(f(y))=\langle p,a\rangle\le0$, and $D(p'\Vert q')\le D(p\Vert q)$ by data processing. The result follows. 

We now prove the monotonicity property. Assume $B\preceq A$, i.e., there exists  $F:\mathcal{Y}\to\mathrm{DCCone}(\mathcal{Z})$ with $F(y)+F(y)\neq\mathbb{R}^{\mathcal{Z}}$ such that $B\subseteq F\#A=\pi_{2}\#(A\ltimes F)$. We first show that $\mathrm{I}(A\ltimes F)\le\mathrm{I}(A)$. Fix any $q_{Y}\in\Delta_{\mathcal{Y}}$. Let $q_{Z|Y}$ be a conditional distribution from $\mathcal{Y}$ to $\mathcal{Z}$ such that $\langle q_{Z|Y}(\cdot|y),b\rangle\le0$ for all $y\in\mathcal{Y}$, $b\in F(y)$ (we can take $q_{Z|Y}(\cdot|y)$ to be arbitrary if $F(y)=\emptyset$, or otherwise be the normal of the supporting hyperplane of $F(y)+F(y)$ at $0$, which must have nonnegative entries since $\mathbb{R}_{\le0}^{\mathcal{Z}}\subseteq F(y)$). For any $s\in A\ltimes F$, let $a\in A$ and $f\in\mathbb{R}^{\mathcal{Y}\times\mathcal{Z}}$ with $f(y,\cdot)\in F(y)$ for all $y$, such that $s(y,z)=a(y)+f(y,z)$. Note that for $p\in\Delta_{\mathcal{Y}\times\mathcal{Z}}$, the condition ``$p_{Y}\in a^{\circ\Delta}$ and $p_{Z|Y}(\cdot|y)\in f(y,\cdot)^{\circ\Delta}$ for all $y$'' implies the condition ``$p\in s^{\circ\Delta}$''. Hence,
\begin{align*}
 & \inf_{p\in s^{\circ\Delta}}D(p\Vert q_{Y}q_{Z|Y})\\
 & \le\inf_{p_{Y}\in a^{\circ\Delta},\,p_{Z|Y}(\cdot|y)\in f(y,\cdot)^{\circ\Delta}}D(p_{Y}p_{Z|Y}\Vert q_{Y}q_{Z|Y})\\
 & \stackrel{(a)}{=}\inf_{p_{Y}\in a^{\circ\Delta}}D(p_{Y}\Vert q_{Y})\\
 & \le\sup_{a\in A}\inf_{p_{Y}\in a^{\circ\Delta}}D(p_{Y}\Vert q_{Y}),
\end{align*}
where (a) is because we can take $p_{Z|Y}=q_{Z|Y}$ since $\langle q_{Z|Y}(\cdot|y),f(y,\cdot)\rangle\le0$. Hence, $\mathrm{I}(A\ltimes F)\le\mathrm{I}(A)$. We have $\mathrm{I}(B)\le\mathrm{I}(\pi_{2}\#(A\ltimes F))\le\mathrm{I}(A\ltimes F)\le\mathrm{I}(A)$.

\smallskip{}

\bibliographystyle{IEEEtran}
\bibliography{ref}

\end{document}